\def\BibTeX{{\rm B\kern-.05em{\sc i\kern-.025em b}\kern-.08emT\kern-.1667em\lower.7ex\hbox{E}\kern-.125emX}}
\newif\iftr     % Full technical report
\newif\ifall    % Various stuff that might be useful but for now we don't want to use it
\newif\ifconf   % Submission to a conf or journal, with space contraints
\newif\ifsq     % Squeeze space?
\newif\ifnonb   % Non blind submission
\newif\iftodos
\newif\ifsqCAP
\newif\ifsqVS
\newif\ifsqEN
\newif\ifsqTIT
\tikzstyle{comment} = [draw, fill=blue!70, text=white, text width=3cm, minimum height=1cm, rounded corners, align=left, font=\scriptsize]
\tikzstyle{background_alg} = [draw, fill=blue!20, opacity=0.4, inner sep=4pt, rounded corners=2pt]
\newtheorem{theorem}{Theorem}[section]
\newtheorem{proposition}[theorem]{Proposition}
\newtheorem{lemma}[theorem]{Lemma}
\newtheorem{corollary}[theorem]{Corollary}
\newtheorem{definition}[theorem]{Definition}
\crefname{section}{§}{§§}
\Crefname{section}{§}{§§}
\newcommand{\vspaceSQ}[1]{\ifsqVS\vspace{#1}\fi}
\newcommand{\enlargeSQ}[1]{\ifsqEN\enlargethispage{\baselineskip}\fi}
\newcommand{\subparagraph}{}
\titlespacing*{\section}{0pt}{6pt}{2pt}
\titlespacing*{\subsection}{0pt}{5pt}{1pt}
\titlespacing*{\subsubsection}{0pt}{5pt}{1pt}
\definecolor{darkgrey}{RGB}{70,70,70}
\definecolor{lightgrey}{RGB}{200,200,200}
\definecolor{lyellow}{RGB}{255,255,100}
\definecolor{llyellow}{RGB}{250,250,180}
\definecolor{lgreen}{RGB}{144,238,144}
\definecolor{vlgray}{rgb}{0.77 0.77 0.77}
\definecolor{ablack}{rgb}{0.2 0.2 0.2}
\definecolor{vllgray}{rgb}{0.9 0.9 0.9}
\definecolor{bblue}{rgb}{0.7 0.7 0.99}
\bfseries\color{black!400!black},
\bfseries\color{black!400!black},
\newcommand{\maciej}[1]{\textcolor{blue}{[Maciej: #1]}}
\newcounter{highlight}
\newcounter{Ahighlight}
\newcommand\rwh[1]{%
\savestack{\tmpbox}{\stretchto{%
  \scaleto{%
        \scalerel*[\widthof{\ensuremath{#1}}]{\kern-.6pt\bigwedge\kern-.6pt}%
                  {\rule[-\textheight/2]{1ex}{\textheight}}%WIDTH-LIMITED BIG WEDGE
                              }{\textheight}% 
}{0.5ex}}%
\stackon[1pt]{#1}{\tmpbox}%
}
\DeclarePairedDelimiter\abs{\lvert}{\rvert}
\renewcommand{\epsilon}{\ensuremath\varepsilon}
\renewcommand{\phi}{\ensuremath{\varphi}}
\NewDocumentCommand{\LeftComment}{s m}{%
\Statex \IfBooleanF{#1}{\hspace*{\ALG@thistlm}}\(\triangleright\) #2}
\newcommand{\faY}[0]{\faBatteryFull}
\newcommand{\faH}[0]{\faBatteryHalf}
\newcommand{\faN}[0]{\faTimes}
\newcommand{\fly}{PolarFly }
\newcommand{\flyN}{PolarFly}
\newcommand{\ugalpf}{UGAL\textsubscript{PF}}
\newcommand{\topo}{ER  polarity graphs }
\newcommand{\ER}[1]{\ensuremath{ER_{#1}}}
\newcommand{\quadric}[1]{\ensuremath{W(#1)}}
\newcommand{\lone}[1]{\ensuremath{V_1(#1)}}
\newcommand{\ltwo}[1]{\ensuremath{V_2(#1)}}
\algnewcommand\algorithmicforeach{\textbf{for each}}
\algnewcommand{\IfThenElse}[3]{% \IfThenElse{<if>}{<then>}{<else>}
  \State \algorithmicif\ #1\ \algorithmicthen\ #2\ \algorithmicelse\ #3}
\newtheorem{property}{Property}
\newcommand{\linebreakand}{%
  \end{@IEEEauthorhalign}
  \hfill\mbox{}\par
  \mbox{}\hfill\begin{@IEEEauthorhalign}
}
\begin{document}

%\title{\flyN: A Cost-Effective and Flexible Low-Diameter Topology\vspace{-1em}}
%\if 0
\if11
\title{\flyN: A Cost-Effective and Flexible Low-Diameter Topology\\

\thanks{This work was supported by Triad National Security, LLC, operator of the Los Alamos National Laboratory under Contract No.89233218CNA000001 with the U.S. Department of Energy, and by LANL’s Ultrascale Systems Research
Center at the New Mexico Consortium (Contract No. DE-FC02-06ER25750). The United States Government retains
and the publisher, by accepting this work for publication, acknowledges that the United States Government retains a nonexclusive, paid-up, irrevocable, world-wide license to publish or reproduce this work, or allow others to do so for United States Government purposes. This paper has been assigned the LANL identification number LA-UR-22-23079, Version 2.}
}

\author{\IEEEauthorblockN{Kartik Lakhotia\IEEEauthorrefmark{1}, Maciej Besta\IEEEauthorrefmark{2}, Laura Monroe\IEEEauthorrefmark{3}, Kelly Isham\IEEEauthorrefmark{4}, Patrick Iff\IEEEauthorrefmark{2}, Torsten Hoefler\IEEEauthorrefmark{2}, and Fabrizio Petrini\IEEEauthorrefmark{1}}
\IEEEauthorblockA{\IEEEauthorrefmark{1} 
Intel Labs,
Santa Clara, CA, 95054, USA\\
\{kartik.lakhotia,  fabrizio.petrini\}@intel.com}
\IEEEauthorblockA{\IEEEauthorrefmark{2} Scalable Parallel Computing Laboratory,
ETH Z\"{u}rich,
8092 Z\"{u}rich, Switzerland \\
\{maciej.besta, patrick.iff, torsten.hoefler\}@inf.ethz.ch}
\IEEEauthorblockA{\IEEEauthorrefmark{3} High Performance Computing Division, 
Los Alamos National Laboratory, 
Los Alamos, NM, 87545, USA \\
lmonroe@lanl.gov}
\IEEEauthorblockA{\IEEEauthorrefmark{4} Colgate University,
Hamilton, NY, 13346, USA  \\
kisham@colgate.edu}
}

\else
\title{\flyN: A Cost-Effective and Flexible Low-Diameter Topology\vspace{-5mm}\\

%\author{\IEEEauthorblockN{Authors}
%\IEEEauthorblockA{
%} \\
%}
}
\fi

\maketitle

\thispagestyle{plain}
\pagestyle{plain}

\begin{abstract}

%Co-packaging technology integrates
%optical I/O transceivers with compute nodes
%on the same module to enable
%high bandwidth and low-latency access to
%the network.
%Low-diameter direct topologies such 
%as SlimFly are key to designing
%scalable networks from co-packaged modules.
%However, they have not been adopted in 
%practice due to limited feasible
%router degrees and deployment challenges. 

In this paper we present PolarFly, a diameter-$2$ network topology based on 
the Erd\H os-R\'enyi family of polarity graphs from finite geometry.
%Polarity graphs and finite field theory. 
% \kldelete{This is a diameter-$2$ topology that }
This is a highly scalable low-diameter topology that asymptotically 
reaches the Moore bound on the number of nodes for a given network degree and diameter.

%asymptotically optimal diameter-$2$ topology with respect to the Moore's bound,
\fly achieves high Moore bound efficiency
even for the moderate radixes commonly seen in current and near-future routers, reaching more than $96\%$ of the theoretical peak. It also offers more feasible router degrees than the state-of-the-art solutions,
%\fpcomment{I don't think quoting slimfly in the abstract is a good idea} \lmcomment{agree, maybe just say "many more" and "state-of-the-art solutions"}
greatly adding to the selection of scalable
diameter-$2$ networks.
% Laura: maybe the explicit numbers should come later in the paper itself?
%: $96.2\%$ with degree 32, $98.4\%$ with degree 65 and $99.2\%$ with degree 128. PolarFly 
\fly enjoys many
other topological properties highly relevant in practice, such as a modular design and expandability that allow incremental 
growth in network size without rewiring
the whole network.
Our evaluation shows that
%As shown by our extensive evaluation, 
\fly outperforms %{\color{red}\sout{all}}
competitive networks in terms of scalability, cost and performance for various traffic patterns. %, while at the same time is
%resilient to link failures.
% , while being competitively resilient to link failures and being significantly easier to deploy in practical settings
%Our evaluation shows that PolarFly outperforms all competitive networks in 
%performance, cost, power consumption, and 
%scalability, while being competitively 
%resilient to link failures and being 
%significantly easier to deploy in practical
%settings.

% Low-diameter network topologies such as Slim Fly enable unprecedented scale, high performance, cost effectiveness, power efficiency, and resilience. However, they have not been adopted so far in practice due to challenges related to the practical deployment. In this work, we propose PolarFly: a low-diameter topology based on the theory of Polarity graphs, that resolves these issues while simultaneously not only maintaining all Slim Fly advantages, but even further increasing its scalability.
% %
% Specifically, as opposed to Slim Fly, PolarFly is both modular and extensible, making it possible to incrementally increase its size without rewiring the whole network. Moreover, PolarFly comes with more feasible router degrees, giving a wider selection of networks that can be constructed. Third, PolarFly uses the recent technological developments into copackaging and swizzles to reduce the complexity of wiring.
% %
%Our evaluation shows that PolarFly outperforms all competitive networks in performance, cost, power consumption, and scalability, while being competitively resilient to link failures and being significantly easier to deploy in practical settings.
%
\end{abstract}

% \vspace{0.5em}
%
% {\small\noindent\macb{[Anonymized] Code and report:}\\\url{https://www.dropbox.com/s/}}

\section{Introduction}
\label{sec:intro}

Traditional demand for scalable networks comes from government labs and research institutions to perform large scientific simulations. For example Fukagu \cite{Fugaku:Dongarra}, the largest supercomputer in the world at the time of this writing, connects $158,976$ processing nodes in a single system. The immediate forerunners in the Top500
%\footnote{https://top500.org/lists/top500/}
list~\cite{top500}, Sierra~\cite{sierra} and Summit \cite{zimmer2019evaluation}. use Infiniband configurations with, respectively, $4,474$ and $4,608$ processing nodes. Another notable example of large scale network is BlueGeneQ \cite{chen2012looking}, with $98,304$ network endpoints.
Meta
%\footnote{https://about.fb.com/news/2022/01/introducing-metas-next-gen-ai-supercomputer/} 
recently announced the AI Research SuperCluster (RSC)~\cite{meta_rsc}, which is expected to be the fastest AI supercomputer in the world in later 2022 with almost $10,000$ nodes. RSC will help Meta’s AI researchers build better AI models that can learn from trillions of examples, work across hundreds of different languages to analyze text, images and video together, and develop new augmented reality tools.  Microsoft, Google and Amazon also have expressed strong interest in simulating large AI models with analogous network scale \cite{rajbhandari2021zero,gutierrez2022thinking}. 

A common requirement of academic, governmental and industrial High-Performance Computing (HPC) data centers, is increased efficiency and reduced network cost, which is typically proportional to the number of network links. For this reason low-diameter networks, and in particular diameter 2 and 3 topologies, have seen growing interest in the scientific community over the last few years~\cite{valadarsky2016xpander,aksoy2021spectralfly,lei2020bundlefly,valadarsky2015xpander}. Low diameter networks are also considered an essential ingredient to tackle one of the major issues in data centers, tail latency~\cite{Dean:tail-latency}.

The emergence of high-radix optical IO modules is a technological amplifier of low diameter networks. These modules increase the network by squeezing many connections per unit of space or "shoreline", leading to higher system connectivity \cite{sipho-2.5d:Bergman, 3d-multichip:Bergman, wade2020teraphy,darpa-eri-2019,avicena,teramount,lightmatter,dustphotonics,celestialai,axalume,maniotis2020scaling}.
 In addition, tremendous efforts are being made to bring silicon photonic connections directly into the chip, using various integration methods \cite{sipho-2.5d:Bergman, 3d-multichip:Bergman, EMIB:Mahajan}, commonly known as co-packaging.

Co-packaged photonics not only reduces power consumption and 
improves performance, but also impacts the overall network design.
With co-packaged modules, each element of the chipset (compute, acceleration, memory and storage)  
becomes a first-class citizen with a direct low-latency interface to the network. 
To maximize application performance in this novel system design, the onus is
on network fabric to provide high-bandwidth, low-latency communication and extreme scalability. This further increases the
need for scalable low-diameter networks.

\subsection{State of Art Diameter 2 Topologies: Slim Fly}
Given the availability of high-radix routers, it is desirable to maximize the number of nodes that can be supported on a network of a given diameter. Slim Fly~\cite{besta2014slim} was the first topology analyzed in the networking community that explicitly optimized its structure towards the Moore bound~\cite{comb_wiki_degdiam_general}, an upper bound on the number of vertices in a graph with given diameter and for a given maximal degree. By fixing its diameter to $2$, Slim Fly reduces construction cost and power consumption, 
% since the number of routers and thus buffers and ports is the smallest for a given endpoint count. 
% Simultaneously, it also 
while ensuring low latency and high bandwidth. %\kicomment{I've seen SlimFly in some places -- we should check this} \lmcomment{Everywhere in the text it says "Slim Fly", but in a couple of figures, it says "SlimFly".}

However, Slim Fly has several issues with respect to practical layout and deployment. The number of feasible configurations/topological constructions is limited, and it is not competitive with commercially available products. And there are no results in the literature to address network re-configurability and expansion. This is a matter of great importance in real-world scenarios, where data centers need to increase the size of a compute center gradually over time without being forced to rewire and re-layout the whole interconnect.
%
%Finally, while Slim Fly came with a proposal on how to deploy it as a rack-level interconnect, it is not clear how to adapt it to upcoming co-packaged
%optical designs.\klcomment{How does PolarFly adapt to co-packaged optical design?}

\if 0
3) Recent low-diameter networks such as Slim Fly pushed the margin in terms of
reducing cost and power while keeping high performance and resilience.
We explore it as a potential topology to build scalable co-packaged networks.
However, its practical applicability is limited because:
\begin{itemize}
%    \item Few design points (degrees for which topologies exist)
    \item Lack of flexibility - need to rewire entire network to add few extra nodes
    \item Low injection bandwidth for systems with overprovisioned degrees (what to do with empty ports).
%    \item Complex Layout - what consequences does this have? \maciej{This is I think not worse than the ER graphs, TBH}
\end{itemize} 
\fi

\begin{figure}[htbp]
    \centering
    \includegraphics[width=\linewidth]{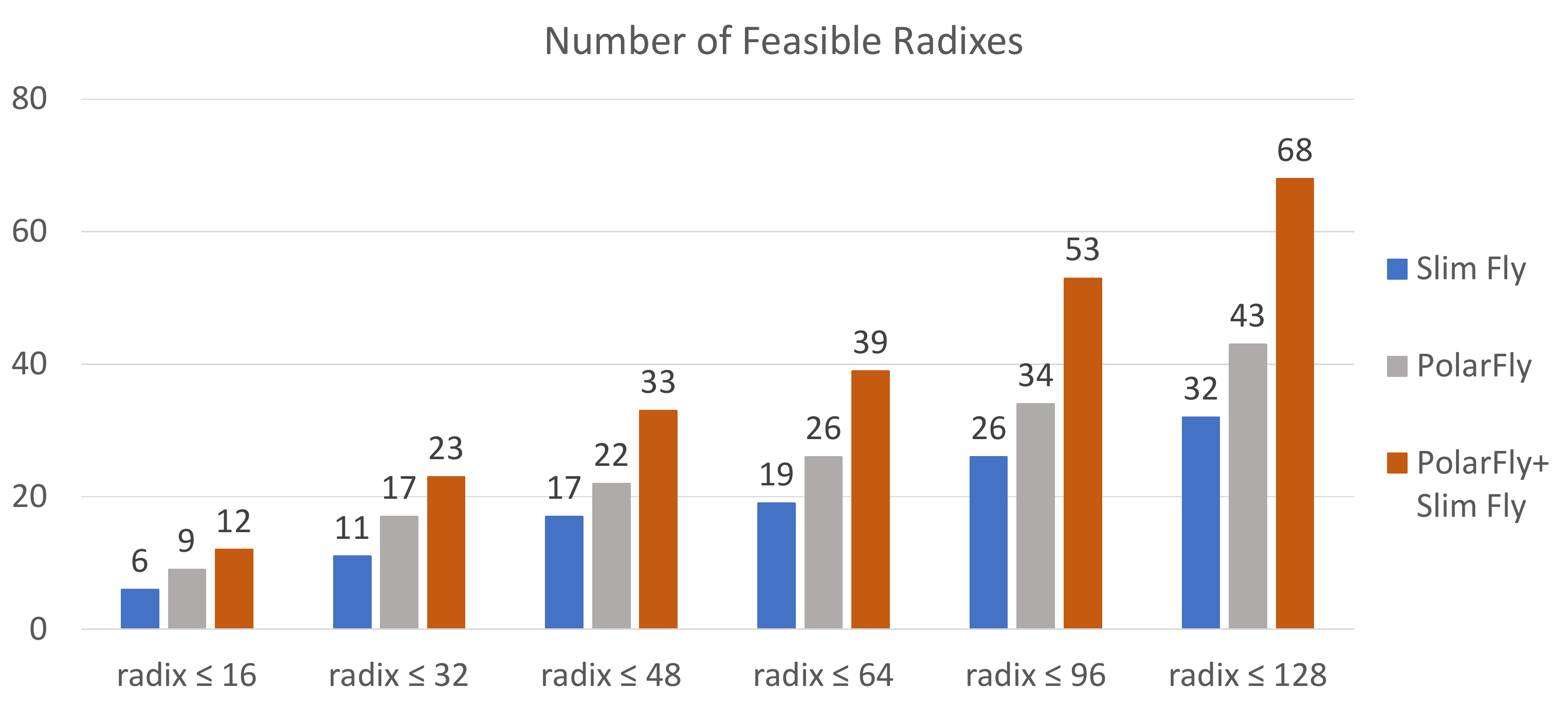}
    \caption{Design space of feasible degrees~(network radixes) for \fly and Slim Fly. Asymptotically, there are 50\% more \fly feasible degrees than what Slim Fly offers.}
    \label{fig:feasible}
\end{figure}

\subsection{Contributions}
To address the above, we present \flyN{}: a diameter-2 network 
topology that is asymptotically optimal in terms of the Moore bound. 
%\fly is based on 
%a class of
%degree-diameter 
%graphs called
%Erd\H os-R\'enyi polarity quotient graphs. %, which %overcome Slim Fly limitations. 

%\fly provides the following contributions:
\begin{itemize}
\item \fly comes with a \emph{larger
number of feasible designs}, in contrast to 
%the MMS graphs that underlie 
Slim Fly~\cite{besta2014slim}, as shown in Figure~\ref{fig:feasible}.
Asymptotically, \fly{} offers 50\% more configurations that can be constructed, without much overlap with the design space of Slim Fly.  This is especially \emph{beneficial for co-packaged systems}, where the network is integrated in the same chip with the computational engines and the network radix is fixed.
\item 
%Importantly, 
\fly can be constructed
%on nodes with a port count of \lmcomment{
from routers having radix at or near powers of $2$. In particular, \fly supports radixes $32, 48, 62$ and $128$, \emph{making the network design more practical for HW implementations}.
\item \fly has a surprisingly elegant construction that allows \emph{modularity and expansion}. Specifically, the network can be decomposed into groups of nodes, all but one of which is a fan-out of triangles, to which new node groups may be added incrementally.
\item \fly
%enjoys another advantageous topological property: it 
is asymptotically optimal in terms of the Moore bound, while Slim Fly asymptotically achieves only $8/9$ of the bound. Hence, \fly
%is scalable 
\emph{asymptotically supports as many routers as possible} for the router degree in a diameter-2 system, and offers \emph{reductions in construction costs} of up to 20\%.
\item \fly exhibits \emph{high bisection bandwidth}, empirically approaching an optimal $50$\% of edges in the cut-set as $q$ gets large. \fly also shows \emph{high resilience upon link breakage}, with the diameter experimentally staying at 4 even after $55$\% link breakage.
\end{itemize}
Our evaluation shows that \fly{} is performance- and cost-competitive with other network topologies, including Slim Fly, Dragonfly, and Fat Tree.
%Specifically, t
Thanks to its compact layout, \fly{} achieves very high saturation under random traffic with low latency, for both minimal and non-minimal adaptive routing. Moreover, under adversarial permutation patterns, \fly{} saturates between $50\%$ and $66\%$, outperforming Slim Fly and Dragonfly, and approaching the non-blocking Fat Tree.

\if 0
\subsection{Contributions}
\begin{itemize}
   
    \item We present a new compact topology called PolarFly, based on polarity graphs and finite field geometry. The paper provides a detailed description of rather sophisticated the mathematical construction that is composed of three main steps (list steps) 
    \item Polarfly enjoyes many interesting topological properties: asymptotically optimal wrt Moore's bound, slimfly is only 8/9, many constructions (count vs slimfly). It also has many constructions for degrees that are relevant for existing and foreseeable routing architectures. For exampl there are (number) x constructions for degrees (range). 
    \item PF is synergistic wrt Slimfly, by addressing a key problem the sparsity of constructions. Kartik's figure.
    \item the fact convergence rate makes (numbers) make it an ideal candidate for system design
    \item The topology has a surprising elegant construction that allows partitions and expansions
    \item The experimental evaluation confirms that, thanks to the compact layout PF gets very high saturation under random traffic with low latency both minimal and UGAL.
    \item Under permutation patterns, PF delivers between $50\%$ and $66\%$, novel concept of mapping, where optimality is not reached with adjacency,
    \item novel concept of locality for permutation patterns with based on non-adjacency. Possibility of multi-phase mappings with predictable high performance
    \item partitionabiliy numbers and considerations
    \item nice visualization
\end{itemize}

\begin{itemize}
    \item Cover more all degrees. They complement Slim Fly 
        by enhancing the design space of scalable algebraic graphs.
    \item Are expandable - further racks can be added on empty ports without affecting performance.
    \item Are flexible - they cover all the needed degrees WHILE ALSO being able to deliver full global bandwidth if fewer nodes are used \maciej{is this really the case?}\klcomment{I think it could be "they cover 
    several needed degrees"}.
\end{itemize}
\fi

\if 0

The proposed graphs provide even better scalability than Slimfly (MMS graphs), 
asymptotically reaching the Moore bound for diameter-2.
\htorcomment{I'd be careful to over-emphasize on this, it's less than 10\% more nodes at the end - for practitioners not too relevant}

\fi

\if 0

4) We restrict the discussion to diameter-2 topologies - higher ingestion
bandwidth for given router radix \htorcomment{not sure what that means - bandwidth is adjustable in all topos for cost?}, lower latency, scalable topologies that can support
several thousands of nodes with few tens of ports. Diameter-3 (and higher) as well
as diameter-2 indirect do not satisfy the needed constraints.

\fi

\section{Background}
\label{sec:back}
%We start with background and notation.

\if 0

Table~\ref{tab:symbols} lists 
key symbols. 

\begin{table}[h]
\centering
\footnotesize
\begin{tabular}{rl}
\toprule
$N$&Number of endpoints in the whole network\\
% $p$&Number of endpoints attached to a router (\emph{concentration})\\
$k$&Number of channels to other routers (\emph{network radix})\\
% $k$&\emph{Router radix} ($k = k' + p$)\\
%$N_e$&Number of all channels in the network\\
% $N_r$&Number of all routers in the network\\
$D$&Network diameter\\
\bottomrule
\end{tabular}
\caption{Most important symbols used in the paper.}
\label{tab:symbols}
\end{table}

\fi

\subsection{Network Model}

We model an interconnection network as an undirected graph $G = (V,E)$; $V$ is
the set of nodes and $E$ is the set of links ($|V| = N$).
We consider only direct networks with co-packaged modules, so there is no notion of endpoints
attached to routers: each node in the network serves both as a router/switch and as a compute endpoint.
There are $N$ such nodes in total, and $k$ channels from each node to other
node (\emph{radix}).
The diameter $D$ denotes the maximum length of shortest paths between any pair of nodes.

\subsection{The Degree-Diameter Problem and Network Design}
The degree of a network is determined by current technology, and the diameter is chosen according to the system requirements. %\htorcomment{we just said we fix diameter-2?}. 
Based on this, one would like to maximize the number of nodes in such a network. 
This is the degree-diameter problem: find the maximum number $n(D,k)$ (or $N$) of vertices in a graph given maximal degree $k$ and diameter $D$. 

The degree-diameter problem is a major open problem in graph theory. For a comprehensive survey, see  \cite{miller_siran_2005}. 
Bounds exist for this problem, but few optimal graphs have been identified. Loz, P\'erez-Ros\'es and Pineda-Villavicencio give two tables \cite{comb_wiki_degdiam_general} with the largest known graphs and bounds as of 2010 for a given degree and diameter. Only a few of the graphs are known to be optimal.

%Here, the degree $d$ is arbitrary, but is determined by the technology of the day and the desired size of the system. 
\subsubsection{The Moore Bound}
The Moore bound \cite{hoffmansingleton1960} is the most general upper bound on the number of vertices $n$ for 
a graph with maximum degree $k$ and diameter $D$, and is given by
\begin{equation}\label{moore_bd}
N \le 1+k\cdot\sum_{i=0}^{D-1}(k-1)^i.
\end{equation}
\if 0
This bound is constructed via an undirected tree in $D$ levels, starting from the root at level $0$ having $k$ children. At each subsequent level, $k-1$ children are attached to each node at that level, so each internal node has degree $k$, and all nodes are reachable from the root in $D$ hops. This therefore gives the upper bound on the number of nodes possible on a graph of degree $k$ and diameter $D$.
\kicomment{Could possibly cut the above paragraph describing how the Moore bound was determined if we need space.}
\klcomment{Done}
\fi
Few graphs of any diameter and degree actually meet the Moore bound; in fact, few even come close.  Hoffman and Singleton \cite{hoffmansingleton1960}, Bannai and Ito \cite{Bannai1973OnFM}, and Damerell \cite{Damerell1973OnMG} have identified all of the graphs that meet the bound. 

The Erd\H os-R\'enyi polarity graphs were introduced by Erd\H os and R\'enyi in  \cite{erdosrenyi1962} and by Brown in \cite{brown_1966}. They have diameter $2$ and asymptotically approach the Moore bound, which is $N \le 1+k^2$
%\begin{equation}\label{moore_bd_2}
%N \le 1+k^2
%\end{equation} 
for graphs of diameter $2$. They also have properties useful for network design, which we exploit for \flyN{}.

\section{Feasibility Analysis of Candidate Topologies}\label{sec:feasibility}

There are many available topologies. However, not all are
suitable for use in a data center, especially in a co-packaged setting. In this section, we investigate representative networks and show that \fly meets the data center needs best of \looseness=-1all.

% \kicomment{Do we use these abbreviations below uniformly? I only see a couple places where we abbreviate.}
% \klcomment{I think Fat-tree needs to be fixed, although I am unsure of the purpose of abbreviations here, they are not used at all. Removing them}
We consider {Slim Fly}~\cite{besta2014slim} (a variant with $D=2$), {Dragonfly}~\cite{dally08} (the ``balanced'' variant with $D=3$), and {HyperX} (Hamming graph)~\cite{ahn2009hyperx} that generalizes {Flattened Butterflies}~\cite{kim2007flattened} with $D=2$.
We also use established three-stage {Fat Trees}~\cite{Leiserson:1985:FUN:4492.4495}.
Finally, for completeness, we also consider two Fat Tree variants, Orthogonal
Fat Trees~\cite{kathareios2015cost} and Multi-Layer Full Meshes~\cite{kathareios2015cost}.
In the following, we identify the criteria for a topology to be a suitable candidate for a data center.

\if 0 
\subsection{Analysis of Design Features}\label{sec:features}

We identify the criteria for a topology to be a feasible candidate for the copackaging setting.
Simultaneously, we analyze different networks on whether they satisfy these criteria.
The analysis is illustrated in Table~\ref{tab:feasibility}.
\fi

\textbf{Directness.} 
\if 0
\kldelete{First, a network must be \emph{direct}, i.e., routers and compute endpoints must form a single node, and any connections should be only between such nodes. This requirement comes from the nature of the copackaging setting, and it means that fat tree topologies are not the right candidates, because they are indirect by design, i.e., some routers have no compute endpoints assigned. \htorcomment{why can I not build switches with CPO? Needs better argument (maybe based on cost for masks/designes etc.)}}
\fi
\emph{Direct} networks can be constructed using only one type of co-packaged chip that integrates the compute, routing hardware, and communication ports in the same package.
In contrast, indirect networks such as fat trees, require design, fabrication, and deployment of
additional %"router only" 
chiplet(s) for the switches, which significantly increases their overall cost. 
% \fly can successfully be used in both network designs.

\textbf{Flexibility.} 
A \emph{flexible} network provides
many feasible configurations that could be constructed using available equipment while
delivering high performance. This means that one must be able to build networks using switches with
feasible radix.
\if 0
and (2) for a given radix, one should be able develop many networks with different sizes, while ensuring high bandwidth. The last prerequisite is again related to the budget: it is often the case that due to a limited
budget, the initial network size is smaller than the ``recommended one'', and could only
be extended to a larger size later. It would be important
that such an ``underprovisioned'' network also delivers high performance.
\klcomment{Do we address performance issues of 'underprovisioned' networks in this paper?
Let us decide on this claim after evaluating bandwidth of extensions}
% - Radix size determined by system size + technology - can be overprovisioned
%
\fi

\textbf{Low Diameter.}
Upcoming distributed shared-memory systems such as PIUMA~\cite{aananthakrishnan2020piuma}, 
and future disaggregated memory systems~\cite{guo2022clio}, 
heavily rely on low-latency remote accesses for performance scalability. 
This can only be delivered by \emph{scalable} networks with small diameter, \textbf{ideally two}, or 
networks with average path length of two.
In case of direct networks, low-diameter topologies also support higher
ingestion bandwidth. 

\textbf{Modularity.} 
In a \emph{modular} network, the nodes can be decomposed into smaller units that could be, e.g., racks, blades, or chassis.
This feature facilitates manufacturing, deployment and cabling. 
% \htorcomment{also manufacturing, see Cray Shasta}
Most of the considered networks satisfy this requirement. For example, plain Fat Trees consist of pods, while
Dragonflies have a group-based structure.

\textbf{Expandability.}
A network is \emph{expandable} if its size can be incrementally increased by adding a basic unit, such as a rack, by using empty ports in an under-provisioned network.
This need is usually related to budgetary issues -- the budget-limited purchased system is smaller than the optimal system, and may only
be extended to a larger size later. 
Incremental growth may be preferred over complete rewiring into a new topology, as the latter is much more disruptive, expensive and time-consuming.
While some of the considered networks, like Dragonfly, do enable incremental growth, it is not known
how to increase the size of the most
competitive target Slim Fly.

\if 0
\kldelete{A network with \emph{high reach} should have very low diameter, ideally two, or -- at least --
its average path length should be at most two. 
The reason for this condition are strict latency requirements in the copackaging setting. \htorcomment{why exactly?}
\maciej{Fabrizio, we could use some clean arguments here}
Moreover, networks with high reach support higher ingestion bandwidth \htorcomment{why?}.
In the considered networks that are direct, only Slim Fly
and HyperX satisfy this requirement.\htorcomment{this definition is too fuzzy to support this statement-a single DF group has lower diameter than SF/PF - why do we call this reach if it's defined in terms of diameter?}}
\fi

We now analyze the considered networks and show whether they satisfy the above criteria.
A summary of the analysis is illustrated in Table~\ref{tab:feasibility}.
All networks are at least partially modular and flexible.
Most networks have diameter two.
Only \fly{} satisfies all the criteria almost fully.

\begin{table}[ht] 
\setlength{\tabcolsep}{2.5pt}
\centering
\footnotesize
%\scriptsize
%\ssmall
%\sf
\begin{tabular}{llllll@{}}
\toprule
\textbf{Topology} & \makecell[c]{\textbf{Direct}} & \textbf{Modular} & \textbf{Expandable} & \textbf{Flexible} & \textbf{Diameter-2} \\
\midrule
Fat tree & \faN & \faY & \faY & \faY & \faN\\
Dragonfly & \faH & \faY & \faY & \faH & \faN \\
HyperX & \faH & \faY & \faY & \faH  & \faY \\
OFT & \faN & \faH & \faN & \faY & \faY \\
MLFM & \faN & \faY & \faN & \faH & \faY \\
Slim Fly & \faY & \faY & \faH & \faH & \faY \\
\midrule
\textbf{\fly} & \faY & \faY & \faH & \faY & \faY \\
\bottomrule
\end{tabular}
%\vspaceSQ{-1em}
\caption{Feasibility. ``\faY'': full support, ``\faH'': partial support, ``\faN'': no support.}
\vspaceSQ{-1em}
\label{tab:feasibility}
%\vspace{-0.5em}
\end{table}

\if 0

%\subsection{Analysis of Scalability}
%We also analyze how close different diameter-2 networks are to the Moore Bound.
%We only consider direct networks, as the only viable candidates for the co-packaging setting. The results are in Figure~\ref{fig:mb}.
%\fly{} achieves the highest scalability.

\fi

\if 0

We illustrate an example \fly{} in Figure~\ref{fig:low-diam-nets},
and compare its structure to other low-diameter direct networks.

\begin{figure}[h]
\centering
\includegraphics[width=0.5\textwidth]{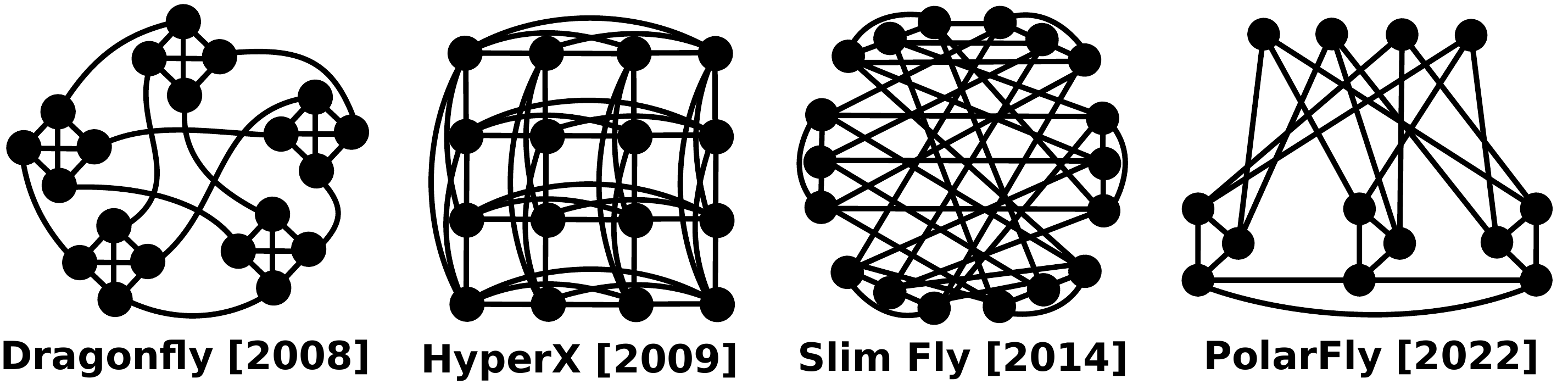}
\caption{Illustration of low-diameter direct network topologies.}
\label{fig:low-diam-nets}
\end{figure}

\fi
%\kldelete{\section{Erd\H os-R\'enyi Polarity Graphs}\label{sec:math_construction}}
\section{PolarFly Topology}
In this section, we discuss in detail the graph underlying the \fly layout. This mathematical description is used in the construction and for the exploitation of the graph properties.

The topology of PolarFly is an
Erd\H os-R\'enyi (ER) polarity graph, also known as a Brown graph, constructed using the relationship of points and lines in finite geometry. These were discovered independently by Erd\H os and R\'enyi \cite{erdosrenyi1962} and Brown \cite{brown_1966}. 
% , and were independently discovered by Brown \cite{brown_1966}. 
There is a great deal of mathematical structure to these graphs, and they have been studied in depth, both in the original papers \cite{erdosrenyi1962,brown_1966} and other references, e.g.  \cite{bachraty_siran_2015, miller2005, parsons_1976}.

ER graphs have several useful features for network design: 
\begin{itemize}
    \item \textbf{Low diameter.} They have diameter 2, giving a short path between any two nodes.
    \item \textbf{Scalability.}
    At the same time, they asymptotically reach the Moore bound, surpassing the scalability of all other diameter-2
    topologies. We compare to other diameter-$2$ topologies that are direct, as needed for co-packaging, and show the comparison in Figure \ref{fig:mb}.
    \item \textbf{Flexibility.} They cover a wide range of degrees,
    having degree $k=q+1$ for every prime
    power $q$. While not completely general, they meet or come very close to the radixes of many current and near-term high-radix routers. For example, for $q=31, 47, 61$ and $127$, $ER_q$ may be applied to systems with routers of radix 32, 48, 64 and 128, with all router ports used at radix 32, 48 and 128. 
     %\item \lmcomment{triangles and elegance of design}
    
    %\lmedit{\item Do we want to mention triangles at all, at least in terms of how they contribute to the simplicity of the layout?}\klcomment{As Prof. Torsten said, "simplicity" is a slightly vague term, unless we come up with a clear measure of it. I would avoid.}\lmcomment{Perhaps I misunderstood -- I thought he was saying that the layout was unclear and apparently difficult -- without a picture.}
\end{itemize}
\begin{figure}[ht]
\begin{centering}
\includegraphics[width=0.9\columnwidth]{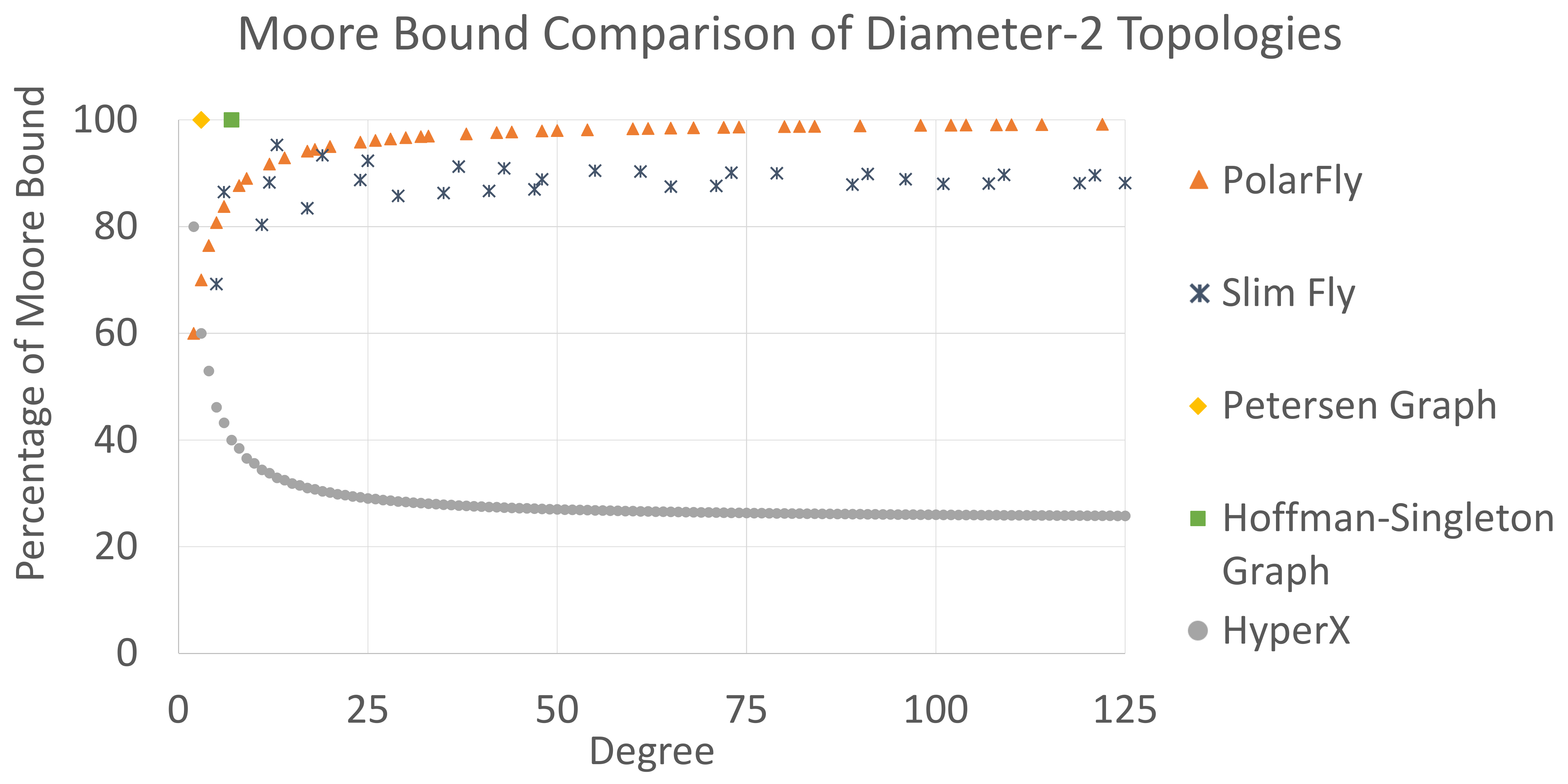}
\caption{Scalability of direct diameter-2 topologies in terms of 
the optimal Moore bound. Petersen~(10 vertices) and Hoffman-Singleton~(50 vertices) 
are the only known graphs to achieve the Moore Bound.}
\label{fig:mb}
\end{centering}
\end{figure}

%Erd\H os-R\'enyi graphs are defined in terms of prime powers $q$.  In this paper, we discuss only odd $q$, and use graph properties specific to odd $q$ in development of the algorithm presented here. We note that the graph construction is the same for even $q$, but the layout in Section X will be different. We emphasize that it is possible to 
%, but that is out of scope for this paper.
%\klcomment{This sentence makes it look like even $q$ is a different topology. Do we need to mention it here? Construction of the graphs is same for even or odd $q$, layout is different. We can mention it then.} \kicomment{Makes sense to me - I commented this paragraph out for now} \lmcomment{I do think we should say somewhere why we are only looking at odd $q$ in this paper. I think up front in the layout section would work. Maybe a sentence before property \ref{prop:er}, noting that it applies to odd $q$ and we are going to use that.}

\subsection{Some Background on Finite Fields and Their Arithmetic}\label{sec:ff}
The construction of ER polarity graphs is based upon the arithmetic operations of finite fields. 
A field is a set having addition and multiplication, where every element has an additive inverse, and every non-zero element has a multiplicative inverse. The construction of ER polarity graphs depends especially upon the existence of multiplicative inverses. 

The set of integers modulo a prime $p$ is an example of a finite field. 
Finite fields $\mathbb{F}_q$ of order $q$ exist for all prime powers $q$, and for no other integers. 
Finite fields are 
also called Galois extension fields. They are 
fundamental to many areas in mathematics and computer science, and are discussed in detail in \cite{mceliece_1987,lidl_niederreiter_1994,gallian_2020} and in many other references. 

It is important to note that addition and multiplication operations in $\mathbb{F}_q$ are quite different from those in $\mathbb{R}$:
\begin{itemize}
\item If $q=p$ is a prime, then addition and multiplication are just modular arithmetic over $p$.
\item If $q=p^m$ is a prime power, with $m>1$, then
%the arithmetic 
addition and multiplication in $\mathbb{F}_q$ are 
%not simply modular over $q$, but instead are 
derived from modular arithmetic over an irreducible degree-$m$ polynomial over $\mathbb{F}_p$. For further details on arithmetic in the finite fields $\mathbb{F}_q$, with $q$ not prime, see 
any of the references listed above.
%\cite{mceliece_1987}. 
\end{itemize}

We will use primes $q=p$ for the examples in this paper for simplicity, so the addition and multiplication in the dot products is just modular arithmetic over $p$. The same graph construction will hold for any prime power $q=p^m$, using the associated arithmetic of $\mathbb{F}_q$ for the dot products. 

\subsection{Geometric Intuition}\label{sec:geom_int}
%\subsubsection{Geometric Construction}
%ER graphs are based on the geometric concept of perpendicularity, or orthogonality. The \emph{dot product} is a convenient way of expressing this; two length-$n$ vectors $v$ and $w$ are orthogonal when $v\cdot w = \sum_{i=0}^n v_i w_i = 0$. 

% In general, graphs may be understood to express relationships between objects. The relationship that 
Erd\H os-R\'enyi polarity graphs express the orthogonality (or perpendicularity) between vectors, or equivalently, lines passing through the origin.  The \emph{dot product} is a convenient way of expressing this; two length-$n$ vectors $v$ and $w$ are orthogonal when $v\cdot w = \sum_{i=1}^n v_i w_i = 0$. 

Note that multiples of a vector retain the same orthogonality relationships as the original vector. So for our purposes, we may consider all multiples of a vector to be the same, and simply choose one as representative of all its multiples.

ER polarity graphs are defined by vertices that represent length-$3$ vectors over a field, and edges that exist between two vertices if the vectors they represent are orthogonal. This graph has diameter $2$.

As an example, consider ordinary Euclidean $3$-dimensional space. The existence of $2$-hop paths between any two vectors in the corresponding graph depends upon this fact: any pair of (non-multiple) vectors has a vector to which both are orthogonal: their cross-product. The $2$-hop path linking them passes through the orthogonal vector. This is shown in Figure~\ref{fig:cp}.
\begin{figure}[ht]
\begin{centering}
\includegraphics[width=0.45\columnwidth]{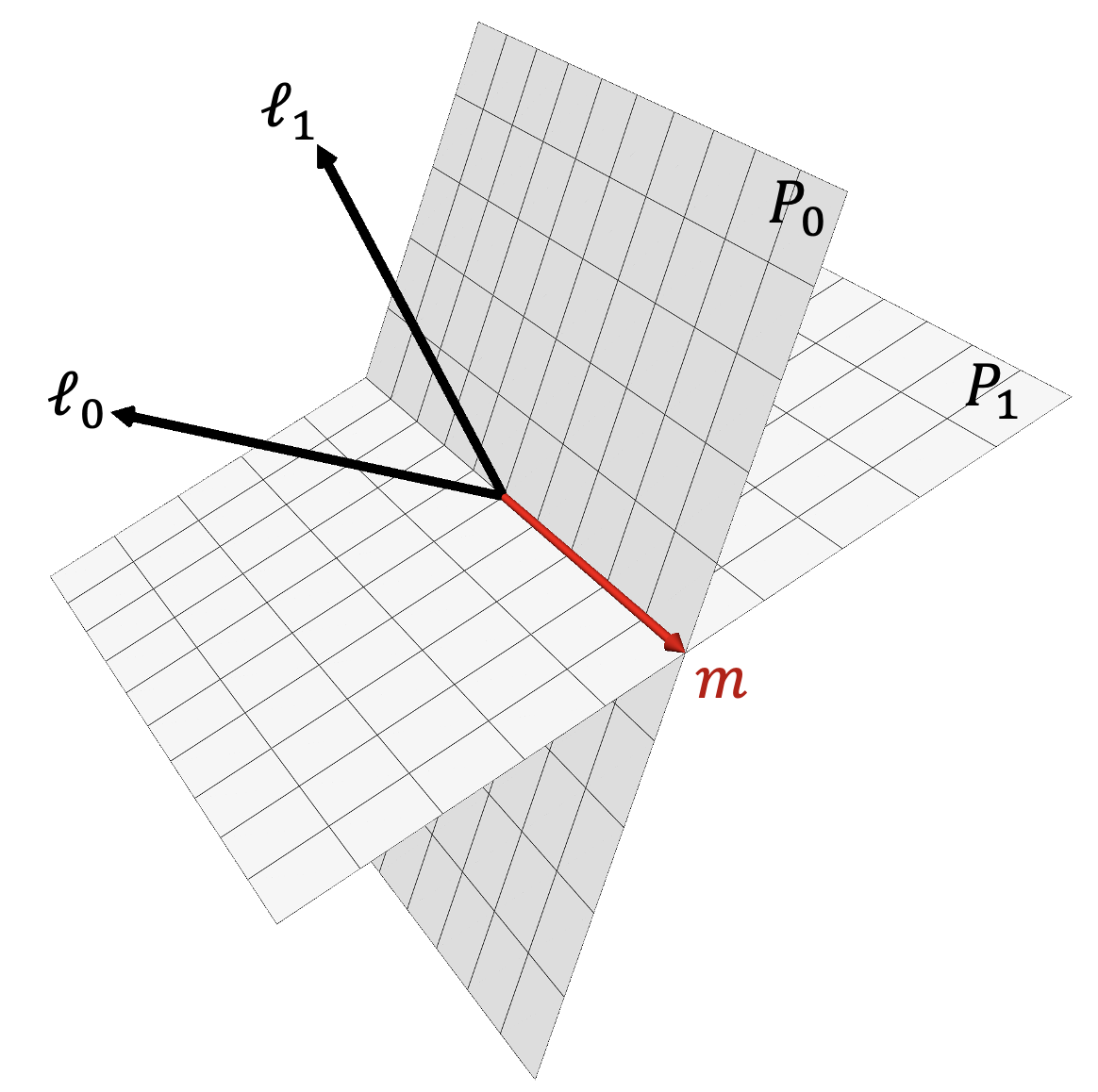}
\caption{Let $\ell_0$ and $\ell_1$ be arbitrary lines in Euclidean $3$-space passing through the origin. The line $\ell_0$ is perpendicular to the plane $P_0$, and the line $\ell_1$ is perpendicular to the plane $P_1$. The two planes intersect in a line $m$ which is perpendicular to both $\ell_0$ and $\ell_1$. A graph including $\ell_0$, $\ell_1$ and $m$ as vertices has edges ($\ell_0$, $m$) and ($m$, $\ell_1$), so there is a $2$-hop path from  $\ell_0$ to $\ell_1$ passing through $m$. This construction may be generalized to $\mathbb{F}_3^3$, using the dot product to represent perpendicularity.}
\label{fig:cp}
\end{centering}
\end{figure}

 Euclidean $3$-space obviously has infinitely many lines and planes passing through the origin. However, a similar construction may be used to obtain a finite space of dimension $3$ over the finite field $\mathbb{F}_q$.
 %, the Galois field of order a prime power $q$. 
%A field is a set having addition and multiplication, where every element has an additive inverse, and every non-zero element has a multiplicative inverse. The multiplicative inverse is needed so that each line may be represented by exactly one vector, having $1$ as its first non-$0$ entry. Finite fields $\mathbb{F}_q$ of order $q$ exist for prime powers $q$, and only for those $q$. 
%These fields are called Galois extension fields. Galois fields are fundamental to many areas in mathematics and computer science, and are discussed in great detail in many references, such as \cite{g1} and \cite{g2}. \lmcomment{Tried to rewrite this so that just enough is said to be intuitively clear, and no more.}
%This construction gives a geometric space with $q^2+q+1$ lines, since any non-zero multiple of a point on the line is on the same line. This is why the multiplicative inverses of a field are needed. 
The geometric relationships are similar to those discussed above in the Euclidean case. Orthogonality is again expressed by the dot product, using the addition and multiplication from $\mathbb{F}_q$. So the construction over $\mathbb{F}_q$ also gives rise to a graph of diameter $2$, but this time the graph is finite.
 
%It is important to note that the addition and multiplication operations in $\mathbb{F}_q$ are quite different from addition and multiplication in $\mathbb{R}$. If $q=p$ is a prime, then addition and multiplication are just modular arithmetic over $p$. If $q=p^k$ is a prime power, with $p$ a prime and $k>1$, then the Galois field $\mathbb{F}_q$ has operations derived from modular arithmetic over an irreducible degree-$k$ polynomial over the field $\mathbb{F}_p$, as discussed in \cite{g1} and \cite{g2}. 
 
Because $\mathbb{F}_q$ is finite with modular arithmetic, some non-zero vectors in $\mathbb{F}_q^3$ have the interesting property that they are orthogonal to themselves, which never happens in Euclidean space. For example, consider $\mathbb{F}_3^3$, where the arithmetic operations are modular addition and multiplication$\mod 3$. The vector $[1,1,1]$ is self-orthogonal, since $[1,1,1]\cdot [1,1,1]=1+1+1=0 \mod3$. 
\subsection{Construction With Dot Products over $\mathbb{F}_q$}
ER polarity graphs are easily constructed using the set of non-zero left-normalized vectors $[x,y,z] \in \mathbb{F}_q^3$ as vertices. These are vectors in which the first non-zero entry is $1$.% We use left-normalized vectors as representatives of their multiplicative classes.   

For example, in $\mathbb{F}_3^3$, $[1,0,2]$ and $[0,1,0]$ are vectors in the set under consideration, but $[0,2,1]$ would not be, since its first non-zero entry is $2$. Instead, $[0,2,1]$ is multiplied by the multiplicative inverse of $2$ (mod $3$), giving its left-normalized representation: $2\cdot [0,2,1] = [0,1,2]$. The existence of multiplicative inverses in the field $\mathbb{F}_q$ assures us that each non-zero vector can be represented as a left-normalized vector.

\ER{q} is then constructed as follows:

\begin{itemize}
\item The vertices are the left-normalized vectors in $\mathbb{F}_q^3$. 
\item The edges are pairs $(v,w)$ of vertices that are orthogonal to each other, as per the dot product;
in other words, using the addition and multiplication of $\mathbb{F}_q$, the dot product of $v$ and $w$ is 0.
\end{itemize}
Self-orthogonal vertices are distinguished from the others and are called 
\textit{quadric}. All other vertices are called \textit{non-quadrics}. Quadrics may be considered to have a self-loop, and play a special role in the construction of \flyN.

We show an example of this construction in Figure \ref{fig:dpc}. There is a great deal of structure in the $ER_q$ graph, some of which can be seen there and in the graph layout in Figure~\ref{fig:clustercake_design}. We exploit this in the construction of an efficient network.
\begin{figure}[ht]
\begin{centering}
\includegraphics[width=0.45\columnwidth]{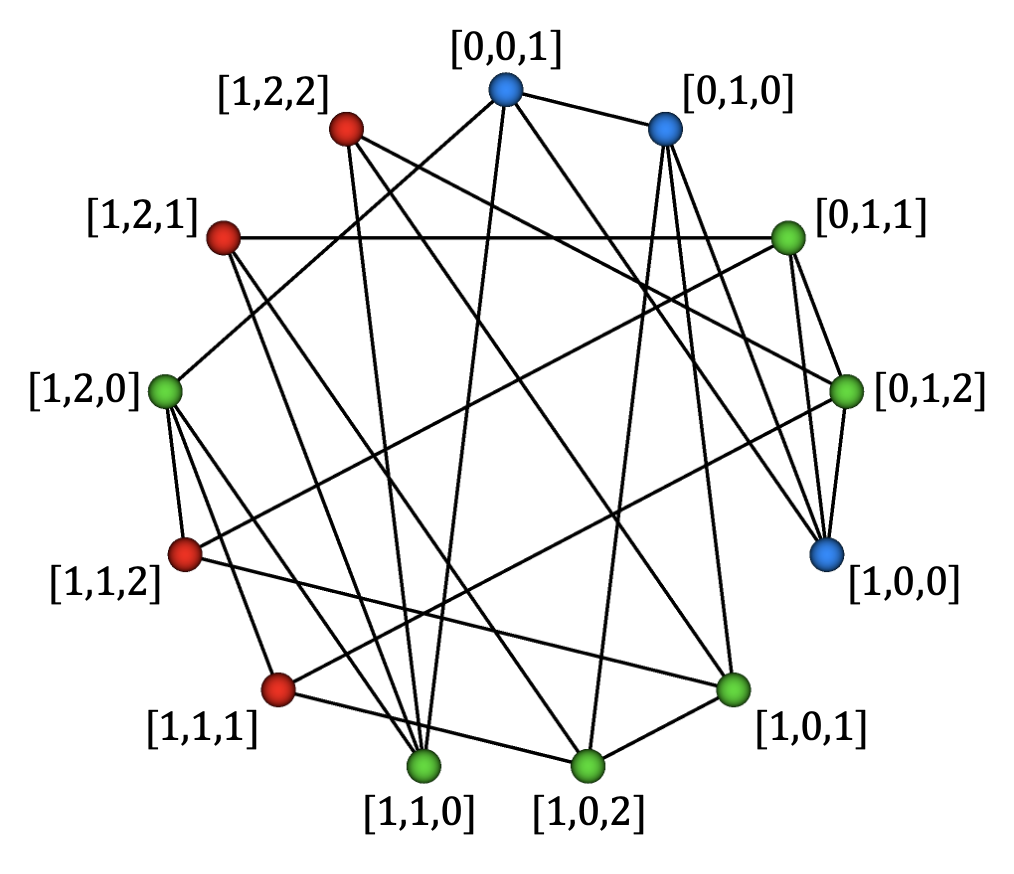}
\caption{The dot-product construction of $ER_3$. The left-normalized vectors of $\mathbb{F}_3^3$ are the vertices, arranged lexicographically, and clockwise starting at the top. The base field is $\mathbb{F}_3$. $3$ is a prime rather than a prime power, so the operations are addition and multiplication mod $3$. Edges exist between vertices $v$ and $w$ when the dot product $v \cdot w$ is $0$, using arithmetic from $\mathbb{F}_3$. 
For example, the vertex $[1,1,1]$ is adjacent to $[0,1,2]$, since the dot product $([1,1,1]\cdot [0,1,2]) = 0+1+2 \equiv 0 \mod 3$. The self-adjacent quadrics ($W$) for which the dot products $w\cdot w$ are $0$ are red. Vertices adjacent to quadrics ($V_1$) are green, and vertices not adjacent to quadrics ($V_2$) are blue. }
\label{fig:dpc}
\end{centering}
\end{figure}

\subsection{Minimal paths, intermediate points and routing}\label{subsec:intermediate}
	Since an ER graph has diameter 2, the minimal path between two vertices is either 1 or 2 hops. 
	A path is of length 1 if and only if the vertices are orthogonal (so their dot product is $0)$.

	There is only one minimal path of length $1$ or $2$ between vertices. For reasons of efficiency, table-based routing is the best method for finding paths. However, the unique intermediate vertex on a $2$-hop path may also be found by solving a pair of linear equations representing the dot-product construction.

	Intuitively, the two vertices are represented by two distinct lines, which are orthogonal to a single line (or single vertex in projective space), as seen in Figure~\ref{fig:cp}. %We are in projective space, so there is a unique representative of the line $m$ orthogonal to the lines $\ell_0$ and $\ell_1$ in the figure. 
	So the problem of finding the intermediate vertex in a minimal 2-hop path reduces to finding that unique orthogonal line.
	
	 The intermediate vertex $v$ between vertices $s$ and $d$ is orthogonal to both $s$ and $d$, so $s \cdot v = d \cdot v = 0$. Thus $v$ is found by solving this system of equations, via an augmented \looseness=-1matrix:
	$$
	\left[\begin{array}{ccc|c}  
		s_0 & s_1 & s_2 & 0\\  
		d_0 & d_1 & d_2 & 0  
	\end{array}\right]
	$$
	Since $s$ and $d$ are not multiples of each other, and since all vectors in $ER_q$ must be left-normalized, there will be a unique solution $v$ to this system of equations,
	which may be found by Gaussian elimination or otherwise.

For example, in $ER_3$, the vectors $(0,0,1)$ and $(1,2,2)$ are not orthogonal,
(since their dot product is not $0$)
so the minimal path has length $2$. The left-normalized solution in $\mathbb{F}_3^3$ to the resulting augmented matrix 
	$$
	\left[\begin{array}{ccc|c}  
		0 & 0 & 1 & 0\\  
		1 & 2 & 2 & 0  
	\end{array}\right]
	$$
is $(1,1,0)$. Figure \ref{fig:dpc} confirms that  $(1,1,0)$ is indeed intermediate on the unique length-$2$ path between $(0,0,1)$ and $(1,2,2)$.

Another, perhaps simpler way to obtain the
	vertex $v$ orthogonal to both $s$ and $d$ is to compute the cross-product of $s$ and $d$, as discussed in~\cite{parsons_1976}:
	\begin{equation}\label{eq:cross_product}
	s\times d = \left(s_2d_3-d_3s_2,\ s_3d_1 - d_3s_1,\ s_1d_2 - s_2d_1\right)
	\end{equation}
	Since multiples of a vector represent the same vertex, the coordinates obtained from cross-product can be left-normalized to obtain $v$. For example, in $ER_3$, the intermediate vertex between the vectors $(0,0,1)$ and $(1,2,2)$ is given by:
	$$
	(-2, 1, 0) = (1, 1, 0)
	$$
	as vectors modulo 3.

	Because vectors in $ER_q$ are left-normalized, this method is quite efficient, in the worst case needing only two multiplies and three adds in $\mathbb{F}_q$ to compute the cross-product, then at most another two multiplies for the left-normalization.
	
\subsection{Formal Construction}\label{sec:formal}
%\htorcomment{everything until here is nice and really cute! We need to realize that we have probably lost 60\% of the readers of the architecture track at SC by now} \lmcomment{yeah, it's rough. I've come across this before with the math/CS crossover papers I've done. Nothing to be done about it here, this is such a math-based construction. Except explain the math as clearly as possible ... Maybe prep them for it with an up-front remark in the intro that this is math-inspired.}
\subsubsection{A Bipartite Graph  From Finite Geometry}\label{sec:bipartite}
The projective plane $\textrm{PG}(2,q)$ is a geometric structure that arises from projecting lines and planes in three-dimensional space over $\mathbb{F}_q$ to points and lines respectively. As discussed above, projective points can be thought of as the left-normalized vectors in $\mathbb{F}_q^3$. In more formal language, each point $[a]$ in $\textrm{PG}(2,q)$ is an equivalence class of 3-tuples where $(a_1 , a_2 , a_3) \sim (b_1, b_2, b_3)$ if and only if these tuples are multiples of each other. Projective lines $(b_1: b_2:b_3)$ contain all points $[x]$ so that $b_1x_1 + b_2x_2+ b_3x_3=0$ in $\mathbb{F}_q$.  By counting such vectors, we see there are $q^2+q+1$ points in $\textrm{PG}(2,q)$. Dually, there are $q^2+q+1$ lines in $\textrm{PG}(2,q)$ as we will see. 

We will construct $ER_q$ graphs for any prime power $q$ using properties of points and lines in $\textrm{PG}(2,q)$.  To begin, we first build a bipartite graph $B(q)$. The vertex set of $B(q)$ is $U \cup V$  where $U$ is the set of points in $\textrm{PG}(2,q)$ and $V$ is the set of lines in $\textrm{PG}(2,q).$ There is an edge between $v \in U$ and $w\in V$ if and only if the point $v$ lies on the line $w$. 
In $\textrm{PG}(2,q)$, each point lies on $q+1$ lines, and each line contains $q+1$ points. The graph $B(q)$ has $2(q^2+q+1)$ vertices and degree $q+1.$ The graph $B(q)$ has diameter 3, which will be reduced to diameter 2 by a polarity construction in the next section.

\subsubsection{Decreasing the Diameter Using a Polarity Map}\label{sec:polarity}
For each point $[a]$ in $\textrm{PG}(2,q)$, the \emph{dual} $[a]^\perp$ is the line in $\textrm{PG}(2,q)$ which contains all points $(x_1, x_2, x_3)$ so that $a_1 x_1 + a_2 x_2 + a_3 x_3 =0$. Since the dual map is a bijection, this shows that $\textrm{PG}(2,q)$ contains $q^2+q+1$ lines. The dual of a line can be defined symmetrically. Notice that $([a]^\perp)^\perp =[a]$. Clearly, 
$[x]$ lies on line $[a]^\perp$ if and only if $[a]$ lies on
$[x]^\perp$. Such a bijection is also known as \textit{polarity}.

In order to decrease the diameter of $B(q)$, we use this polarity: take $B(q)$ and glue the vertices $v \in U$ and $w \in V$ together if and only if $[w] = [v]^\perp$. That is, combine the point $[v] \in U$ and the line $[w]\in V$ together if they are duals of each other. Define \ER{q} to be the graph formed by applying this gluing process to $B(q)$. \ER{q} has $q^2+q+1$ vertices since we glued pairs of vertices in $B(q)$ together. The degree is still $q+1$, since every line passing through point $[v]$ is
glued to a point on $[v]^\perp$, but now the diameter is reduced to 2.

This construction is quite general: if a polarity map exists on a bipartite graph with $N=2n$ vertices, maximum degree $k$, and diameter $D$, it can be used to construct another graph with $n$ vertices, maximum degree $k$, and diameter $D-1$. 

\subsubsection{Quadric Vertices}
In a finite geometry, the point $[a]$ may lie on its own dual, the line $[a]^\perp$. When this occurs, i.e., when $a_1^2 + a_2^2 + a_3^2=0$, the vertex $[a]$ is called a \emph{quadric vertex}. In \ER{q}, there is a loop at vertex $[a]$ since $[a]$ and $[a]^\perp$ are glued together. These quadric vertices are the same as the self-orthogonal vectors discussed at the end of Section \ref{sec:geom_int}, and will be discussed further in Section \ref{sec:layout_property} in terms of the layout of the network.
%\subsubsection{Generalized Polygons/Triangles}
%\lmcomment{We might want to limit this section to generalized triangles, with attention to how they can impact the form of the network. This also speaks to odd $q$ since there are no triangles for even $q$.}
%\kicomment{The projective plane $\textrm{PG}(2,q)$ is a generalized triangle, but the ER graphs are not - what do we need to say about generalized triangles? Is it sufficient just to explain that a triangle is a 3-cycle?}
%\klcomment{I think we can just skip this subsection.}
%\subsection{Abas Graph Construction}
%\klcomment{I think we can skip Abas in this paper} \lmcomment{so just cut this section now?}
%\begin{itemize}
%    \item Exist for all degrees
%    \item vertex transitive
%\end{itemize}
%\subsubsection{Semidirect Product}
%\subsubsection{Intuition on structure, Two sets of vertices}
%\change{\subsection{Implications of ER graphs for networking}
%	ER graphs have several advantages for networking: low diameter, high degree, and others, which we discuss here. Layout and expansion are discussed at length in further sections.
\subsection{Structural Properties of \topo}\label{sec:layout_property}
We make heavy use of the structure of ER graphs in the design of the network discussed in this paper. 

ER graphs \ER{q} have $N=q^2+q+1$ vertices, degree $k=q+1$, and 
diameter $D=2$. The vertex set of \ER{q} can be divided into three
disjoint subsets~\cite{parsons_1976}:
\begin{itemize}
\setlength\itemsep{.35em}
    \item \quadric{q}$\rightarrow$ set of $q+1$ quadric vertices. 
    % \quadric{q} is an independent set of 
    % $q+1$ vertices.
    
    \item \lone{q}$\rightarrow$ set of $\frac{q(q+1)}{2}$ vertices 
    adjacent to \quadric{q}. 
    % \lone{q} contains $\frac{q(q+1)}{2}$ vertices
    
    \item \ltwo{q}$\rightarrow$ set of $\frac{q(q-1)}{2}$ vertices not adjacent to 
    \quadric{q}.
\end{itemize}
The following properties used in the construction and analysis of the network were presented by Bachrat\'y and \v{S}ir{\'a}\v{n} in \cite{bachraty_siran_2015}. 

\begin{property}\cite{bachraty_siran_2015}\label{prop:er}
For every odd prime power $q$, \ER{q} has the following properties:
    \begin{enumerate}
\setlength\itemsep{.35em}
        \item 
        %\quadric{q} is an independent vertex set i.e.
        No two vertices in 
        \quadric{q} are directly 
        connected.
        Every vertex in \quadric{q} is adjacent to exactly $q$ vertices in \lone{q}.\label{prop:quad_adj}

        \item Every vertex in \lone{q} is adjacent 
        to exactly $2$ vertices in \quadric{q}, and 
        $\frac{q-1}{2}$ vertices each in \lone{q} 
        and \ltwo{q}.\label{prop:lone_adj}
        
        \item Every vertex in \ltwo{q} is adjacent 
        to exactly $\frac{q+1}{2}$ vertices each in 
        \lone{q} and \ltwo{q}.
        
        \item There is exactly one path of length two
        between every vertex pair (considering the  self-loop of the self-adjacent quadrics as an edge).\label{prop:2path}
        %\footnote{Every quadric vertex $v$ is incident with a self loop which is a part of the 2-hop paths between $v$ and its neighbors.}.
        \item As a corollary, the edges incident with quadric vertices
        do not participate in any triangle. 
        Any edge incident with two non-quadric 
        vertices participates in exactly
        one triangle.
        \label{prop:path}
    \end{enumerate}
\end{property}

%\kldelete{\section{Network Design with Erd\H os-R\'enyi Graphs}
%Erd\H os-R\'enyi graphs have useful features for network design\klcomment{I feel that description of ER graph properties is getting repeated}: 
%\begin{itemize}
%    \item They have diameter 2, giving a short path between any two nodes.
%    \item They have an abundance of triangles with good connectivity between them. \kledit{These triangles form the basis of PolarFly layout,} as shown in the example in Figure~\ref{fig:clustercake_design}.
%    \kldelete{, and is useful in the expansion of Polarfly, which permits incremental expansion of an existing network \klcomment{we did not use triangles for expanding}.}
%    \item Although they do not have an intuitively obvious hierarchical structure, the connectivities of the quadrics can be used to generate a hierarchy which we use for Polarfly.
%\end{itemize}
%\subsection{Layout and Modularity}\label{sec:layout}}
\section{PolarFly Layout}\label{sec:layout}
Network layouts need a modular topology 
decomposable into smaller units
for easy and cost-effective
deployment.
Since ER graphs are derived from polarity 
quotient graphs of finite projective planes, 
such structures are not trivially
available, in contrast to topologies derived from multiple generating sets, such as Slim Fly~\cite{besta2014slim} 
or Bundlefly~\cite{bundlefly_2020}, in which 
modular units can be readily obtained
from individual generators.

%Erd\H os-R\'enyi graphs are derived from polarity quotient graphs of finite projective planes. This means that the uniform hierarchical structures required for a layout are not trivially available from the construction. This differs from topologies derived from multiple generators, such as Slimfly~\cite{besta2014slim} or Bundlefly~\cite{bundlefly_2020}, which are easily partitioned into multiple racks.
Instead, we use the connectivity of quadrics to other vertices
to obtain a \textbf{modular} and \textbf{generalized}
layout for \flyN. Property~\ref{prop:er} from Section~\ref{sec:layout_property} tells us that every quadric $v \in \quadric{q}$ is connected to exactly $q$ vertices in \lone{q}. We use these $q$ vertices to construct $q$ clusters, plus the quadrics themselves as the $(q+1)^{st}$ cluster. 
 
 More formally, Algorithm~\ref{alg:racks} assigns the vertices in \ER{q}
 into $q+1$ clusters, which corresponds to assigning nodes to racks in \fly.
%  routers in \fly into $q+1$ racks that correspond to vertex \textit{clusters} in \ER{q}. 
We use the terms \textit{racks} and \textit{clusters} interchangeably. 
 
 %, which are naturally partitioned into hierarchical subsets that make up the blades or racks in the system.
% naturally exhibit a modular structure
% that can be easily partitioned into multiple racks. 

For brevity, we only discuss \ER{q} for odd $q$ (even 
radix) as even prime powers $q=2^i$ are sparse in the set
of all prime powers. The layout for even $q$ is similarly
modular, and is derived using an analogue to Property~\ref{prop:er}
for even $q$ \cite{bachraty_siran_2015}. 
%Based on the properties of the $ER_q$ graphs discussed in Section \ref{sec:layout_property}, we propose 
\begin{algorithm}[ht]
%    \caption{Arranging \fly routers into racks}
    \caption{\fly layout}
    \label{alg:racks}
    \begin{algorithmic}[1]
        \Statex{\ER{q}$\leftarrow$ ER Graph of max degree $q+1$}
        \State{Initialize empty clusters (racks) $C_0, C_1 \ldots, C_q$}
        \State{Add all quadrics \quadric{q} to $C_0$}
        \State{Select an arbitrary quadric $v\in W(q)$}
        \ForEach{vertex $u$ adjacent to $v$}
            \State{Add $u$ to an empty cluster $C_i$}
            \State{Add all non-quadric neighbors of $u$ to $C_i$}\label{line:add_neigh}
        \EndForEach
    \end{algorithmic}
\end{algorithm}
%\begin{figure*}[t]
%
%\includegraphics[width=1\textwidth]{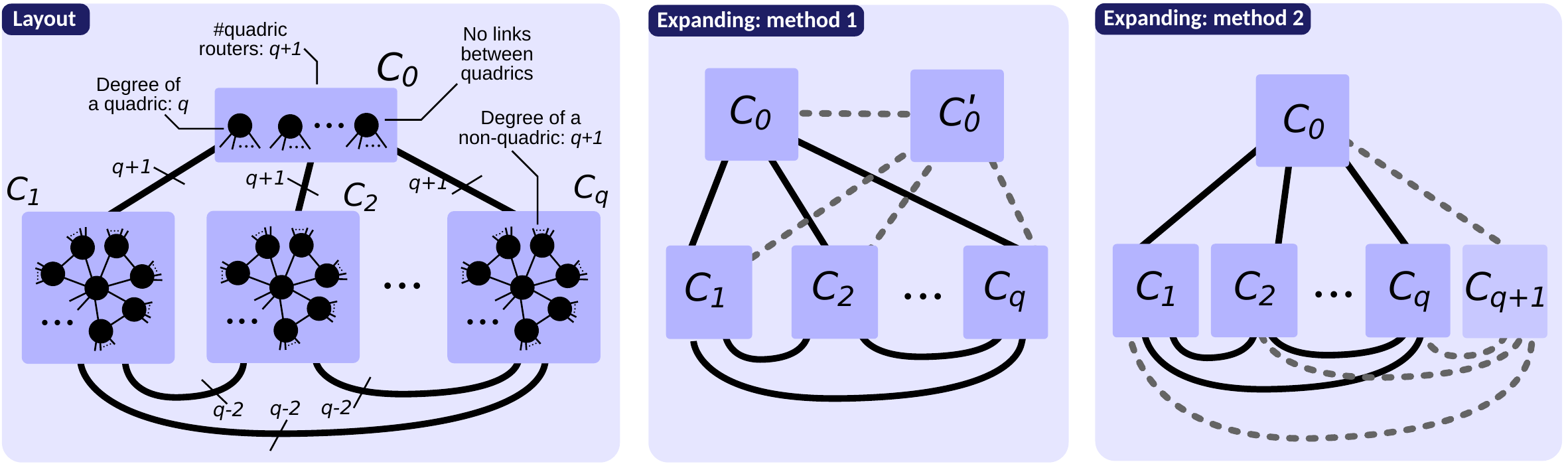}
%\caption{Layout and expansion methods. }
%\label{fig:expand}
%\end{figure*}

%\klcomment{We can choose to keep or remove the proofs for the following claims. Adding mostly proof sketches for now. The Lemma environment is creating strange indexing with section numbers.} \lmcomment{Can we put any proofs in the appendix?} \klcomment{We can't. The Artifact appendix is auto generated from the form that we fill on SC portal. We are not allowed to add appendices to the submission before acceptance.}
%\lmcomment{as per Kartik, no proofs in the appendix. So proofs should either be short or cut entirely. Or sketches! Good keywords to use: "Clearly..." "Proof left for the reader..." "It is obvious that ..."}\htorcomment{hahahaha}
Figure~\ref{fig:layout} is a diagram of the layout, and Figure~\ref{fig:clustercake_design} shows an example of the $ER_q$ graph structure supporting this layout. 

\begin{proposition}\label{claim:assign1}
Algorithm~\ref{alg:racks} adds every vertex in \ER{q} to exactly one cluster.
\end{proposition}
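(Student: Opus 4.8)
The plan is to split ``exactly one cluster'' into its two halves — \emph{coverage} (every vertex lands in at least one cluster) and \emph{disjointness} (no vertex lands in two) — and to dispatch the quadrics immediately. Line~2 puts all $q+1$ vertices of $\quadric{q}$ into $C_0$, and since the loop body only ever inserts \emph{non-quadric} vertices into $C_1,\dots,C_q$, no quadric is revisited and no non-quadric enters $C_0$. So the whole content of the proposition reduces to showing that each of the $q^2$ non-quadric vertices is inserted into exactly one of $C_1,\dots,C_q$. Write $u_1,\dots,u_q$ for the neighbours of the chosen quadric $v$; by Property~\ref{prop:er} these all lie in $\lone{q}$, and $u_i$ seeds cluster $C_i$ on line~5.

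First I would establish that the seeds are pairwise non-adjacent: if $u_i \sim u_j$ for $i \neq j$, then $v,u_i,u_j$ would be a triangle containing the edge $v u_i$, contradicting the fact from Property~\ref{prop:er} that an edge incident to a quadric lies in no triangle. Hence each seed $u_i$ is added to $C_i$ and to nothing else — it is non-quadric (so excluded from $C_0$) and it is not a non-quadric neighbour of any other seed. The core step is then a non-quadric $w \notin \{u_1,\dots,u_q\}$. Such a $w$ differs from $v$ and is \emph{not} adjacent to $v$ (the neighbours of $v$ are precisely the seeds), so the distance from $v$ to $w$ is exactly $2$. By the unique-length-two-path property in Property~\ref{prop:er} there is a single length-$2$ walk $v - x - w$, and its midpoint $x$ must be a genuine neighbour of $v$ (the self-loop route $v - v - w$ is unavailable since $v \not\sim w$); thus $x = u_i$ for a \emph{unique} $i$. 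Therefore $w$ is adjacent to exactly one seed and is inserted into $C_i$ on line~\ref{line:add_neigh} and into no other cluster.

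Combining the three cases yields coverage and disjointness at once. The main obstacle is the uniqueness in the last step — ruling out that $w$ is adjacent to two distinct seeds — and here the single-length-two-path guarantee of Property~\ref{prop:er} is exactly what closes the gap, since two seeds adjacent to $w$ would produce two distinct paths $v - u_i - w$ and $v - u_j - w$. As a consistency check I would verify the counts: Property~\ref{prop:er} gives each $u_i \in \lone{q}$ exactly $q-1$ non-quadric neighbours, so $C_i$ receives $1+(q-1)=q$ vertices, and over the $q$ clusters this is $q^2$ non-quadrics, matching $|\lone{q}| + |\ltwo{q}| = \tfrac{q(q+1)}{2} + \tfrac{q(q-1)}{2} = q^2$ with no overlap, confirming that nothing is missed or double-counted.
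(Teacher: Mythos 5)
Your proof is correct and follows essentially the same route as the paper's: coverage comes from every non-quadric lying within distance $2$ of the starter quadric $v$, and disjointness comes from the neighbours of $v$ being pairwise non-adjacent with no common neighbour other than $v$, both via Property~\ref{prop:er}. You spell out the details (the seed/non-seed case split, ruling out the self-loop path, and the final count) more explicitly than the paper does, but the underlying argument is the same.
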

\begin{proof}
Each vertex is
at most $2$ hops away from the quadric $v$
selected in Algorithm~\ref{alg:racks} line $3$. The algorithm assigns clusters 
to $W(q)$ and all non-quadrics at shortest distance $\leq2$ from $v$.
Thus, every vertex is added to at least one cluster.
By Property~\ref{prop:er}.\ref{prop:path},
the vertices adjacent to $v$ 
are independent and have no common neighbor
aside from $v$. Hence, non-quadric vertices are added to at most one cluster.
The quadrics are added to exactly one cluster, $C_0$.
\end{proof}
%The proof will be complete if we show that every vertex is added to at most one cluster. This is clearly true for quadrics that are added to $C_0$.From Property~\ref{prop:er}.\ref{prop:path}, we can say that the vertices adjacent to $v$ are independent and do not have any common neighbor other than $v$. Hence, any non-quadric vertex is added to at most one cluster.
%\begin{claim}\label{claim:assign1}Algorithm~\ref{alg:racks} adds every vertex in \ER{q} to exactly one cluster.
%\end{claim}
%\lmcomment{Do you mean "Algorithm~\ref{alg:racks} adds every vertex in \ER{q} to at least one cluster."?}\klcomment{Thanks, fixed}
%\begin{proof}
%Each vertex is at most $2$-hops away from the selected quadric $v$, and algorithm~\ref{alg:racks} assigns clusters to $W(q)$ and all non-quadrics at shortest distance $\leq2$ from $v$.
%\end{proof}
%\begin{claim}
%Algorithm~\ref{alg:racks} adds every vertex in \ER{q} to exactly one cluster.
%\end{claim}
%\begin{proof}
%This is clearly true for quadrics that are added to $C_0$.
%From property~\ref{prop:er}.\ref{op:path},
%we can say that the vertices adjacent to $v$ are independent and do not have any common neighbor other than $v$. Hence, any non-quadric vertex is added to at most one cluster. This combined with the claim~\ref{claim:assign1} shows that every vertex is added to exactly one cluster.
%\end{proof}
%\kicomment{could we combine these claims into one?} \lmcomment{I think we should.}
\subsection{Intra-rack Layout}\label{sec:layout_intra}
\subsubsection{The quadrics cluster}
The layout of the quadrics cluster is quite simple: there are no edges within $C_0$ by  Property~\ref{prop:er}.\ref{prop:quad_adj}. This is seen in $ER_7$ in 
Figure~\ref{fig:clustercake_design}(\subref{fig:clustercake_a}), and also in $ER_3$ in Figure~\ref{fig:dpc}.
\begin{figure}[ht]
\begin{centering}
\includegraphics[width=.6\columnwidth]{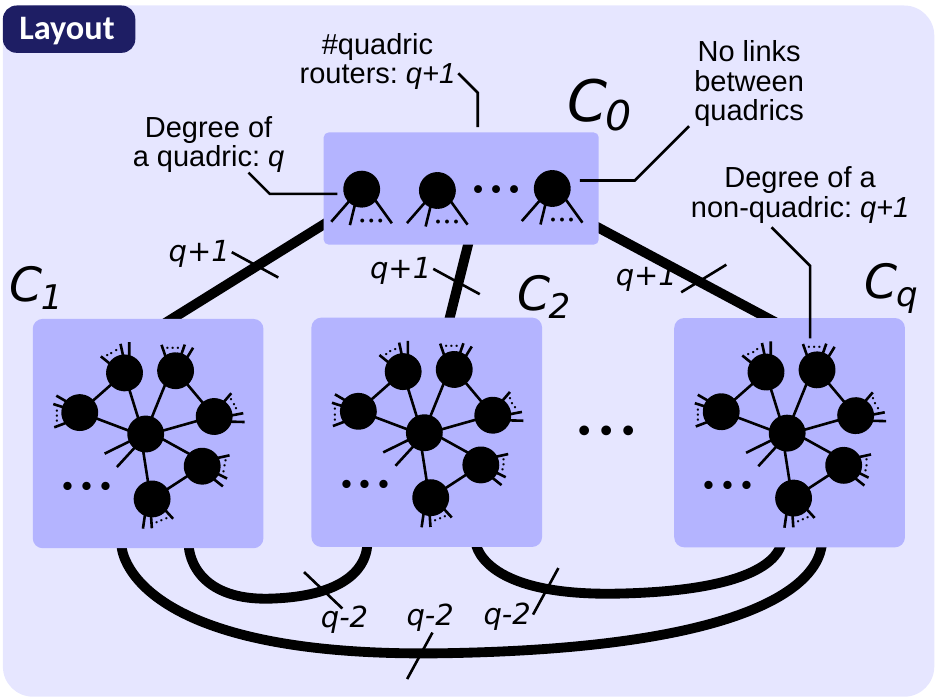}
\caption{The layout.}
\label{fig:layout}
\end{centering}
\end{figure}
\subsubsection{Non-quadrics clusters}
%Visually, $C_i$ has a fan-like structure. $C_i$ has a center element, and all other elements are in triangles containing that center, forming fanblades. This can be seen in Figure~\ref{fig:clustercake_design}.
The layout of a non-quadrics cluster is also rather simple: the edges of a non-quadrics cluster form a fan made up of $\frac{q-1}{2}$ triangles. The fan has a center vertex, %\kldelete{which is a neighbor of one of the quadrics} 
and centers of all fans have a common quadric neighbor. The triangles share only the center, and are otherwise disjoint, giving a fan-blade appearance. This is easily seen in $ER_7$ in Figure~\ref{fig:clustercake_design}, and may be traced out in $ER_3$ in Figure~\ref{fig:dpc}.

\begin{proposition}\label{prop:tris_in_cluster}
The vertices of a cluster form $\frac{q-1}{2}$ triangles, all having the center of the cluster as a common vertex.
\end{proposition}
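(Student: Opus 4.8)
The plan is to read the cluster's structure straight off Property~\ref{prop:er}, letting the triangle clause Property~\ref{prop:er}.\ref{prop:path} do essentially all the work. First I would fix a non-quadric cluster $C_i$ (with $i\ge 1$) and name its center $u$, i.e.\ the neighbor of the seeding quadric $v$ that Algorithm~\ref{alg:racks} places into $C_i$. Since $u$ is adjacent to the quadric $v$, we have $u\in\lone{q}$, so Property~\ref{prop:er}.\ref{prop:lone_adj} tells me $u$ has exactly $\frac{q-1}{2}$ neighbors in \lone{q} and $\frac{q-1}{2}$ in \ltwo{q}, hence exactly $q-1$ non-quadric neighbors in total. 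By construction $C_i$ consists of $u$ together with precisely these $q-1$ vertices, so it suffices to organize the $q-1$ neighbors into triangles through $u$.

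Next I would pair these neighbors using triangles. For any non-quadric neighbor $a$ of $u$, the edge $(u,a)$ joins two non-quadric vertices, so by Property~\ref{prop:er}.\ref{prop:path} it lies in exactly one triangle $\{u,a,b\}$. I would then argue that $b$ is itself a non-quadric neighbor of $u$, and hence lies in $C_i$: $b$ is adjacent to $u$ by definition of the triangle, and $b$ cannot be a quadric, for otherwise the edge $(u,b)$ would be incident with a quadric yet sit inside the triangle $\{u,a,b\}$, contradicting the first clause of Property~\ref{prop:er}.\ref{prop:path}. This defines a map $a\mapsto b$ on the $q-1$ non-quadric neighbors of $u$.

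Finally I would check that this map is a fixed-point-free involution, i.e.\ a perfect matching on the $q-1$ neighbors. It has no fixed points because a triangle has three distinct vertices, so $b\ne a$. It is symmetric because the triangle $\{u,a,b\}$ also contains the edge $(u,b)$, and Property~\ref{prop:er}.\ref{prop:path} makes $\{u,a,b\}$ the \emph{unique} triangle on that edge, so $a$ is exactly the partner of $b$; the same uniqueness shows no two neighbors can share a partner, giving injectivity. Thus the $q-1$ non-quadric neighbors split into $\frac{q-1}{2}$ mutually adjacent pairs, each pair forming a triangle with $u$, and these triangles meet only in $u$ and are otherwise vertex-disjoint, yielding the fan of $\frac{q-1}{2}$ blades. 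The one point demanding care — the main (and fairly modest) obstacle — is using \emph{both} clauses of Property~\ref{prop:er}.\ref{prop:path}: the first to rule out a quadric triangle-partner and keep every partner inside $C_i$, and the second to upgrade the pairing from a mere relation to a genuine perfect matching, which is precisely what pins the triangle count at $\frac{q-1}{2}$.
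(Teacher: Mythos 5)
Your proposal is correct and follows essentially the same route as the paper's proof: identify the cluster as the center plus its $q-1$ non-quadric neighbors via Property~\ref{prop:er}.\ref{prop:lone_adj}, then use the unique-triangle clause of Property~\ref{prop:er}.\ref{prop:path} on each edge from the center to pair up the neighbors. You simply make explicit the fixed-point-free involution and the non-quadric check on the triangle partner, details the paper's terser proof leaves implicit.
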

\begin{proof}
%\kldelete{a center element $c_i \in V_1(q)$ is chosen that is adjacent to a quadric} 
Let $v$ be the quadric chosen as a starter. Then every vertex $c_i$ adjacent 
to $v$ is selected as the center of cluster $C_i$. By Property~\ref{prop:er}.\ref{prop:lone_adj}, $c_i$ has $q-1$ non-quadric neighbors, and with $c_i$, these
%\kldelete{and by Algorithm~\ref{alg:racks}, these, with $c_i$,} 
make up the $q$ elements of the non-quadric cluster $C_i$. So there is an edge between $c_i$ and any other vertex in $C_i$. 

If $u$ is one of the non-center vertices, Property~\ref{prop:er}.\ref{prop:path} tells us that the edge $(c_i,u)$ is contained in exactly one triangle in $C_i$. 
This shows that $C_i$ consists of exactly 
$\frac{q-1}{2}$ edge-disjoint
triangles, each of which contains $c_i$.
\end{proof}

%Let $G_i$ denote the subgraph induced by vertices in cluster $C_i$. We analyze the structure of $G_i$ to derive intra-rack layout and wiring for \flyN.  

%Consider an arbitrary cluster $C_{i\mid i> 0}$ that contains a neighbor $u$ of the selected quadric $v$~(algorithm~\ref{alg:racks}).
% Let $C_i$ contain vertex $u$ that is a neighbor of
% the selected quadric $v$ (algorithm~\ref{alg:racks}). 
%All $q-1$ non-quadric neighbors of $u$ are in $C_i$. Hence, $\abs{C_i}=q$, and there is an edge between $u$ and all other vertices in $C_i$. We refer to $u$ as the \textit{center} of $C_i$.

%Let $w\in C_i$ be a non-quadric neighbor of the center $u$. By Property~\ref{prop:er}.\ref{prop:path},  the edge $(u,w)$ is contained in exactly one triangle in $G_i$.
% In other words, every vertex in $C_i\setminus \{u\}$ is adjacent to $u$ and exactly one more
% vertex in $C_i$. 
%Hence, $G_i$ consists of exactly $\frac{q-1}{2}$ edge disjoint triangles, each of which contains $u$. Visually, $G_i$ has a fan-like structure with each triangle being a fanblade and $u$ being the center of the fan\footnote{For even prime power $q$, each cluster $C_{i\mid 0<i\leq q+1}$ induces a star graph with $\abs{C_i}=q$. $C_0$ consists of a single node which is adjacent to the center of every star.}.

This fan structure gives \fly a modular layout with isomorphic 
structures induced by all $q$ non-quadric clusters. In physical terms, this means that
the $q^2$ non-quadric nodes can be deployed as $q$ identical copies of the same rack. 
%\kldelete{much simpler intra-rack layout than Slimfly has.} %\htorcomment{how do we define simple? Maybe illustrate more (so far it sounds more complex to me tbh} \lmcomment{Does the figure help visualize it? It is just a bunch of triangles arranged as a fan, with the center hooked up to a quadric.} \klcomment{Best to avoid loose terminology like "simpler".} \lmcomment{But I think we \textit{can} say "simple".}
Moreover, the \textit{fan
layout for \fly is generalizable to any odd prime power $q$},
unlike Slim Fly where generator sets and thus
the intra-rack layout can change significantly with router radix.
% This facilitates construction of different \fly networks using.

% This generalizability allows \fly to expand incrementally without rewiring the entire system, as we discuss in Section~\ref{sec:extensible}.
\begin{figure}[ht]
    \centering
    \subfloat[A non-quadric cluster. An arbitrary quadric (in red) generates the center vertices (in bright green). For each center vertex $c$, all non-quadric neighbors of $c$ are added to the cluster. Edges of one such cluster are rendered in black. This figure illustrates Section~\ref{sec:layout_intra}.\label{fig:clustercake_a}]
    {
        \includegraphics[width=0.20\textwidth]{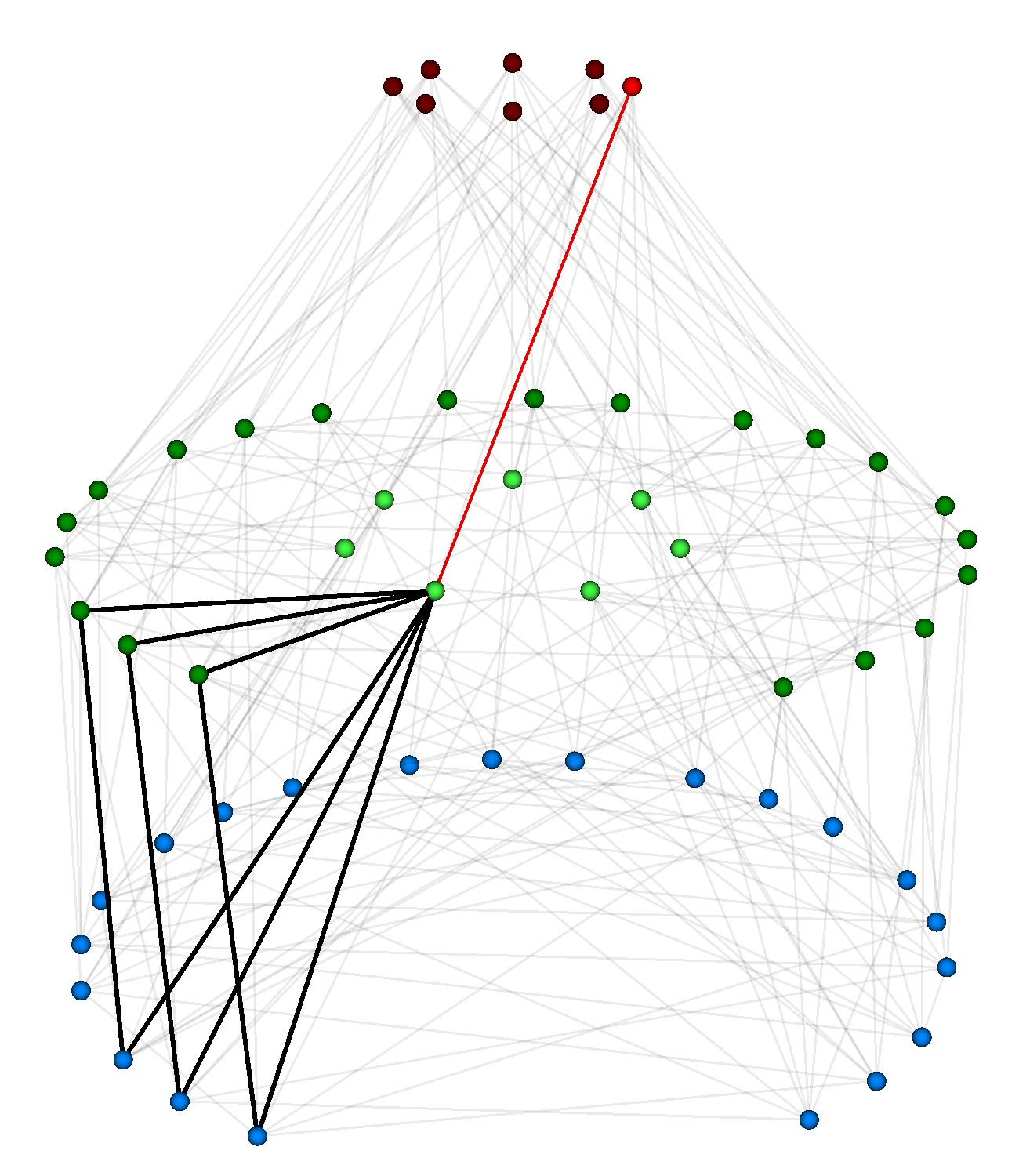}
    }
     \quad
    \subfloat[Inter-cluster connectivity from the cluster shown in (a). Edges to the quadrics are shown in red, edges to the \lone{q} vertices in another cluster in green, and edges to the \ltwo{q} vertices in the other cluster in blue. This figure illustrates Propositions~\ref{claim:C0_Ci} and \ref{claim:Ci_Cj}.\label{fig:clustercake_b}]
    {
        \includegraphics[width=0.20\textwidth]{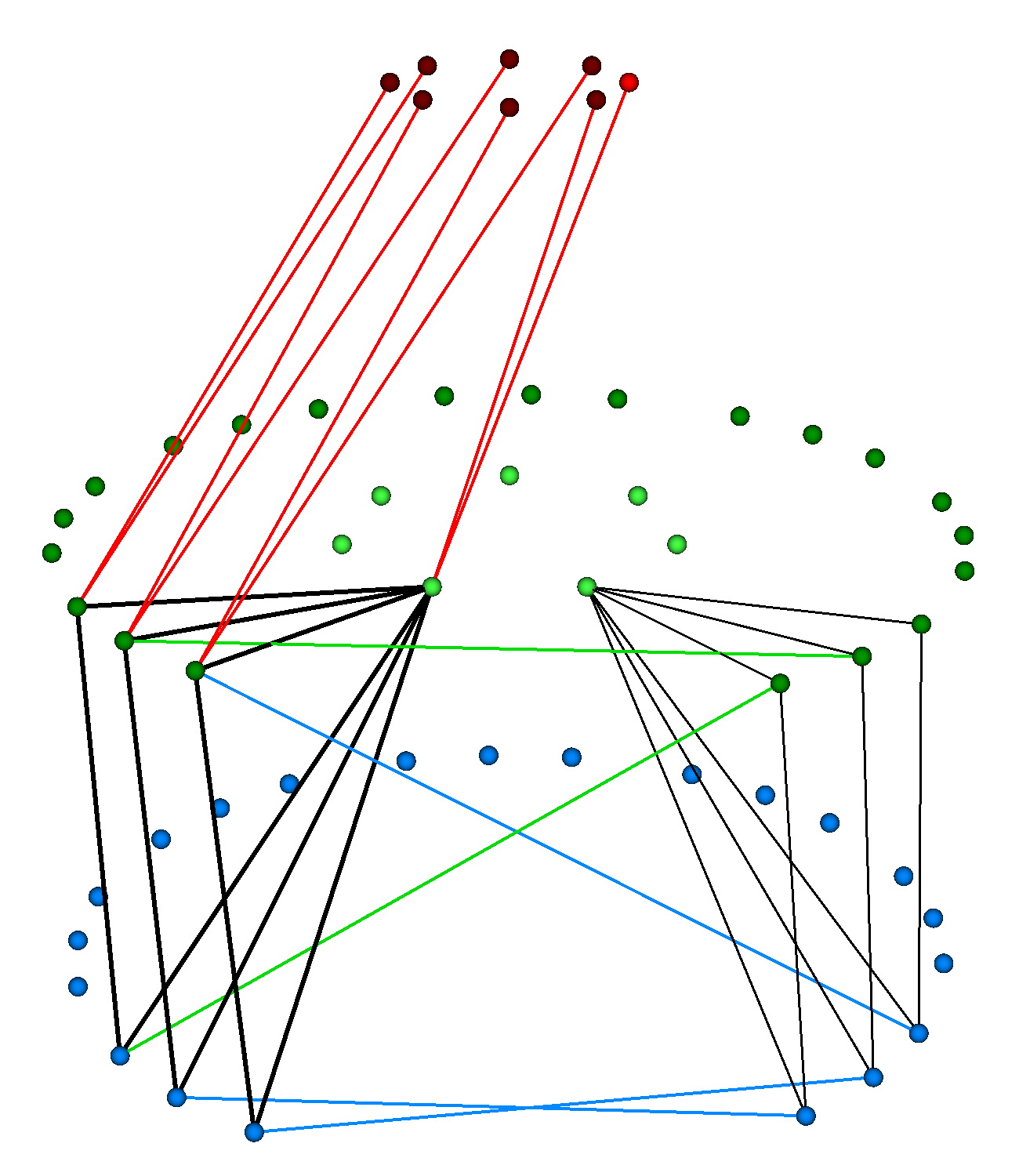}
    }
    \caption{Layout for \fly with $q=7$. 
    The quadrics \quadric{7}~(quadrics) are in red (top layer), \lone{7} are in green (middle layer) and \ltwo{7} are in blue (bottom layer).
    The left figure shows the $\frac{q-1}{2}=3$ fan-blades~(triangles) 
    emanating from the center $c$~(light green) of a non-quadric cluster.
    The right figure shows the $q-2=5$ edges between 
    two non-quadric clusters, and $q+1=8$ edges between a non-quadric cluster
    and quadrics. }
    \label{fig:clustercake_design}
\end{figure}

\subsection{Inter-rack Layout}\label{sec:layout_inter}
There are two types of racks in \fly: one 
quadric rack $C_0$ and $q$ non-quadric
racks $\{C_1, ... C_q\}$. In this section,
we describe the connectivity between two racks.
For ease of notation, we use \lone{q,C_i} and 
\ltwo{q,C_i} to denote vertices of $C_i$ in 
\lone{q} and \ltwo{q} subsets, respectively~(Section~\ref{sec:layout_property}). 
\subsubsection{Connections between $C_i$ and the quadrics}\label{sec:layout_quads}
%There are a total of $q+1$ edges between $C_0$ (the quadrics cluster) and $C_i$. Each vertex in $\lone{q,C_i}$ connects to two quadrics. There are $\frac{q+1}{2}$ vertices in $\lone{q,C_i}$, and each quadric only connects to one vertex in cluster $C_i$. There are no connections between $C_0$ and $\ltwo{q, C_i}$. 
%This is stated more formally in Proposition \ref{claim:C0_Ci}, and the proof gives a detailed explanation of the connections.
\begin{proposition}\label{claim:C0_Ci}
For every cluster $C_i$, with $i>0$:
\begin{enumerate}[leftmargin=*]
    \item Each vertex in \lone{q,C_i} is 
    adjacent to exactly two vertices in 
    $C_0$.\label{claim:C0_Ci_1}
    \item There are exactly $q+1$ links 
between $C_i$ and $C_0$.\label{claim:C0_Ci_2}
    \item Each quadric in $C_0$ 
is adjacent to exactly one vertex in $C_i$.\label{claim:C0_Ci_3}
\end{enumerate} 
\end{proposition}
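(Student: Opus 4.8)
The plan is to dispatch the three parts in increasing order of difficulty, since the first two are essentially bookkeeping built on Property~\ref{prop:er} and the layout of Proposition~\ref{prop:tris_in_cluster}, while the third carries the real structural argument.

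For part~\ref{claim:C0_Ci_1} I would simply invoke Property~\ref{prop:er}.\ref{prop:lone_adj}: every vertex of \lone{q} is adjacent to exactly two quadrics, and since $C_0 = \quadric{q}$ contains \emph{all} quadrics while $\lone{q,C_i}\subseteq\lone{q}$, each vertex of $\lone{q,C_i}$ has precisely two neighbors in $C_0$. Part~\ref{claim:C0_Ci_2} then follows by counting endpoints inside $C_i$: the only vertices of $C_i$ that can touch a quadric are those in $\lone{q,C_i}$, since the vertices of $\ltwo{q,C_i}$ have no quadric neighbor by the definition of \ltwo{q}. By Proposition~\ref{prop:tris_in_cluster} the cluster is the center $c_i$ together with its $q-1$ non-quadric neighbors, of which exactly $\frac{q-1}{2}$ lie in \lone{q}; with the center this gives $\abs{\lone{q,C_i}} = \frac{q+1}{2}$. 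Multiplying by the two edges per vertex from part~\ref{claim:C0_Ci_1} yields $2\cdot\frac{q+1}{2} = q+1$ edges, with no double counting since distinct \lone{q} endpoints produce distinct edges.

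Part~\ref{claim:C0_Ci_3} is the crux. My strategy is to prove only the one-sided bound that \emph{each quadric is adjacent to at most one vertex of $C_i$}, and then close the gap by counting: there are exactly $q+1$ quadrics in $C_0$ and, by part~\ref{claim:C0_Ci_2}, exactly $q+1$ edges between $C_0$ and $C_i$; if no quadric absorbs more than one of these edges, the quadric endpoints are all distinct, forcing each of the $q+1$ quadrics to receive exactly one. To establish the at-most-one bound I would argue by contradiction: suppose a quadric $w$ is adjacent to two distinct vertices $x,y\in C_i$. Both $x$ and $y$ must lie in \lone{q}, as a quadric has no \ltwo{q} neighbor. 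I then split on whether one of them is the center $c_i$. If $x=c_i$, then $w$, $c_i$, $y$ form a triangle (the edge $c_i$-$y$ lives inside the cluster), but this triangle uses the quadric-incident edge $w$-$c_i$, contradicting Property~\ref{prop:er}.\ref{prop:path}. If neither is the center, then $x$ and $y$ are both neighbors of $c_i$, so $x$-$c_i$-$y$ and $x$-$w$-$y$ are two \emph{distinct} length-two paths between $x$ and $y$ (distinct because $c_i$ is a non-quadric and $w$ a quadric, so $c_i\neq w$), contradicting the uniqueness in Property~\ref{prop:er}.\ref{prop:2path}.

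The main obstacle is packaging part~\ref{claim:C0_Ci_3} cleanly: the case analysis must be exhaustive, and it leans on the fan structure of Proposition~\ref{prop:tris_in_cluster}, namely that every non-center vertex of $C_i$ is adjacent to $c_i$, to manufacture either the forbidden quadric-triangle or the second $2$-path. I would verify that the two invoked properties are logically independent and insensitive to the choice of starter quadric $v$, and sanity-check the conclusion by noting that $v$ itself, being one of the two quadric neighbors of $c_i$, is adjacent to exactly the single cluster vertex $c_i$, consistent with the claim.
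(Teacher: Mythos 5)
Your proof is correct and follows essentially the same route as the paper: part~1 from Property~\ref{prop:er}.\ref{prop:lone_adj}, part~2 by counting the $\frac{q+1}{2}$ vertices of $V_1$ in $C_i$, and part~3 by reducing to an at-most-one bound via the edge count and then deriving a contradiction from either a quadric-incident triangle or a duplicated $2$-hop path. The paper phrases the second case as two $2$-hop paths from the quadric to the center $c_i$ rather than between the two cluster vertices, but this is the same argument.
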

\begin{proof}
Proposition~\ref{claim:C0_Ci}.\ref{claim:C0_Ci_1} follows directly from Property~\ref{prop:er}.\ref{prop:lone_adj}.
$C_i$ has $\frac{q+1}{2}$ vertices from \lone{q}: the center 
vertex and its $\frac{q-1}{2}$ neighbors in \lone{q}. Hence, 
there are $q+1$ edges between $C_i$ and $C_0$.
Since there is an average of $1$ link to $C_i$ per quadric,
to prove Proposition~\ref{claim:C0_Ci}.\ref{claim:C0_Ci_3}, it is 
sufficient to show that no quadric is adjacent to more than 
one vertex in $C_i$.
Let $v\in W(q)$ be adjacent to multiple vertices in $C_i$.
Either $v$ lies in a triangle or there are multiple
$2$-hop paths between $v$ and the center of $C_i$.
This contradicts Property~\ref{prop:er}.\ref{prop:path}.
\end{proof}
%using $ER_7$ 
\subsubsection{Connections between the non-quadric clusters}\label{sec:layout_nonquads}
%Let $C_i$ and $C_j$ be two clusters with $0 < i,j \le q$. We now describe the connections between these two clusters. Every vertex in $C_i$ except the center and one non-center vertex in \lone{q}, connects to exactly one vertex in $C_j$. The center does not connect to any vertex in $C_j$. This gives a total of $q-2$ edges between the two clusters. This is seen for $ER_7$ in the green and blue edges in Figure~\ref{fig:clustercake_design}(\subref{fig:clustercake_b}). 
%We state this more formally in the Proposition \ref{claim:Ci_Cj}. The proof explains the connections in more detail.
\begin{proposition}\label{claim:Ci_Cj}
Let $C_i$ and $C_j$ be two clusters with $0 < i,j \le q$ and $i\ne j$. Let $c_i$ denote the center of cluster $C_i$. Then: 
%for any $0<i\leq q$. 
%For every $0<i,j\leq q$ where $i\neq j$:
\begin{enumerate}
    \item Every vertex in \ltwo{q,C_i} is adjacent
    to exactly one vertex in $C_j$.\label{claim:Ci_Cj_1}
    \item There are $q-2$ independent edges between $C_i$ and $C_j$.
    \label{claim:Ci_Cj_2}
    \item There exists a vertex $u'\in$\lone{q,C_i}$\setminus \{c_i\}$ 
    such that every vertex in \lone{q,C_i}$\setminus\{u', c_i\}$ is adjacent
    to exactly one vertex in $C_j$, and
    $u'$ is not adjacent to $C_j$. \label{claim:Ci_Cj_3}
    \label{claim:Ci_Cj_4}
\end{enumerate}
\end{proposition}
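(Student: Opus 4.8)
The plan is to derive all three parts from two structural facts about how clusters sit inside $ER_q$: that Algorithm~\ref{alg:racks} places \emph{all} non-quadric neighbors of a center $c_i$ inside $C_i$ itself, and that there is a unique $2$-path between any pair of vertices (Property~\ref{prop:er}.\ref{prop:2path}). First I would prove a single auxiliary claim that does most of the work: \emph{every vertex of $C_i$ has at most one neighbor in $C_j$}. Since every vertex of $C_i\cup C_j$ is non-quadric and $c_j$'s non-quadric neighbors are exactly the non-center vertices of $C_j$, any $x\in C_i$ satisfies $x\not\sim c_j$; hence each neighbor of $x$ in $C_j$ is also a neighbor of $c_j$, and two such neighbors would give two distinct $2$-paths from $x$ to $c_j$, contradicting uniqueness. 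The symmetric statement (each vertex of $C_j$ has at most one neighbor in $C_i$) follows by swapping $i$ and $j$, so the edges between $C_i$ and $C_j$ automatically form a matching; this already supplies the \emph{independence} asserted in part~2.

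Next I would prove part~1 by a degree count. Fix $w\in V_2(q,C_i)$. It has degree $q+1$, none to quadrics (by definition of $V_2(q)$), and exactly two staying inside $C_i$ — namely $c_i$ and $w$'s unique triangle partner, from the fan structure of Proposition~\ref{prop:tris_in_cluster}. The remaining $q-1$ neighbors are distributed among the $q-1$ other non-quadric clusters, and by the auxiliary claim each receives at most one; since there are exactly $q-1$ of them, $w$ sends exactly one edge to each, in particular to $C_j$.

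The hard part will be part~3, namely pinning down \emph{which} cluster each $V_1$ vertex misses. The same count shows that $x\in V_1(q,C_i)\setminus\{c_i\}$ splits its $q+1$ neighbors as two quadrics, two vertices inside $C_i$, and $q-3$ spread (at most one each) over the $q-1$ other clusters, so $x$ misses exactly two of them. Summing over all $\frac{q-1}{2}$ such $x$ gives $q-1$ total ``misses'' against $q-1$ clusters, so it suffices to show each cluster is missed at most once. The key step: if $x$ misses $C_j$, then $x\not\sim c_j$ forces the unique $2$-path from $x$ to $c_j$ through a neighbor of $c_j$; that neighbor cannot be a non-center vertex of $C_j$ (it would be a neighbor of $x$ in $C_j$) and cannot be the starter quadric $v$ (since $x\sim v$ would put $v,c_i,x$ in a triangle, excluded by Property~\ref{prop:er}.\ref{prop:path}), so it must be the \emph{other} quadric neighbor $v_j$ of $c_j$, i.e.\ $x\sim v_j$. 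But Proposition~\ref{claim:C0_Ci}.\ref{claim:C0_Ci_3} says $v_j$ is adjacent to exactly one vertex of $C_i$, so at most one $x$ can miss $C_j$. Combined with the count, each cluster is missed by exactly one $u'\in V_1(q,C_i)\setminus\{c_i\}$, which is part~3.

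Finally, part~2's count drops out: the cross edges come from the $\frac{q-1}{2}$ vertices of $V_2(q,C_i)$ (one each, part~1), the $\frac{q-1}{2}-1=\frac{q-3}{2}$ vertices of $V_1(q,C_i)\setminus\{c_i\}$ other than $u'$ (one each, part~3), and none from $c_i$, totalling $\frac{q-1}{2}+\frac{q-3}{2}=q-2$, independent by the auxiliary claim. The only routine care I would take is justifying each ``two neighbors inside $C_i$'' and each ``$x\not\sim v$'' step from the triangle/fan structure, but the genuine crux is the $v_j$ argument, which converts a non-adjacency between a vertex and an entire cluster into a concrete adjacency with one quadric and then invokes the one-quadric-per-cluster fact.
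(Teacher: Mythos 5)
Your proof is correct, but it routes through the three claims in a different order than the paper and replaces the paper's weakest step with a sharper argument. The paper proves part~1 essentially as you do (counting the $q-3$ or $q-1$ edges leaving a non-center vertex for other clusters and showing no two can land in the same cluster, via uniqueness of $2$-paths to that cluster's center); it then proves part~2 by summing these counts to get $(q-1)(q-2)$ cross-cluster edges and invoking a diameter-$2$ contradiction sketch to argue the edges split evenly as $q-2$ per cluster pair; part~3 is then deduced from parts~1 and~2 by counting. You invert this: you prove part~3 directly, by showing that any $x\in V_1(q,C_i)\setminus\{c_i\}$ missing $C_j$ must be adjacent to the second quadric neighbor $v_j$ of $c_j$ (the unique $2$-path from $x$ to $c_j$ having nowhere else to go), and then capping the misses at one per cluster via Proposition~\ref{claim:C0_Ci}.\ref{claim:C0_Ci_3}; part~2 then falls out as $\frac{q-1}{2}+\frac{q-3}{2}=q-2$. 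What your approach buys is rigor exactly where the paper is thinnest: the paper's part~2 argument is explicitly only a ``brief sketch,'' and its characterization of $u'$ as sharing a quadric neighbor with $c_j$ is stated as an afterthought, whereas you promote that characterization to the engine of the proof. What the paper's route buys is brevity and a cleaner global picture (the $(q-1)(q-2)$ total immediately suggests the uniform $q-2$ per pair). Both are valid; your matching/independence observation via the symmetric auxiliary claim is also slightly more explicit than the paper's.
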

\begin{proof}
For a vertex $w\in C_i$ such that $w\neq c_i$, let
$S(w)$ denote the set of edges between $w$ and all
clusters other than $C_i$ and $C_0$. 
If $w\in$\lone{q,C_i}, then $\abs{S(w)}=q-3$, and if 
$w\in$\ltwo{q,C_i}, then $\abs{S(w)}=q-1$. Every edge in $S(w)$ must connect 
$w$ to a unique cluster, otherwise there will be
multiple 2-hop paths from $w$ to the center of a cluster. 
This proves Proposition~\ref{claim:Ci_Cj}.\ref{claim:Ci_Cj_1}.

We give a brief sketch of the proof for 
Proposition~\ref{claim:Ci_Cj}.\ref{claim:Ci_Cj_2}
for brevity.
Adding $\abs{S(w)}$ for all vertices $w\in C_i$,
we see that there are $(q-1)(q-2)$ edges between 
$C_i$ and other non-quadric clusters. If 
Proposition~\ref{claim:Ci_Cj}.\ref{claim:Ci_Cj_2}
is false, then there exists a non-quadric
vertex at least $3$-hops away from the center $c_i$. 
This is a contradiction since \ER{q} has diameter 2.

From Propositions~\ref{claim:Ci_Cj}.\ref{claim:Ci_Cj_1}
and \ref{claim:Ci_Cj}.\ref{claim:Ci_Cj_2}, we see that there
are $(q-3)/2$ independent edges between \lone{q,C_i}
and $C_j$. None of these edges are incident to $c_i$. Further, $\abs{\lone{q,C_i}\setminus\{c_i\}}=\frac{q-1}{2}$. 
Hence, there must be exactly one $u'\in$\lone{q,C_i}$\setminus\{c_i\}$
which is not adjacent to $C_j$. Vertex
$u'$ can be identified easily as it shares a common quadric
neighbor with center $c_j$.
\end{proof}
%This is seen for $ER_7$ in the green and blue edges in
%Figure~\ref{fig:clustercake_design}(\subref{fig:clustercake_b}). 
Propositions~\ref{claim:C0_Ci} and \ref{claim:Ci_Cj}
show that there are $q-2$ links between all $\frac{q(q-1)}{2}$ pairs of non-quadric racks, and
$q+1$ links between quadric rack and each of the $q$ non-quadric racks. Thus, PolarFly exhibits a nearly
balanced all-to-all connectivity between racks. Moreover, links between any pair of racks can be
\textit{bundled} into
cost-effective solutions such as a single multi-core fiber~\cite{bundlefly_2020}.

Using $ER_7$ as an example, Proposition~\ref{claim:C0_Ci} may be seen in the red edges of
Figure~\ref{fig:clustercake_design}(\subref{fig:clustercake_b}), and Proposition~\ref{claim:Ci_Cj} may be seen in the green and blue edges of 
Figure~\ref{fig:clustercake_design}(\subref{fig:clustercake_b}).
% $q+1$ \htorcomment{I'd not use asymptotic notation here, can be more precise - i.e., what share of the cables can be bundled up? It seems to me that it's close to a balanced all-to-all connectivity, which is the same for other such topologies (SF, DF)?} independent links
% between any two racks, that can be \textit{bundled} into
% cost-effective solutions such as a single multi-core fiber~\ref{x}. Thus, PolarFly racks exhibit a nearly
% balanced all-to-all 
% For any pair of non-quadric
% clusters, these links connect unique routers 
% in the corresponding racks.
% \klcomment{Write something about how the 
% connectivity can be established with shufflers/swizzlers easily.}
% \subsection{ER Graphs}
% \klcomment{Only for even degrees (odd $q$)}
%     \begin{itemize}
%         \item Clusters $\leftrightarrow$ racks
%         \item $C_0 \leftrightarrow C_i$ connections
%         \item $C_i \leftrightarrow C_j$ connections
%         \item can be bundled
%     \end{itemize} 

% \subsection{Abas Graphs}
% \klcomment{TODO}
\subsection{Triangles and Other Polygons in \flyN}
\fly has many triangles, and no quadrangles. The absence of quadrangles is because each pair of non-quadric vertices has exactly one path of length $2$, by Property~\ref{prop:er}.\ref{prop:2path}. If \fly had a quadrangle, there would then be vertices with two paths of length $2$ between them.

We discuss the abundance of triangles and the implications of this throughout the rest of this section. This becomes important when we analyze the path diversity of \fly in the next section.

\subsubsection{The Number and Types of Triangles in \fly}
Triangles are of two kinds: those entirely internal to a non-quadric cluster, and those linking three distinct non-quadric clusters together. 

\begin{proposition}\label{prop:num_tris}
There are ${q+1 \choose 3}$ triangles in $ER_q$.
\end{proposition}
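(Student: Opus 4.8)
The plan is to count triangles by counting their edges, leveraging Property~\ref{prop:er}.\ref{prop:path}. First I would observe that every triangle in $ER_q$ lies entirely among the non-quadric vertices: since any edge incident to a quadric participates in no triangle, all three edges of a triangle must join two non-quadric vertices. Conversely, the same property guarantees that each non-quadric edge lies in exactly one triangle. Hence, writing $T$ for the number of triangles and $E_{nq}$ for the number of edges joining two non-quadric vertices, a double count of (triangle, edge) incidences gives $3T = E_{nq}$: each triangle contributes its three distinct non-quadric edges, and each non-quadric edge is charged to exactly one triangle.

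The remaining task is to compute $E_{nq}$, which I would do by a handshake count that carefully isolates the self-loops at quadrics. There are $q^2$ non-quadric vertices and $q+1$ quadrics, all of degree $q+1$; a quadric's self-loop accounts for one unit of its degree, so each quadric has exactly $q$ edges to distinct vertices, all landing in $\lone{q}$ by Property~\ref{prop:er}.\ref{prop:quad_adj}. Summing non-loop degrees over all vertices gives $q^2(q+1) + (q+1)q = q(q+1)^2$, so the number of non-loop edges is $\tfrac{1}{2}q(q+1)^2$. Of these, the quadric-incident edges number $q(q+1)$ (each quadric meets $q$ vertices, and no two quadrics are adjacent, so there is no double counting). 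Subtracting, $E_{nq} = \tfrac{1}{2}q(q+1)^2 - q(q+1) = \tfrac{1}{2}q(q+1)(q-1)$.

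Combining the two steps yields $T = E_{nq}/3 = \tfrac{1}{6}q(q+1)(q-1) = \binom{q+1}{3}$, as claimed. As an independent cross-check I would recompute $E_{nq}$ directly from the subset structure, using the $\lone{q}$ and $\ltwo{q}$ adjacency counts recorded in Property~\ref{prop:er}: the numbers of $\lone{q}$--$\lone{q}$, $\lone{q}$--$\ltwo{q}$, and $\ltwo{q}$--$\ltwo{q}$ edges are $\tfrac{q(q+1)(q-1)}{8}$, $\tfrac{q(q+1)(q-1)}{4}$, and $\tfrac{q(q-1)(q+1)}{8}$ respectively, and these sum once more to $\tfrac{1}{2}q(q+1)(q-1)$.

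The only real obstacle is bookkeeping: getting the self-loop convention right (the loop at each quadric contributes one, not two, to the degree $q+1$) and ensuring no edge is double counted when separating quadric-incident edges from non-quadric edges. Once $E_{nq}$ is pinned down the conclusion is immediate, and no finite-field arithmetic is needed beyond the structural facts already recorded in Property~\ref{prop:er}.
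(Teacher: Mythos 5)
Your proposal is correct and follows essentially the same route as the paper's proof: both count the $\tfrac{1}{2}q(q+1)^2 - q(q+1) = \tfrac{1}{2}q(q+1)(q-1)$ edges not incident to a quadric and divide by $3$ using Property~\ref{prop:er}.\ref{prop:path}. Your explicit handling of the self-loop convention and the $V_1$/$V_2$ cross-check are welcome extra care, but they do not change the argument.
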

\begin{proof}
 By Property~\ref{prop:er}.\ref{prop:path}, edges not incident to a quadric participate in one triangle, and edges incident to a quadric participate in no triangles. There are $\frac{q(q+1)^2}{2}$ total edges, and $q(q+1)$ edges incident to a quadric. Thus $$\frac{(q+1) q (q-1)}{2}$$ edges are not incident to a quadric. This implies that there are $$\frac{(q+1) q (q-1)}{6}= {q+1 \choose 3}$$ triangles. 
\end{proof}
\begin{proposition}\label{prop:tris}
The triangles of $ER_q$ either join three distinct non-quadric clusters or are internal to a single non-quadric cluster. In particular,
\begin{enumerate}[label=(\alph*),itemsep=1.25ex]
    \item ${q \choose 3}$ triangles join non-quadric clusters.\label{prop:tris_intercluster}
    \item ${q \choose 2}$ triangles are internal to non-quadric clusters. \label{prop:tris_internal}
\end{enumerate}
\end{proposition}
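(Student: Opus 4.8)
The plan is to combine the exact triangle count already obtained in Proposition~\ref{prop:num_tris} with a structural dichotomy describing how a triangle can meet the clusters, and then to read off the two cases by a one-line counting argument. The skeleton is: (i) every triangle lies among non-quadric vertices; (ii) a triangle cannot have exactly two of its vertices in one cluster; (iii) count.

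First I would record that every triangle lives entirely among non-quadric vertices. By Property~\ref{prop:er}.\ref{prop:path} no edge incident to a quadric lies in any triangle, so a triangle can contain no quadric vertex (otherwise its two incident triangle-edges would touch a quadric). Since Proposition~\ref{claim:assign1} assigns each non-quadric vertex to a unique non-quadric cluster $C_1,\dots,C_q$, the three vertices of any triangle are distributed among these $q$ clusters.

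The heart of the argument is the claim that a triangle cannot straddle exactly two clusters. Suppose a triangle $T$ has two adjacent vertices $a,b\in C_i$. By Proposition~\ref{prop:tris_in_cluster} the induced subgraph on $C_i$ is a fan of $\tfrac{q-1}{2}$ edge-disjoint ``blade'' triangles sharing the center $c_i$, so the only edges inside $C_i$ are the spokes $(c_i,u)$ and the blade edges joining two non-central vertices; in particular the edge $(a,b)$ already lies in a triangle contained in $C_i$ (the blade through $b$ when $a=c_i$, or the blade $\{c_i,a,b\}$ when both are non-central). But $(a,b)$ is a non-quadric edge, so by Property~\ref{prop:er}.\ref{prop:path} it lies in exactly one triangle; hence $T$ is that blade and its third vertex is in $C_i$ too. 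This rules out the ``exactly two'' case, leaving the dichotomy: every triangle is either internal to one cluster or joins three pairwise-distinct clusters.

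Finally I would count. The internal triangles are exactly the blades: each of the $q$ clusters contributes $\tfrac{q-1}{2}$ of them by Proposition~\ref{prop:tris_in_cluster}, and the windmill structure guarantees no others inside a cluster, giving $q\cdot\tfrac{q-1}{2}=\binom{q}{2}$ and establishing part~\ref{prop:tris_internal}. By the dichotomy the remaining triangles are precisely those joining three distinct clusters, so subtracting from the total $\binom{q+1}{3}$ of Proposition~\ref{prop:num_tris} and using Pascal's identity $\binom{q}{3}+\binom{q}{2}=\binom{q+1}{3}$ yields $\binom{q}{3}$, which is part~\ref{prop:tris_intercluster}. I expect the structural dichotomy to be the only real obstacle, since it hinges on knowing the induced subgraph on a cluster is exactly the windmill (no chords between non-partner blade vertices) and then invoking uniqueness of the triangle on a non-quadric edge; once that is secured, the rest is immediate arithmetic.
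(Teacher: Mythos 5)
Your proof is correct, and it rearranges the paper's bookkeeping rather than replicating it. Both arguments rest on the same two pillars --- the windmill structure of a cluster (Proposition~\ref{prop:tris_in_cluster}) and the fact that every non-quadric edge lies in exactly one triangle (Property~\ref{prop:er}.\ref{prop:path}) --- and both use the dichotomy that a triangle cannot have exactly two vertices in one cluster. The difference is in which class gets counted directly. The paper counts the inter-cluster triangles first, by observing there are $\binom{q}{2}(q-2)$ inter-cluster edges (this leans on Proposition~\ref{claim:Ci_Cj}), each in one triangle, each such triangle contributing three inter-cluster edges, giving $\binom{q}{3}$; it then counts the $\binom{q}{2}$ internal blades and uses the identity $\binom{q}{3}+\binom{q}{2}=\binom{q+1}{3}$ together with Proposition~\ref{prop:num_tris} to certify that nothing was missed. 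You instead prove the dichotomy up front (including the small but genuinely needed observations that no triangle contains a quadric and that a cluster's induced subgraph has no chords outside the blades), count only the internal triangles, and obtain the inter-cluster count by subtraction from the total via Pascal's identity. What your route buys is independence from the inter-cluster edge census of Proposition~\ref{claim:Ci_Cj}; what the paper's route buys is an independent direct count of the inter-cluster triangles, so that the arithmetic identity serves as a consistency check rather than as the source of one of the two numbers. Both are sound; yours is marginally more self-contained.
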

\begin{proof}
There are a total of $\binom{q}{2}(q-2)$ inter-cluster edges across non-quadric clusters. From Property~\ref{prop:er}.\ref{prop:2path}, each such edge participates in exactly one triangle.  So there are 
$$\frac{q(q-1)(q-2)}{2 \cdot 3} = {q\choose 3}$$ such inter-cluster triangles. Each of these must be made up of three distinct clusters: if not, then the edge $(a,b)$ internal to a cluster will also be on a triangle entirely internal to that cluster, thus there will be two length-$2$ paths between $a$ and $b$, which cannot occur.

There are $\frac{q-1}{2}$ triangles internal to a non-quadric cluster, by Proposition~\ref{prop:tris_in_cluster}, and there are $q$ such clusters, so there are $$q\cdot \frac{q-1}{2} = {q \choose 2}$$ triangles internal to non-quadric clusters.

Finally, $${q\choose 3}+{q \choose 2} = {q+1\choose 3},$$ so this accounts for all triangles in $ER_q$, by Proposition~\ref{prop:num_tris}.
\end{proof}

\subsubsection{Intra-cluster triangles}
The $\frac{q-1}{2}$ internal triangles in a non-quadric cluster share the cluster center as a triangle vertex, giving the edges of the cluster the form of a triangle fan-out. They pairwise share no other vertices. 

If $q\equiv 1 \mod 4$, the vertices of each internal triangle consist of the center and either two vertices from $V_1$, or two vertices from $V_2$. 

If $q\equiv 3 \mod 4$, the vertices of each internal triangle consist of the center, an element of $V_1$ and an element of $V_2$. This may be seen in Figures~\ref{fig:clustercake_b} and \ref{fig:comparecontrast_17_19}.

\subsubsection{Inter-cluster Triangles and a Block Design on Clusters}
{
The remaining triangles are all inter-cluster triangles.
In this section, we prove the following Theorem~\ref{th:clustertriplets_triangles}, which says that every non-quadric cluster triplet is joined by one triangle.

This gives rise to a $3-(q, 3, 1)$ block design, where the $q$ non-quadric clusters are the points, triangles joining cluster triplets are the blocks, and each set of $3$ points appears in $1$ block. 
Block designs are well known combinatorial structures that express symmetries in terms of the points and blocks  \cite{van2001course}, and are therefore of interest in constructing networks.
%\newline\newline
\begin{theorem}\label{th:clustertriplets_triangles} Every triplet of non-quadric clusters under any cluster layout is connected by exactly one triangle. 
\end{theorem}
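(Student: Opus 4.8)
The plan is to exploit the exact numerical match between the two sides of the claimed correspondence. By Proposition~\ref{prop:tris}.\ref{prop:tris_intercluster} there are exactly $\binom{q}{3}$ inter-cluster triangles, and each such triangle has its three vertices in three \emph{distinct} non-quadric clusters, so it determines a well-defined unordered triplet of clusters. Since there are also exactly $\binom{q}{3}$ triplets of non-quadric clusters, the assignment ``triangle $\mapsto$ triplet'' is a map between two finite sets of equal cardinality. Consequently, to prove the theorem it suffices to show this map is \emph{injective}: once no triplet receives two triangles, equality of cardinalities forces every triplet to receive exactly one, which is precisely the statement.

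First I would localize the problem to a single pair of clusters. Fix $C_i$ and $C_j$. By Proposition~\ref{claim:Ci_Cj}.\ref{claim:Ci_Cj_2} there are exactly $q-2$ edges between them, forming a matching. Every such edge joins two non-quadric vertices, so by Property~\ref{prop:er}.\ref{prop:path} it lies in exactly one triangle; that triangle cannot be internal to a cluster (its defining edge already crosses clusters), so by Proposition~\ref{prop:tris} it joins $C_i$, $C_j$, and a uniquely determined third cluster. This yields a map $e \mapsto k(e)$ from the $q-2$ matching edges to the $q-2$ admissible third clusters, and the triangles joining a fixed triplet $\{C_i,C_j,C_k\}$ are in bijection with the edges $e$ satisfying $k(e)=k$. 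Thus injectivity of the global map is equivalent to $e \mapsto k(e)$ being a bijection for every pair, i.e.\ to the statement that two distinct matching edges never complete into triangles with the same third cluster.

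The heart of the argument — and the step I expect to be hardest — is ruling out two such ``parallel'' triangles $\{a,b,c\}$ and $\{a',b',c'\}$ with $a,a'\in C_i$, $b,b'\in C_j$, $c,c'\in C_k$ and all six vertices distinct. A purely local attack through the unique-length-$2$-path property (Property~\ref{prop:er}.\ref{prop:2path}) is delicate, since the six vertices together with their six triangle-edges do not by themselves form a forbidden quadrangle, so no contradiction is immediate. I would therefore pass to the underlying polarity. A triangle of $ER_q$ is a self-polar triangle of the defining conic, and for a non-central vertex $a$ the cluster containing $a$ is pinned down by the single point in which its opposite side $a^\perp$ meets the tangent line $\ell = v^\perp$ at the starter quadric $v$ (all cluster centers lie on $\ell$). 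Hence the triplet of clusters joined by a self-polar triangle equals the triple of points in which its three sides meet $\ell$. The bijection then reduces to a concrete algebraic fact over $\mathbb{F}_q$: three distinct points of $\ell\setminus\{v\}$ determine a \emph{unique} self-polar triangle whose three sides pass through them, which one verifies by writing each vertex as the cross product of the other two (as in Equation~\eqref{eq:cross_product}) and solving the resulting self-polarity-plus-incidence equations.

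Feeding this uniqueness back through the first two steps shows $e\mapsto k(e)$ is injective, hence bijective, for every pair of clusters, so every triplet is joined by exactly one triangle. As an independent sanity check on the equal-cardinality input that makes injectivity sufficient, one can confirm that summing the ``exactly one triangle per edge'' of Property~\ref{prop:er}.\ref{prop:path} over the $\binom{q}{2}(q-2)$ inter-cluster edges reproduces the $\binom{q}{3}$ inter-cluster triangles of Proposition~\ref{prop:tris}.\ref{prop:tris_intercluster}.
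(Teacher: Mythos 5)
Your proposal is correct in outline and shares the paper's skeleton --- observe that the $\binom{q}{3}$ inter-cluster triangles and the $\binom{q}{3}$ cluster triplets match in number, prove ``at most one triangle per triplet,'' and finish by pigeonhole --- but it takes a genuinely different route through the middle. The paper gets ``at most one'' by first proving it for triplets containing one specially-normalized cluster (Lemma~\ref{lemma:one_tri_special_case}, where the center has the form $(1,x_1,x_2)$) and then transporting that to all triplets via Parsons' automorphism theorem (Theorem~\ref{th:aut_w_path} and Corollary~\ref{cor:clustertriplets_dualtriangles}). You instead avoid the automorphism machinery entirely by working in the polarity picture: a triangle is a self-polar triangle of the conic, the centers are exactly the points of the tangent line $\ell=v^\perp$ other than $v$, and a non-center vertex's cluster is read off from where its opposite side meets $\ell$; the claim then becomes a uniqueness statement for self-polar triangles with prescribed traces on $\ell$, to be settled for an \emph{arbitrary} triple of centers by the cross-product computation. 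That is a real trade: you drop a nontrivial imported symmetry result at the cost of running the algebra in full generality rather than for one normalized center. (Your intermediate ``matching edge $e\mapsto k(e)$'' step is harmless but buys nothing --- it is a restatement of ``at most one triangle per triplet,'' since each triangle joining $\{C_i,C_j,C_k\}$ contains exactly one of the $q-2$ edges between $C_i$ and $C_j$ by Property~\ref{prop:er}.\ref{prop:path}.)

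One detail you must not skip when you actually ``solve the self-polarity-plus-incidence equations'': writing $b=a\times c_2$ and $c=a\times c_3$ and imposing $b\cdot c=0$ gives, via $(a\times u)\cdot(a\times w)=(a\cdot a)(u\cdot w)-(a\cdot u)(a\cdot w)$, a \emph{quadratic} condition on the point $a$ of the line $c_1^\perp$, so on its face you only get ``at most two'' triangles per triplet --- and $\binom{q}{3}$ triangles spread at most two per triplet over $\binom{q}{3}$ triplets does not force exactly one. The argument closes only because one of the two roots is always the degenerate solution $a=b=c=v$ (the starter quadric lies on every $c_i^\perp$ and satisfies the quadratic since $v\cdot v=v\cdot c_i=0$), which does not yield a triangle; discarding it is precisely the step the paper makes explicit at the end of Lemma~\ref{lemma:one_tri_special_case}. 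With that root identified and removed, your version goes through.
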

%\newline\newline
The theorem is a consequence of 
the symmetry between all length-$2$ paths in $ER_q$ that have a quadric as the intermediate node. %Corollary 5 
This symmetry was first shown in \cite{parsons_1976}, and is restated below as Theorem~\ref{th:aut_w_path}. 
%(There are several other interesting symmetries on paths and on triangles in $ER_q$, also shown in \cite{parsons_1976}.) 

To prove Theorem~\ref{th:clustertriplets_triangles}, it suffices to show that all non-quadric triplets are joined by at most one triangle. The theorem then follows from an application of the pigeonhole principle.

To do this, we show a corollary of Theorem~\ref{th:aut_w_path} that expresses a symmetry on non-quadric cluster triplets. A technical lemma exhibits an example of such triplets that are joined by at most one triangle. The symmetry in the corollary then implies that this is true of all non-quadric triplets in $ER_q$.

%Theorem~\ref{th:clustertriplets_triangles} follows from a corollary of Theorem~\ref{th:aut_w_path} that expresses a symmetry on non-quadric cluster triplets. A technical lemma exhibits a set of such triplets having the property that each is joined by at most one triangle, and the symmetry in the corollary implies that this is true of all non-quadric triplets. An application of the pigeonhole principle then completes the proof.

%Theorem~\ref{th:clustertriplets_triangles} follows from a corollary to Theorem~\ref{th:aut_w_path} that expresses symmetry on non-quadric cluster triplets, then a technical lemma and an application of the pigeonhole principle to the number of non-quadric cluster triplets.

%We then derive a corollary which says that if one non-quadric cluster is such that all non-quadric cluster triplets including that cluster has a certain property, then all non-quadric cluster triplets do, and show one type of cluster that has this property. The theorem then follows by showing that such a cluster actually exists, and applying the pigeonhole principle to the number of non-quadric cluster triplets.
}
\begin{theorem}\label{th:aut_w_path} \cite[Corollary 5]{parsons_1976}
Let $(s_{0},w_0,d_{0})$ and $(s_{1},w_1,d_{1})$ be paths of length $2$ in $ER_q$, where $s_{i}$ and $d_i \in V_1$ and $w_i \in W$, the quadrics cluster. Then there exists some automorphism $\theta$ of $ER_q$ such that $\theta(w_0) = w_1$, $\theta(s_{0}) = s_{1}$ and $\theta(d_{0}) = d_{1}$. \end{theorem}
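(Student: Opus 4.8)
The plan is to produce the required automorphisms from the geometric description of $ER_q$ as the polarity graph of $\textrm{PG}(2,q)$ (Section~\ref{sec:polarity}): vertices are points $[a]$, and $[a]\sim[b]$ iff $a\cdot b=0$, the orthogonality being induced by the quadratic form $Q(x)=x_1^2+x_2^2+x_3^2$. The first observation I would record is that any collineation of $\textrm{PG}(2,q)$ preserving $Q$ up to a nonzero scalar also preserves the associated bilinear form up to that scalar, hence preserves orthogonality, and therefore descends to a graph automorphism of $ER_q$. These collineations are exactly the projective isometries $PGO_3(q)$, and since $Q$ is nondegenerate for odd $q$, its zero set $W$ is a nondegenerate conic with $|W|=q+1$; classical finite geometry then gives $G:=PGO_3(q)\cong PGL_2(q)$, acting on $W$ just as $PGL_2(q)$ acts on the projective line $\textrm{PG}(1,q)$. (No field automorphisms are needed, since $Q$ has prime-field coefficients, so the linear isometries already suffice.) I would set this up once, so the rest of the argument is purely about the action of $G\le\mathrm{Aut}(ER_q)$.

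With this in hand the theorem reduces to two transitivity statements. First, because $G\cong PGL_2(q)$ is (sharply $3$-, hence in particular) transitive on $W$, I can choose $\theta_1\in G$ with $\theta_1(w_0)=w_1$, carrying the first path to $(s_0',w_1,d_0')$. It then suffices to find an automorphism fixing $w_1$ and sending the ordered pair $(s_0',d_0')$ to $(s_1,d_1)$. By Property~\ref{prop:er}.\ref{prop:quad_adj}, the neighbors of a quadric $w$ are exactly its $q$ neighbors in $V_1$, so the crux becomes: the point-stabilizer $G_{w}$ acts $2$-transitively on these $q$ neighbors, and I would apply this to $w=w_1$ to obtain $\theta_2\in G_{w_1}$ with $\theta_2(s_0')=s_1,\ \theta_2(d_0')=d_1$. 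Then $\theta=\theta_2\circ\theta_1$ is the desired automorphism.

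The geometric heart of the argument is the claim that the $q$ neighbors of $w$ are precisely the $q$ points other than $w$ on the tangent line $w^\perp$ (the polar of $w$, which in odd characteristic meets $W$ only at $w$); thus $G_w$ fixes $w^\perp$ setwise and acts on these $q$ points. To prove this action is $2$-transitive I would work in standard coordinates, using that every nondegenerate conic is $PGL_3$-equivalent to $\{XZ=Y^2\}$, parametrized by $[1:t:t^2]$ together with $[0:0:1]$, on which $G$ acts by the M\"obius maps $t\mapsto\frac{at+b}{ct+d}$. Taking $w=[0:0:1]$ (the point $t=\infty$), the stabilizer $G_w$ consists of the affine maps $t\mapsto at+b$ with $a\ne0$; writing out the realizing linear maps and applying them to the tangent line $\{X=0\}$, parametrized by $[0:1:\tau]$, a short computation shows they act as $\tau\mapsto a\tau+2b$. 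As $(a,b)$ ranges over $\mathbb{F}_q^\times\times\mathbb{F}_q$ this realizes the full affine group $AGL_1(q)$, which is sharply $2$-transitive on the $q$ values $\tau$, as required.

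I expect this last step to be the main obstacle. Everything else is bookkeeping about the isometry group, but one must take care that the point-stabilizer is acting $2$-transitively on the \emph{tangent-line neighbors} (which lie in $V_1$), not merely on the remaining conic points of $W$, on which it acts $2$-transitively for trivial reasons; this is exactly what the explicit coordinate computation pins down. The appearance of the factor $2$ in $\tau\mapsto a\tau+2b$ is also precisely where odd $q$ is used: $b\mapsto 2b$ is a bijection of $\mathbb{F}_q$ only in odd characteristic, matching the hypothesis that all of the structural facts quoted from Property~\ref{prop:er} hold for odd prime powers.
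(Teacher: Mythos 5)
Your proof is correct, but it is worth noting that the paper does not actually prove this statement at all: Theorem~\ref{th:aut_w_path} is imported verbatim as Corollary~5 of Parsons, so you are supplying an argument where the authors supply only a citation. What you have written is essentially the standard (and, as far as I can tell, Parsons's own) route: identify a subgroup $G \le \mathrm{Aut}(ER_q)$ coming from the projective isometries of the defining form, use that $G \cong PGL_2(q)$ acts $3$-transitively on the conic $W$ to move $w_0$ to $w_1$, and then show the point stabilizer $G_w$ acts $2$-transitively on the $q$ non-conic points of the tangent line $w^\perp$, which by Property~\ref{prop:er}.\ref{prop:quad_adj} are exactly the neighbours of $w$ and all lie in $V_1$. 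I checked your coordinate computation in the $XZ=Y^2$ model: with $g=\bigl(\begin{smallmatrix}\alpha&0\\ \gamma&\delta\end{smallmatrix}\bigr)$ the induced map on the Veronese coordinates sends $[0:1:\tau]$ to $[0:\alpha\delta:2\gamma\delta+\delta^2\tau]$, i.e.\ $\tau\mapsto a\tau+2b$ with $a=\delta/\alpha$, $b=\gamma/\alpha$, and since $2$ is invertible for odd $q$ this realizes all of $AGL_1(q)$, which is sharply $2$-transitive. Your flagged distinction between acting on the tangent-line neighbours versus the remaining conic points is exactly the right thing to worry about, and the explicit matrix settles it. Two small points you may as well make explicit: (i) the reduction to the $XZ=Y^2$ model is legitimate because $x_1^2+x_2^2+x_3^2$ and $XZ-Y^2$ have the same discriminant class over $\mathbb{F}_q$ for odd $q$ and are therefore isometric, so the change of coordinates conjugates one polarity graph onto the other; (ii) $2$-transitivity applies to ordered pairs of \emph{distinct} points, so you are implicitly using $s_i\neq d_i$, which is guaranteed by the hypothesis that $(s_i,w_i,d_i)$ is a path of length $2$.
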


{

\begin{corollary}\label{cor:clustertriplets_dualtriangles}
If there exists some non-quadric cluster $X$ such that every non-quadric cluster triplet that includes $X$ is joined by at most one triangle, then every non-quadric cluster triplet is joined by at most one triangle.
\end{corollary}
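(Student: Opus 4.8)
The plan is to transfer the ``at most one triangle'' property from triplets containing the distinguished cluster $X$ to an arbitrary triplet by means of an automorphism supplied by Theorem~\ref{th:aut_w_path}. The bridge between the abstract path-symmetry of that theorem and the concrete cluster decomposition is the observation that, in a fixed layout with starter quadric $v$, each non-quadric cluster is determined by its center, and the centers are exactly the $q$ neighbors of $v$, which lie in \lone{q} by Property~\ref{prop:er}.\ref{prop:quad_adj}. Consequently, any two distinct centers $c_a, c_b$ together with $v$ form a length-$2$ path $(c_a, v, c_b)$ with $v$ quadric and both endpoints in \lone{q}, which is precisely the configuration Theorem~\ref{th:aut_w_path} speaks about.

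First I would fix an arbitrary triplet $(C_a, C_b, C_c)$ of distinct non-quadric clusters. If $X$ is already one of them, the hypothesis applies directly, so I may assume $X \notin \{C_a, C_b, C_c\}$, with centers $c_a, c_b, c_c$ and with $x$ the center of $X$. I would then apply Theorem~\ref{th:aut_w_path} to the two length-$2$ paths $(c_a, v, c_b)$ and $(x, v, c_b)$, both passing through the same quadric $v$, to obtain an automorphism $\theta$ of $ER_q$ with $\theta(v) = v$, $\theta(c_a) = x$, and $\theta(c_b) = c_b$. Choosing both paths through $v$ is what forces $\theta$ to fix $v$.

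The key step is to argue that $\theta$ permutes the clusters of the layout. Since $\theta$ preserves self-orthogonality it maps quadrics to quadrics and non-quadrics to non-quadrics, and since it fixes $v$ it maps the neighbors of $v$ (the centers) bijectively to the centers. As a cluster is precisely a center together with its non-quadric neighbors, $\theta$ sends each cluster onto the cluster generated by the image of its center; in particular $\theta(C_a) = X$, $\theta(C_b) = C_b$, and $\theta(C_c) = C_{c'}$ for some cluster $C_{c'}$. By injectivity of $\theta$ the image triplet $(X, C_b, C_{c'})$ consists of three distinct clusters and contains $X$.

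Finally, because $\theta$ is a graph automorphism it carries triangles to triangles while respecting the cluster assignment of each vertex, so it induces a bijection between the triangles joining $(C_a, C_b, C_c)$ and those joining $(X, C_b, C_{c'})$. The hypothesis bounds the latter by one, hence the former is at most one as well; as $(C_a, C_b, C_c)$ was arbitrary, the corollary follows. The main obstacle I anticipate is the bookkeeping of the middle step, namely verifying that fixing $v$ really does preserve the \emph{entire} cluster decomposition and not merely the three clusters in play, since it is this global invariance that licenses the clean transfer of the triangle count.
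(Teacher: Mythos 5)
Your proof is correct and follows essentially the same route as the paper: both invoke Theorem~\ref{th:aut_w_path} to produce an automorphism fixing the starter quadric and carrying the center of one cluster of an arbitrary triplet to the center of $X$, and then transport the triangle count along that automorphism. The one genuine refinement is your explicit observation that such an automorphism permutes the clusters of the layout (it preserves quadrics, sends neighbors of the fixed starter to neighbors of the starter, and a cluster is determined by its center), which guarantees the image triplet consists of three distinct clusters containing $X$; the paper instead argues by contradiction and must separately dispose of the degenerate case where the image collapses onto only two clusters, which it rules out by appealing to Proposition~\ref{prop:tris}. The ``obstacle'' you flag --- checking that fixing the starter quadric preserves the whole cluster decomposition --- is exactly the right point to verify, and it does go through, making your version marginally cleaner by removing the dependency on Proposition~\ref{prop:tris}.
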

\begin{proof}
    Let $w$ be the starter quadric, and let $X$ be a cluster meeting the above condition. Let $x_c$ be the center of $X$.
    
    We assume that there exists some triplet $(D, E,F)$ of distinct non-quadric clusters joined by more than one triangle, and show a contradiction.
    
    Let $d_c$, $e_c$ and $f_c$ be the centers of $D, E$, and $F$ respectively. By assumption, the triplet $(D, E, F)$ is joined by two distinct triangles with vertices $(d_0, e_0, f_0)$ and $(d_1, e_1, f_1)$.
    
    Let $Y\neq X$ be an arbitrary cluster with center $y_c$. 
    By Theorem~\ref{th:aut_w_path}, there is an automorphism $\theta$ of $ER_q$ so that $\theta(w)=w$, $\theta(d_c)=x_c$, and $\theta(e_c)=y_c$. 
    Any automorphism of $ER_q$ preserves edges, so $\theta(f_c)$ is connected to $\theta(w)=w$, making $\theta(f_c)$ the center of some cluster $Z$. Again, $\theta$ preserves edges, so two distinct triangles $(\theta(d_0), \theta(e_0), \theta(f_0))$ and $(\theta(d_1), \theta(e_1), \theta(f_1))$ link $X$, $Y$ and $Z$.
    
    By the condition on $X$, $(X,Y,Z)$ cannot be a triplet, so $Z$ must be one of $X$ or $Y$. But then the triangle $(\theta(d_0), \theta(e_0), \theta(f_0))$ joins exactly two non-quadric clusters. This contradicts Proposition~\ref{prop:tris}.
\end{proof}
}
{
\begin{lemma}\label{lemma:one_tri_special_case}
Let $X$ be a non-quadric cluster whose center $x$ has form $(1, x_1, x_2)$ as a point in $\mathbb{P}^2(\mathbb{F}_q)$.
Then any triple of distinct non-quadric clusters that includes $X$ is joined by at most one triangle.
%Then there is no non-quadrics cluster $Z$ such that the triple $(X,Y,Z)$ is joined by multiple triangles.
\end{lemma}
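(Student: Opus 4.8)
The plan is to recast triangles as orthogonal triples and then reduce the count to the roots of a single binary quadratic restricted to a line, most of which will be forced onto the forbidden starter quadric.

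First I would record three structural facts. A triangle in $ER_q$ is exactly a set $\{a,b,c\}$ of pairwise-orthogonal non-quadric vectors (edges are orthogonality, and such vectors are automatically linearly independent since none is self-orthogonal). Writing $w$ for the starter quadric of Algorithm~\ref{alg:racks}, every cluster center is a neighbour of $w$, so the centers $x,y,z$ of $X,Y,Z$ satisfy $x\cdot w=y\cdot w=z\cdot w=0$; equivalently $x,y,z$ all lie on the projective line $w^{\perp}$, which is tangent to the conic $c\cdot c=0$ at $w$. Finally, by the cluster construction (Propositions~\ref{claim:assign1} and~\ref{prop:tris_in_cluster}) a non-quadric vertex lies in $X$ iff it equals $x$ or is orthogonal to $x$; and since a triangle through the center $x$ forces its other two vertices into $X$, an inter-cluster triangle contains no center. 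Thus a triangle joins $X,Y,Z$ iff, after relabelling, $a\cdot x=b\cdot y=c\cdot z=0$ with $a,b,c$ non-quadric.

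Next I would parametrise such a triangle by its $Z$-vertex $c$. Since $a$ is orthogonal to both $c$ and $x$, and $c\not\parallel x$ (distinct centers are non-adjacent by Property~\ref{prop:er}.\ref{prop:path}, so $x\notin z^{\perp}$), $a$ must be the cross product $c\times x$ of~\eqref{eq:cross_product}, and likewise $b=c\times y$. The only surviving requirement $a\cdot b=0$ becomes, via the Lagrange/Binet--Cauchy identity,
\[
(c\times x)\cdot(c\times y)=(c\cdot c)\,(x\cdot y)-(c\cdot x)\,(c\cdot y)=0 .
\]
Hence triangles joining $X,Y,Z$ are in bijection with the non-quadric points $c$ on the line $z^{\perp}$ satisfying $Q(c):=(x\cdot y)(c\cdot c)-(c\cdot x)(c\cdot y)=0$ (with $c\times x,\ c\times y$ non-quadric).

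The crux is that $Q$ restricted to the line $z^{\perp}$ is a binary quadratic, so it has at most two projective roots unless it vanishes identically, and one root is always the \emph{forbidden} quadric $w$: indeed $w\in z^{\perp}$ since $w\cdot z=0$, and $Q(w)=(x\cdot y)(w\cdot w)-(w\cdot x)(w\cdot y)=0$ because $w\cdot w=0$ and $w\cdot x=0$. As $w$ is a quadric it yields no triangle, leaving at most one valid root, hence at most one triangle. The main obstacle, and the only gap to close, is to rule out $Q|_{z^{\perp}}\equiv 0$. Here I would exploit the collinearity of $x,y,z$ on $w^{\perp}$: it forces all pairwise and self dot products among $x,y,z$ to coincide, so the line $xy=w^{\perp}$ is tangent to $c\cdot c=0$ and $Q$ factors as $(w\cdot c)\,L(c)$ for a linear form $L$; a short computation then shows $L$ is not proportional to the form cutting out $z^{\perp}$, so the two roots $w$ and $\{L=0\}\cap z^{\perp}$ are distinct. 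Normalising the center to $x=(1,x_1,x_2)$ is what makes this last step explicit: it provides a concrete chart on $x^{\perp}$ in which the factorisation of $Q$ and the non-proportionality of $L$ can be read off directly. Once this non-degeneracy is in place the quadric root $w$ does all the work; notably, the argument uses nothing beyond the existence of the common quadric neighbour $w$, so it would in fact extend to every triple of clusters.
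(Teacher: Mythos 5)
Your argument is correct, and although it ultimately performs the same computation as the paper's proof --- reduce the triangle conditions to a quadratic whose spurious root is the starter quadric $w$ --- it packages that computation in a genuinely cleaner and more general way. The paper fixes the $X$-vertex $a$, eliminates $a_0$ using the normalization $x=(1,x_1,x_2)$, and arrives at a quadratic $r_{11}a_1^2+r_{12}a_1a_2+r_{22}a_2^2=0$ with unexamined coefficients; you instead parametrize by the $Z$-vertex $c\in z^{\perp}$ and use Binet--Cauchy to display the quadratic invariantly as $Q(c)=(x\cdot y)(c\cdot c)-(c\cdot x)(c\cdot y)$, from which $Q(w)=0$ is immediate. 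This buys two things. First, your proof nowhere uses the special form of $x$, so it establishes the ``at most one triangle'' bound for \emph{every} cluster triple directly, which would let Theorem~\ref{th:clustertriplets_triangles} bypass the automorphism machinery of Theorem~\ref{th:aut_w_path} and Corollary~\ref{cor:clustertriplets_dualtriangles}. Second, you explicitly isolate the degenerate case $Q|_{z^{\perp}}\equiv 0$; the paper's proof silently skips this (``a quadratic polynomial\dots has at most two solutions'' only when it is not the zero polynomial), so your treatment actually repairs a small gap in the published argument.

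The one step you leave as ``a short computation'' does go through, and can be finished as follows. Since every point of $w^{\perp}$ is $\lambda w+\mu v$ with $w\cdot w=w\cdot v=0$, one checks directly that $Q$ vanishes identically on $w^{\perp}$; and $Q\not\equiv 0$ because $Q\equiv 0$ would make the nondegenerate form $c\cdot c$ a product of two linear forms. Hence $Q=(w\cdot c)\,L(c)$ for some linear form $L$, as you claim. Now the second quadric neighbours $w_x$ of $x$ and $w_y$ of $y$ (which exist by Property~\ref{prop:er}.\ref{prop:lone_adj}) satisfy $Q(w_x)=Q(w_y)=0$ and do not lie on $w^{\perp}$ since distinct quadrics are non-adjacent, so both lie on $\{L=0\}$; moreover $w_x\neq w_y$, else $x$ and $y$ would give two length-$2$ paths between $w$ and $w_x$, contradicting Property~\ref{prop:er}.\ref{prop:2path}. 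If $L$ were proportional to $z\cdot c$, the line $z^{\perp}$ would contain the three distinct conic points $w$, $w_x$, $w_y$, which is impossible since a line meets the conic in at most two points. Hence $Q|_{z^{\perp}}\not\equiv 0$ and your count of at most one non-quadric root stands.
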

\begin{proof}
Let $X$ be as stated, and let $Y, Z\neq X$ be distinct clusters such that the triple $(X,Y,Z)$ is joined by the triangle $(a,b,c)$ with $a\in X$, $b\in Y$ and $c\in Z$. Write each point $r$ as $r = (r_0, r_1, r_2)$, a point in $\mathbb{P}^2(\mathbb{F}_q)$. Given centers $x, y,$ and $z$ of $X$, $Y$, and $Z$, we will write down a system of equations to count how many possible triples $(a,b,c)$ can exist.

By construction, the points $a$, $b$ and $c$ are connected to the respective centers $x$, $y$ and $z$, and to each other in ER$_q$. This implies that
$$a\cdot x = b\cdot y = c\cdot z =
a\cdot b = b\cdot c = c\cdot a  = 0 
$$
in $\mathbb{F}_q$.
Because $a\cdot x = 0$ and $x=(1, x_1, x_2)$,
\begin{equation}
a_0 = -(a_1x_1 + a_2x_2). \label{eq:triangle_syseq1}
\end{equation}
Because $a\cdot b = b \cdot y = 0 $ and 
$a\cdot c = c \cdot z = 0$, $(a,b,y)$ and $(a,c,z)$ are two-hop paths. 
The cross-product derivation of the intermediate vertex of a two-hop path given in (\ref{eq:cross_product}) from Section~\ref{subsec:intermediate} then shows that
\begin{equation} b = a\times y \, \,\, \text{ and } \,\,\, c = a\times z. \label{eq:cross_prod_bc}
\end{equation}
Using $b\cdot c = 0$, the above cross-products may be substituted for $b$ and $c$, giving:
$$
\left(a\times y\right)\cdot (a\times z) = 0  
$$
Substituting (\ref{eq:triangle_syseq1}) into this equation gives
\begin{equation}
r_{11}a_1^2 + r_{12}a_1a_2 + r_{22}a_2^2 = 0 \label{eq:triangle_crossproduct}
\end{equation}
where $r_{11}, r_{12}$ and $r_{22}$ are constants in $y_i, z_i$ for $i \in \{0,1,2\}$. Notice that (\ref{eq:cross_prod_bc}) implies we can determine the entries of the points $b$ and $c$ completely once we determine $a$. Thus we have reduced our problem to understanding the number of solutions to (\ref{eq:triangle_crossproduct}) in terms of $a_1$ and $a_2$.

{
Either $a_2$ is $0$ or $a_2$ is invertible. If $a_2 = 0$, then (\ref{eq:triangle_crossproduct}) implies $a_1=0$, and (\ref{eq:triangle_syseq1}) implies $a_0=0$. This is impossible since $(0,0,0) \not \in \mathbb{P}^2(\mathbb{F}_q)$. So $a_2$ must be invertible.

Without loss of generality, we may then solve for a vertex of the form $a'=(a_0', a_1',1)$, since $a'$ may then be multiplied by $a_2$ and then left-normalized to give $a \in \mathbb{P}^2(\mathbb{F}_q)$. 

In that case, (\ref{eq:triangle_crossproduct}) reduces to a quadratic polynomial in $a_1'$, which has at most two solutions. However, notice that $a = b = c = w$, the starter quadric, satisfies all of the equations. In this case, $a$, $b$, and $c$ do not form a triangle. This implies there is at most one valid solution to (\ref{eq:triangle_crossproduct}) and at most one vertex triplet $(a,b,c)$ that form a triangle between 
the clusters $X, Y$ and $Z$. 
}

\if 0
{\color{green}
Since all multiples of a vector represent the same vertex, we solve
the system of equations~\ref{eq:triangle_crossproduct} and \ref{eq:triangle_syseq1} for $a_2=0$ and $a_2=1$. The solution 
vectors can be normalized with the corresponding value of the leftmost non-zero coordinate to obtain
left-normalized coordinates of the vertex $a$ that satisfies these equations.
% Since $a$ is in $\mathbb{P}^2(\mathbb{F}_q)$, we can normalize $a_2$ to be 0 or 1. 

If $a_2 = 0$, then (\ref{eq:triangle_crossproduct}) implies $a_1=0$, and (\ref{eq:triangle_syseq1}) implies $a_0=0$. This is impossible since $(0,0,0) \not \in \mathbb{P}^2(\mathbb{F}_q)$. If $a_2 = 1$, (\ref{eq:triangle_crossproduct}) reduces to a quadratic polynomial in $a_1$, which has at most two solutions. However, notice that $a = b = c = w$, the starter quadric, satisfies all of the equations. In this case, $a$, $b$, and $c$ do not form a triangle. This implies there is at most one valid solution to (\ref{eq:triangle_crossproduct}).
Therefore, there
is at most one vertex triplet $(a,b,c)$ that form a triangle between 
the clusters $X, Y$ and $Z$.
}
\fi 
\end{proof}

%\begin{theorem}\label{th:clustertriplets_triangles}
%Every triplet of non-quadric clusters under any cluster layout is connected by exactly one triangle. 
%Every inter-cluster triangle connects exactly one triplet of non-quadric clusters.
%\end{theorem}

We are now ready to prove the main theorem of this section.

\begin{proof}[Proof of Theorem \ref{th:clustertriplets_triangles}]
Let $w=(w_0, w_1, w_2)$ be the quadric vertex that generates the cluster layout. At least one of $w_1$ and $w_2$ is non-zero, since $w$ is non-zero and self-orthogonal.

There is thus at least one vector $x$ having form $(1, x_1, x_2)$ that is orthogonal to $w$ (as can be calculated using the dot product on $w$), and this vector $x$ is the center of some non-quadric cluster $X$. By Lemma~\ref{lemma:one_tri_special_case}, any non-quadric cluster triplet that includes $X$  is joined by at most one triangle.

Corollary~\ref{cor:clustertriplets_dualtriangles} 
then implies that every non-quadric cluster triplet is joined by at most one triangle.
By Proposition~\ref{prop:tris}\ref{prop:tris_intercluster}, the number of inter-cluster triangles is $q\choose 3$, which is the same as the number of non-quadric cluster triplets. The theorem immediately follows by the pigeonhole principle. 
\end{proof}
}
{
\subsubsection{Distribution of inter-cluster triangles}

\begin{table}[ht]
\setlength{\tabcolsep}{2.5pt}
\centering
\renewcommand{\arraystretch}{2}%
\begin{tabular}{|c|c|c|c|c|p{6cm}|}
    \hline
     & $(v_1,v_1,v_1)$ & $(v_1,v_1,v_2)$ & $(v_1,v_2,v_2)$ & $(v_2,v_2,v_2)$ \\
\hline
$q\equiv 1 \mod 4$ & $\frac{q(q-1)(q-5)}{24}$ & $0$ & $\frac{q(q-1)^2}{8}$ & $0$ \\
\hline
$q\equiv 3 \mod 4$ & $0$ & $\frac{q(q-1)(q-3)}{8}$ & $0$ & $\frac{(q+1)q(q-1)}{24}$ \\
\hline
\end{tabular}
\caption{Distribution of inter-cluster triangles, of different forms. The variable $v_1$ indicates a vertex in $V_1$, and the variable $v_2$ indicates a vertex in $V_2$.}
%\vspaceSQ{-1em}
\label{tab:dist_tris}
%\vspace{-0.5em}
\end{table}
We can further classify the triangles in $ER_q$. The distribution of the inter-cluster triangles is as shown in Table~\ref{tab:dist_tris}.
The table follows from a simple combinatorial argument.

We note that every element of $V_1$ serves as a center in the two layouts induced by its adjacent quadrics. Since the choice of a particular layout does not affect adjacency at all, triangles remain the same in all layouts, in terms of participating vertices. Any triangle with a participating center is entirely internal to the center’s cluster. So if $q\equiv 1 \mod 4$, any triangle holding an element of $V_1$ must be of the form $(v_1, v_1, v_1)$ or $(v_1, v_2, v_2)$. Likewise, if $q\equiv 3 \mod 4$, any triangle holding an element of $V_1$ must be of the form $(v_1, v_1, v_2)$.
We also know that the total number of inter-cluster triangles for any $q$ is $q \choose 3$.
 
Clusters are either internal to a cluster, or entirely inter-cluster, as in the proof of Proposition~\ref{prop:tris}. This also implies that if a triangle is internal to a cluster with a center $c$ in a given layout, it will be entirely inter-cluster for any layout in which $c$ is not a center, in other words, layouts induced by any of the $q-1$ quadrics not adjacent to $c$.
 
We assume some particular layout starting with an arbitrary starting quadric, and calculate the inter-cluster triangles for that layout. 
The triangles of course remain the same for any particular layout.

Case 1: $q\equiv 1 \mod 4$.
First, we calculate the number of inter-cluster triangles of the form $(v_1, v_1, v_1)$. Choose one of the $q$ non-quadric clusters at random. There are $\frac{(q-1)}{2}$ non-center $V_1$ elements in the cluster. We choose one of these and call it $w$. Looking at a layout in which $w$ is a center, $w$ participates in $\frac{q-1}{2}$ triangles of the form $(v_1, v_1, v_1)$, of which exactly two are internal to the cluster, so the other $\frac{q-5}{2}$ triangles are entirely inter-cluster. Considering all the non-center $V_1$ elements in all the clusters, we get $$\frac{q(q-1)(q-5)}{4}$$ triangles, but each is counted six times. So there are $$\frac{q(q-1)(q-5)}{24}$$ triangles in total of the form $(v_1, v_1, v_1)$. We also have that $w$ participates in $\frac{q-1}{2}$ triangles of the form $(v_1, v_2, v_2)$. None of these is internal to the cluster, since all internal triangles of that form are $(c, v_2, v_2)$, where $c$ is the center. Considering all the non-center $V_1$ elements in all the clusters, we get $$\frac{q(q-1)^2}{4},$$ but each is counted twice. So there are $$\frac{q(q-1)^2}{8}$$ triangles in total of the form $(v_1, v_2, v_2)$. Since
$$\frac{q(q-1)^2}{8} + \frac{q(q-1)(q-5)}{24} = {q \choose 3},$$ we have accounted for all of the inter-cluster triangles for $q\equiv 1 \mod 4$.

Case 2: $q\equiv 3 \mod 4$.
First, we calculate the number of inter-cluster triangles of the form $(v_1, v_1, v_2)$. Choose one of the $q$ non-quadric clusters at random. There are $\frac{q-1}{2}$ non-center $V_1$ elements in the cluster. We choose one of these and call it $w$. We see by considering a layout in which $w$ is a center that $w_0$ participates in ($\frac{q-1}{2}$ triangles of the form $(v_1, v_1, v_2)$, of which exactly $1$ is internal to the cluster, so the other $\frac{q-3}{2}$ triangles are entirely inter-cluster. Considering all the non-center $V_1$ elements in all the clusters, we get $\frac{q(q-1)(q-3)}{4}$ triangles, but each is counted twice. So there are $$\frac{q(q-1)(q-3)}{8}$$ triangles in total of the form $(v_1, v_1, v_2)$. We know that when $q\equiv 3 \mod 4$, there are no triangles of the form $(v_1, v_2, v_2)$ nor of the form $(v_1, v_1, v_1)$, so the remaining triangles must be of the form $(v_2, v_2, v_2)$, and there are $${q \choose 3} - \frac{q(q-1)(q-3)}{8} = \frac{(q+1)q(q-1)(q-1)}{24}$$ of these.

As a corollary of Table~\ref{tab:dist_tris}, we have Table~\ref{tab:int_vertices}, giving the possible types of the intermediate vertex of an alternative $2$-hop path between two adjacent vertices. Such a 2-hop path always exists if neither of the vertices are quadric.
\begin{table}[ht]
\setlength{\tabcolsep}{10pt}
\centering
\renewcommand{\arraystretch}{2}%
\begin{tabular}{|l|l|l|l|}
\Xhline{1.2pt}
     &      & $\mathbf{v_1}$ & $\mathbf{v_2}$ \\ 
\Xhline{1.2pt}
\multirow{2}{*}{$q\equiv 1\mod 4$} & $\mathbf{v_1}$ & $v_1$ & $v_2$ \\ \cline{2-4} 
     & $\mathbf{v_2}$ & $v_2$ & $v_1$ \\ 
\Xhline{1.2pt}
\multirow{2}{*}{$q\equiv 3\mod 4$} & $\mathbf{v_1}$ & $v_2$ & $v_1$ \\ \cline{2-4} 
    & $\mathbf{v_2}$ & $v_1$ & $v_2$ \\ 
\Xhline{1.2pt}
\end{tabular}
\caption{Types of the intermediate vertices in a $2$-hop path between two adjacent non-quadric vertices.}
%\vspaceSQ{-1em}
\label{tab:int_vertices}
%\vspace{-0.5em}
\end{table}
}

\section{Expandability}\label{sec:extensible}
% \klcomment{Writing some description of expandability.
% It may be covered in background; can remove from 
% here if that's the case.}
% While \fly can scale close to Moore bound, the desired 
% system size may be limited by the budget or buyer requirements.
% The router radix is also a function of 
% technology and needs to be over-provisioned to the 
% nearest available choice, and 
% leaving some ports disconnected 
% on each router. 
% % The desired network can then 
% % be built by using a \fly of appropriate degree, and leaving
% % some ports on each router empty. 

% However, with the availability
% of new devices or additional budget, there may be a need to 
% expand the system in future.
% Therefore, a desirable topology must allow
% \textit{incremental} integration of new nodes 
% (using empty router ports)
% without dismantling and rewiring the entire system.
Expandability is crucial in budget-driven
scenarios, as discussed in Section~\ref{sec:feasibility}.
%, as discussed in Section \ref{sec:feasibility}.
In an underprovisioned expandable network, unused ports on the nodes 
can be used to incrementally connect additional
nodes to the network.
\if 0
Scalable topologies derived from solutions to
degree-diameter problem, such as Slimfly or 
Bundlefly, do not permit such incremental expansion\klcomment{Should we claim this}\htorcomment{can we show this is true? I'm not 100\% sure}.
\fi
% without sacrificing scalability and
% performance. 
In this section, we show that \fly affords \textbf{incremental expansion}
and present two methods to accomplish this. 
\begin{figure}[th]
\begin{centering}
\includegraphics[width=.8\columnwidth]{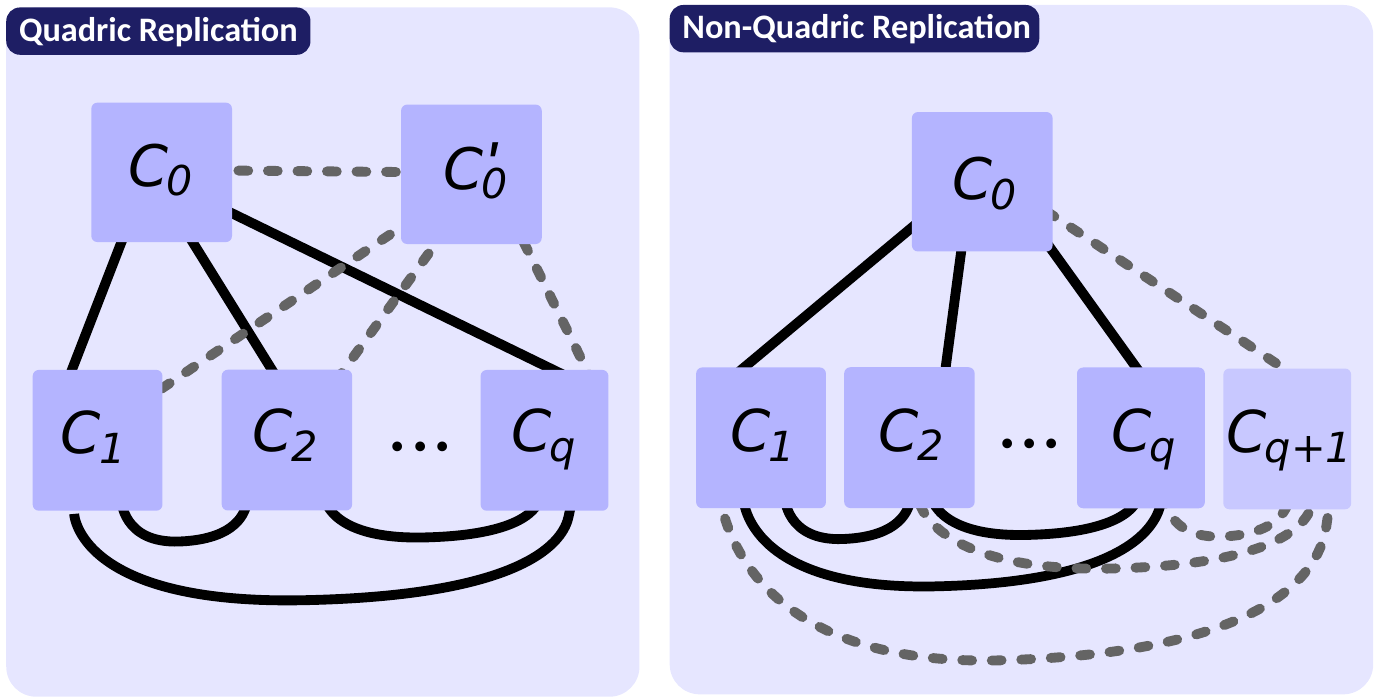}
\caption{Expansion methods.}
\label{fig:expansion}
\end{centering}
\end{figure}

Importantly, these methods
do not require rewiring of the
existing links. They offer a trade-off across
different parameters, as summarized in Table~\ref{tab:expansion}.
% The first two methods are based on \textit{cluster
% replication} in \flyN, which is defined as follows:
These methods are based on \textit{cluster
replication} in \flyN, which is defined as follows:

% \begin{definition}\label{defn:replication}
% Given a graph $G(V, E)$ with $p$ clusters $\{C_0, C_1 ... C_{p-1}\}$, 
% replication of a cluster $C_i$ creates a 
% new graph $G'(V\cup C'_i, E')$. For every vertex $v\in C_i$, there exists a replica $v'\in C'_i$  such that 
% \begin{itemize}%[leftmargin=*]
%     % \item For every vertex $v\in C_i$, there exists a
%     % replica $v'\in C'_i$, and $v'$ is adjacent to $v$ i.e. $(v, v')\in E'$.
%     \item For every edge $(v,w)\in E$ where 
%     $v,w\in C_i$, the corresponding replicas 
%     $v',w'\in C_i'$ are also adjacent i.e. 
%     $(v',w')\in E'$.
%     \item For every edge $(v,w)\in E$ where $v\in
%     C_i$ and $w\notin C_i$, the replica $v'$ is 
%     adjacent to $w$ i.e. $(v',w)\in E'$.
% \end{itemize}
% \end{definition}
\begin{definition}\label{defn:replication}
Given a graph $G(V, E)$, 
replication of a vertex cluster $C\subseteq V$ creates a 
new graph $G'(V\cup C', E')$. For every vertex $v\in C$, 
there exists a replica $v'\in C'$  such that in the graph $G'$:
\begin{itemize}%[leftmargin=*]
    % \item $v'$ is adjacent to $v$ i.e. $(v, v')\in E'$.
    \item For every intra-cluster edge $(v,w)\in E$ between two
    vertices 
    $\{v,w\} \in C$, 
    the corresponding replicas 
    $\{v',w'\}\in C'$ are also adjacent, i.e. 
    $(v',w')\in E'$.
    \item For every inter-cluster edge $(v,w)\in E$ where $v\in
    C$ and $w\notin C$, the replica $v'$ is 
    adjacent to $w$, i.e. $(v',w)\in E'$.
\end{itemize}
\end{definition}

Physically, replication is achieved by simply
adding an additional rack of nodes, which
has similar intra-rack layout and connectivity to rest of the clusters as
its original counterpart.
% Structurally, a replicated rack is \textit{identical} 
% to the original and has same connectivity to other clusters as its original. 
Hence, cluster replication methods~(sec.\ref{sec:expand_M1} and \ref{sec:expand_M2}) 
allow \textit{modular expansion 
without rewiring any of the existing links}.

\subsection{Replicating the Quadric Cluster}\label{sec:expand_M1}
One way to expand \fly is to replicate 
quadric cluster $C_0$ as per Def.\ref{defn:replication},
until the desired scale is reached. 
After replication, to increase the network radix of quadrics, 
we directly connect every quadric $v\in C_0$
and all of its replicas with each other.
% also add a link between 
% every node $v\in C_0$ and its replica $v'\in C'_0$.
It can be shown that every replication of $C_0$:
\begin{enumerate}
    \item Increases the number of vertices by $q+1$, while
    preserving the diameter $D=2$.
    \item Increases the degree of quadrics $W(q)$ (and their replicas) 
    and vertices in \lone{q} by $1$ and $2$, respectively.
    \item Creates $q+1$ edges between the replicated
    cluster $C'_0$ and all other clusters.
\end{enumerate}

With this method, 
using $n$ additional ports per node,
the size of \fly can be increased by $\frac{n(q+1)}{2}$,
while keeping the diameter $D=2$.
% Physically, each replication can be accomplished by
% simply adding a rack similar to $C_0$, and connecting
% it to existing racks, as shown in 
% fig.\ref{fig:expand}. This can be done 
% \textit{without rewiring any of the existing connections}. 
However, new links are only added between quadric 
nodes and \lone{q}. Hence, a large number of
quadric replications can result in a non-uniform degree distribution.

\begin{table}[ht]
\setlength{\tabcolsep}{2.5pt}
\centering
\footnotesize
%\scriptsize
%\ssmall
%\sf
\resizebox{\linewidth}{!}{
\begin{tabular}{cccccc@{}}
\toprule
\textbf{Method} & \makecell[l]{\textbf{Scalability}} & 
\makecell[l]{\textbf{Degree}\\\textbf{Distribution}} & 
\makecell[l]{\textbf{Diameter}} & 
\makecell[l]{\textbf{Average Shortest}\\ \textbf{Path Length}} & 
\makecell[l]{\textbf{Rewiring}} \\
\midrule
Replicate Quadrics & $\frac{q+1}{2}$ & Non-uniform & $2$ & $<2$ & None\vspace{1mm}\\
Replicate Non-Quadrics & $\approx q$ & Uniform & $3$ & $<2$ & None \\
%Upgrade \fly & High & Uniform & $2$ & $<2$ & Partial \\
%
\bottomrule
\end{tabular}}
%\vspaceSQ{-1em}
\caption{Characteristics of Expansion Methods. \textbf{Scalability} 
refers to the increase in number of nodes per
unit increase in the maximum network radix.
}
\vspaceSQ{-1em}
\label{tab:expansion}
%\vspace{-0.5em}
\end{table} 
    % \begin{itemize}
    %     \item  Replicate $C_0$ cluster/rack, easy to implement
    %     \item  $q+1$ increase in no. of nodes for $2$ units increase in degree
    %     \item  Diameter preserved
    %     \item  Repeated incremental growth can lead to bandwidth issues
    % \end{itemize}
\if 0
\subsection{Round Robin Replication}\label{sec:expand_M2}
In this method, we replicate 
a vertex subset $S_u$ containing a quadric 
$u$ and all neighbors of $u$;
replica of $S_u$ is denoted as $S'_u$. 
The quadric $u$ is chosen 
in a round-robin order from the
set $W(q)$.
From Property~\ref{prop:er}, we can
see that $S'_u$ induces a star graph of $q+1$ vertices centered at quadric $u$.

For any quadric $v\neq u$, replication of $S_u$
does not add any edges to prior replica(s) of vertex $v$,
which can lead to non-uniform
degree distribution.
To mitigate this,
we note that there is exactly one common 
neighbor $w$ between two quadrics $u$ and $v$,
as known from the property~\ref{prop:path}.
Hence, replicas of $w$ exist in
in both $S'_u$ and $S'_v$, denoted by
$w'_u$ and $w'_v$, respectively.
Instead of directly connecting $w'_u$ and $w'_v$, 
we connect them to the centers of 
$S'_v$ and $S'_u$, respectively.
It can be shown that every replication 
of $S_u$:
\begin{enumerate}
    \item Increases the number of 
    vertices by $q+1$.
    
    \item Increases the degree of
    every vertex by exactly 1.
    % \klcomment{TODO: verify with
    % a more formal induction proof.}
    % As a corollary, $\abs{C_i}$ links are added between 
    % any cluster $C_i$ and the replica $S'_u$.
    
    \item Increases the diameter to $3$ -- for every vertex replica $w'_u$ in $S'_u$, 
    there are at most $q-1$ vertices (all in $S_u$) 
    at a shortest distance of $3$ from $w'_u$.
\end{enumerate}

With this method, 
using a maximum of $n$ additional ports per node,
the size of \fly can be increased by 
$n(q+1)$ nodes~($2\times$ that of 
quadric cluster replication). 
New links are distributed 
across all vertices, providing a uniform degree
distribution. 
% Each replication can be accomplished by
% simply adding a rack~(fig.\ref{fig:expand}) 
% \textit{without rewiring any of the existing connections}. 
While the diameter increases to $3$, 
the average shortest path length is
clearly less than $2$. 
% This method is also generalizable to
% other diameter-2 topologies such as Slimfly~\ref{}.
\fi

\if 0
\htorcomment{seems adventurous from a routing perspective - need to analyze}
\klcomment{Indeed. I had limited understanding of
networks when I was just toying with these graphs.}
\fi

\subsection{Replicating Non-Quadric Clusters}\label{sec:expand_M2}
In this method, we expand \fly by
replicating non-quadric clusters $C_{i\mid i>0}$ in a 
round-robin order, as per Def.\ref{defn:replication}. 
The replica of $C_i$ is labeled $C_{q+i}$,
as shown in Figure~\ref{fig:expansion}.

Replicating a non-quadric
cluster does not add edges to existing
center vertices, which can lead
to a non-uniform degree distribution. 
To mitigate this,
we note that for every non-quadric cluster $C_{j}$ where $i\neq j$~(and replica $C_{q+j}$ if it exists), there is exactly one 
vertex in $C_i$ with no edges to $C_j$~($C_{q+j}$, respectively).
We connect the replica of this vertex
with the centers of $C_j$ ($C_{q+j}$, respectively).
It can be shown that for any $n\leq q$, $n$ such replications of non-quadric clusters:
\begin{enumerate}
    \item Increase the number of vertices by $qn$, which is approximately $2\times$ compared to
    quadric replication.
    \item Increase the maximum degree by $n+1$.
    \item Increase diameter to $3$ -- for every vertex $u\in C_{i\mid i>0}$ 
    (replica $u'\in C_{i+q}$), there are at most $q-1$
    vertices, all in replica $C_{i+q}$ ($C_i$, respectively), that are at a shortest
    distance of $3$ from $u$ ($u'$, respectively).
\end{enumerate}

With non-quadric repliction, new links are  distributed 
across all vertices, providing a 
near-uniform degree distribution. 
% Each replication can be accomplished by
% simply adding a rack~(fig.\ref{fig:expand}) 
% \textit{without rewiring any of the existing connections}. 
While the diameter of topology increases to $3$, 
the average shortest path length is
still clearly less than $2$.

 \if 0
\subsection{Design-space with Incremental Networks}
Apart from the flexibility to grow the network,
the incremental methods also enhance the design space of 
\flyN. For a given router radix $k$ and required number of nodes $N$, 
a \fly can be designed by using an \ER{q} of order
smaller than $N$ and extending
it using cluster replication. 
This allows efficient utilization of available router
bandwidth compared to building an \ER{q} with $N$ nodes
and leaving some ports empty.\footnote{Empty ports can be
used to create random links and improve bandwidth. But random connections
are hard to wire and reason about.}

Fig.\ref{fig:} shows the network sizes for various router radixes, 
that can be constructed using \fly and its extensions.
% \klcomment{<TODO>: Take an example size, build it using
% a \fly with some empty ports. Next, build it by replicating
% clusters in a small\fly and measure bandwidth.}

\fi

    % \begin{itemize}
    %     \item Move to the next available ER graph topology
    %     \item Gives best scale, performance
    %     \item Requires adding nodes to each rack/cluster, and reconfiguring all 
    %           shuffle boxes
    % \end{itemize}

% \subsection{Abas Graphs}
% \klcomment{TODO}

\if 0

\begin{figure}[ht]
\includegraphics[width=1.0\columnwidth]{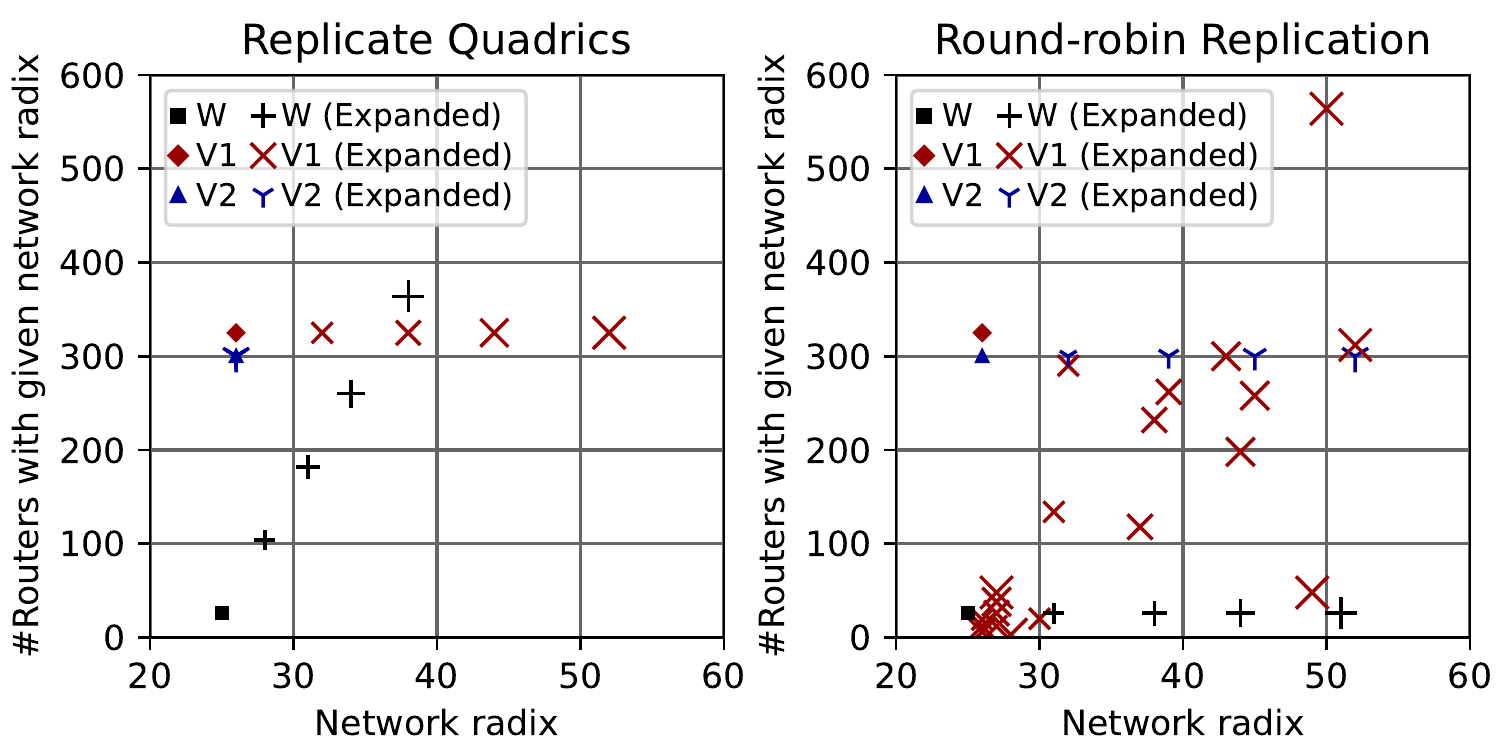}
\caption{Degree distribution in PolarFly under incremental group additions. Marker size indicates network size. We show networks with 112\%, 125\%, 135\%, 150\% (left plot) and 125\%, 150\%, 175\%, 200\% (right plot) of the original size.}
\end{figure}

\fi
\section{Routing}
\label{sec:routing}

% We now describe routing protocols used in PolarFly. 
To facilitate the adoption of PF, we rely on established schemes and show in the evaluation (Section~\ref{sec:eval}) that they deliver high performance. However, to show the highest PF potential, we also develop a new adaptive protocol suited specifically for PF. Note that under co-packaged setting,
nodes and routers are the same entity in direct networks.
% \kicomment{This is the only place I have seen so far that uses abbreviations.}
% \klcomment{Agree, we used abbreviations here because they are used in figure labels. Glad that none of the reviewers complained :P}
\subsection{Minimal Static Routing}

With minimal static routing,
%~(MIN) 
a packet is routed 
from its source router $R_s$ over the minimal path to its
destination router \looseness=-1$R_t$. 
% Since \fly is a diameter-2 topology,
% the length of a minimal path is either 1 or 2 hops.

\subsection{Valiant Routing}\label{sec:val}
Let $R_s$ and $R_t$ denote the source and
destination routers, respectively.
For each packet, the Valiant routing scheme~\cite{valiant1982scheme} selects a random router $R_r$ such that 
$R_r\neq R_s$ and $R_r\neq R_t$. Then, it routes the packet from
$R_s$ to $R_r$ and $R_r$ to $R_t$ along the corresponding shortest paths. This avoids potential
hot spots in the network, but reduces the available bandwidth.

The general Valiant design selects \emph{some} intermediate router.
For \flyN, we use a variant which we call \textit{Compact Valiant}, where $R_r$ 
is chosen from the neighborhood of $R_s$. The path length for 
any packet in Compact Valiant is at most $3$-hops, 
as opposed to $4$-hops in general Valiant.
This reduces the
amount of bandwidth wasted on links 
due to intermediate \looseness=-1traffic.

% Such routing is not beneficial if the path from $R_s$ to $R_t$ is a subset of the $R_r$ to $R_t$
% contains $R_s$.

However, the $3$-hop route selection will be disadvantageous if 
the shortest path between $R_r$ and $R_t$ goes through the source router $R_s$.
In this scenario, the random neighbor selection would result in the packets 
bouncing back to the source router.
Fortunately, in \flyN, this situation is easily avoided as it occurs only when $R_s$ and $R_t$ are adjacent. 
% Specifically, $q$ neighbors of $R_s$ bounce the packet back in \flyn{}.
Hence, we use Compact Valiant only when $R_s$ and $R_t$
are not adjacent.

% in \fly if $R_s$ and $R_t$ are adjacent 
% to each other.
% % under certain conditions.
% Specifically, in this case, the shortest path between $R_r$ and $R_t$ would go via the source router $R_s$,
% and the random neighbor selection would result in the packets ``bouncing back'' to the source router.
% To avoid such bouncing back, we use Compact Valiant only when $R_s$ and $R_t$
% are not adjacent.

\if 0
\kldelete{To avoid this scenario, we enforce the 
following conditions to be 
considered while choosing $R_r$ for the Compact Valiant:

\begin{enumerate}
    \item If $R_s$ and $R_t$ are adjacent and neither is a quadric, 
    then there is exactly one router $R_r\neq R_t$ in the neighbors of $R_s$
    such that the shortest path between $R_r$ and $R_t$ 
    does not contain $R_s$. %$\left(R_s, R_t\right)$. 
    In this case, Compact Valiant has only one choice of $R_r$
    and it incurs $2$ hops as this $R_r$ is also adjacent to $R_t$. 
    \item If $R_s$ and $R_t$ are adjacent and either one is 
    a quadric, the shortest path between $R_t$ and any other 
    neighbor of $R_s$ goes via $R_s$. 
    % contains the link $\left(R_s, R_t\right)$.
    In this case, Compact Valiant cannot be used, and we fall back to a random router being two hops away from both $R_s$ and $R_t$.
\end{enumerate}

\klcomment{Given two adjacent vertices, if none of them is a quadric, then only one choice. If quadric, no choice. The valiant 
misrouting set is all the non-adjacent and one choice if no quadric.}}
\fi

\subsection{Adaptive Routing}

In adaptive routing, the router into which a packet is first injected decides whether this packet should be routed
over a minimal path or over a Valiant path. 
This decision is made on the basis of occupancy of 
\emph{local} output buffers used in the respective paths, as well as the lengths of the considered paths.
% To make this routing decision, the router takes into account both
% the occupancy of output buffers used in a given path, as well as the lengths of the considered paths.
This routing algorithm is called Universal Globally-Adaptive Load-balancing (UGAL)~\cite{ugal2005scheme}.
%
%   We consider a practical deployable version of UGAL, referred to as UGAL-L (``Local'') where each router has access to the occupancy of its own local
%   output buffers, but not to those of other routers. The routing decision is based solely on the length of a given path and on the occupancy of the local output buffer towards said path.
%   This version can be deployed in practice.

\if 0
We consider two versions of UGAL:

\begin{enumerate}
    \item   Global UGAL (UGAL-G): In this version, we assume that every router knows the 
            queue-occupancy of every other router. Hence, the packet's first router can make its
            routing decision based on the occupancy of all queues on a given path. This scenario 
            is not very realistic since in practice, a router will most likely not have access to
            the other routers' queues, but it serves as an approximation of an ideal implementation
            of the UGAL algorithm.
    \item Local UGAL (UGAL-L): Each router has access to the occupancy of its own
    output buffers, but not to those of other routers. The routing decision is based solely on the length of a given path and on the occupancy of the local output buffer towards said path. This version can be deployed in practice.
\end{enumerate}

\fi

For \flyN, we also explore a UGAL variant which
we call \ugalpf. To achieve high bandwidth, 
\ugalpf\  reduces the average hops per packet
by using:
\begin{itemize}[leftmargin=*]
    \item Compact Valiant described in Section~\ref{sec:val}, and
    \item Adaptation threshold -- Valiant 
    path is chosen over min-path only when fractional occupancy 
    of the output buffer towards
    min-path is greater than a threshold ($\frac{2}{3}$ in our case).
\end{itemize}
Thus, \ugalpf\ offers a trade-off between adaptability
of UGAL and low hop count of minimal static routing.

% \klcomment{We can replace these rules
% with a simple statement saying we do not
% employ compact valiant when $R_s$ and
% $R_t$ are adjacent.}

\begin{figure*}[t]
\centering
%\vspaceSQ{-2em}
\begin{subfigure}[t]{0.26 \textwidth}
\centering
\includegraphics[width=1.0\columnwidth]{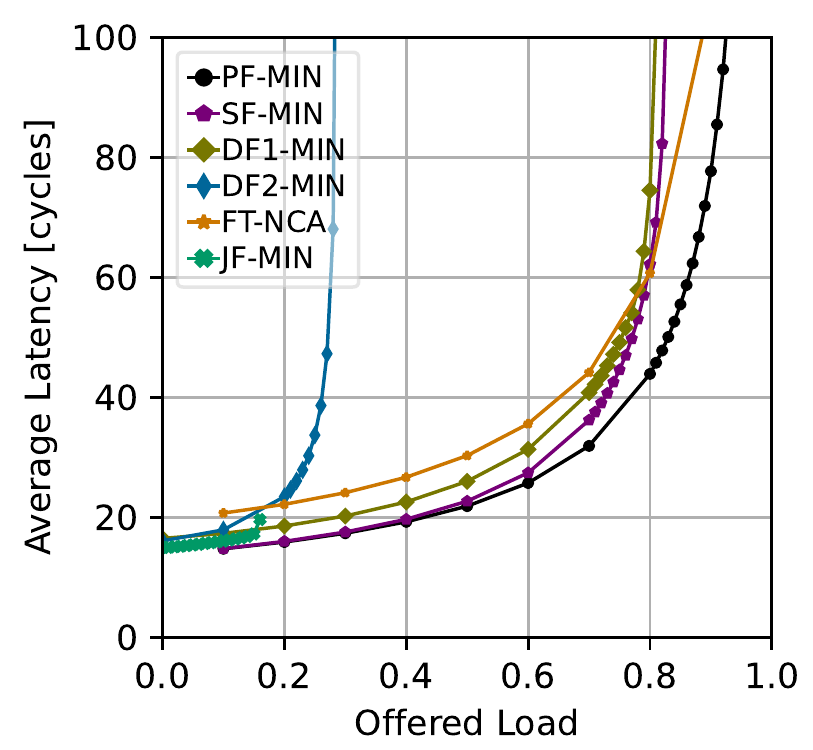}
%
%\vspaceSQ{-2.0em}
\caption{Uniform traffic with min-path routing.}
\label{fig:uniform1}
\end{subfigure}
%\quad
\begin{subfigure}[t]{0.24 \textwidth}
\centering
\includegraphics[width=1.0\columnwidth]{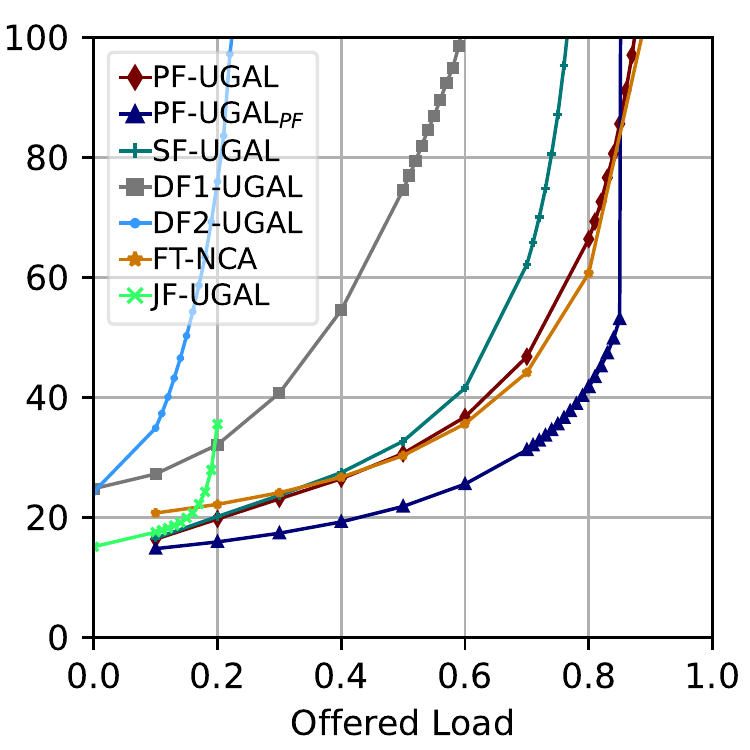}
%
%\vspaceSQ{-2.0em}
\caption{Uniform traffic with adaptive routing.}
\label{fig:perm}
\end{subfigure}
\begin{subfigure}[t]{0.24 \textwidth}
\centering
\includegraphics[width=1.0\columnwidth]{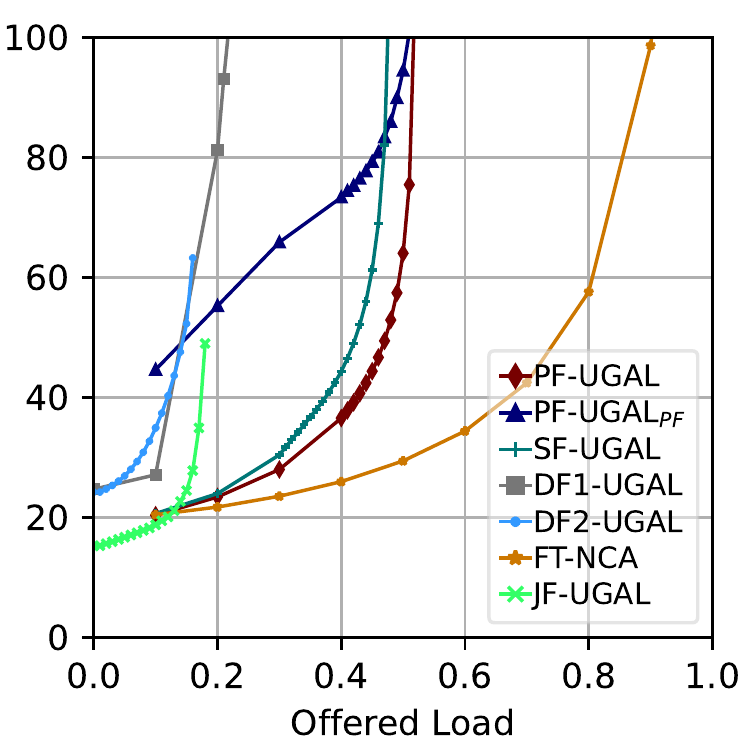}
%
%\vspaceSQ{-2.0em}
\caption{Random Permutation Traffic.}
\label{fig:tornado1}
\end{subfigure}
\begin{subfigure}[t]{0.24 \textwidth}
\centering
\includegraphics[width=1.0\columnwidth]{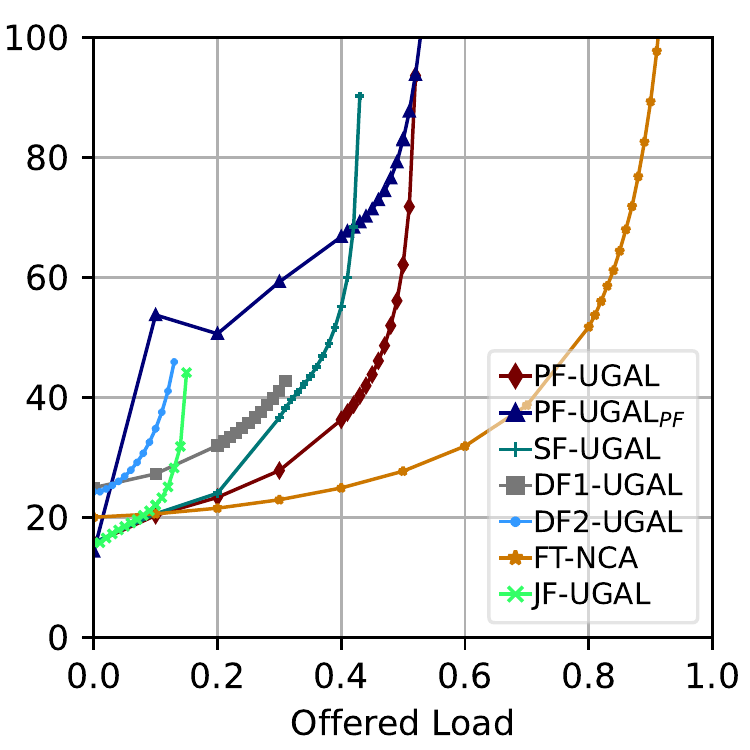}
%
%\vspaceSQ{-2.0em}
\caption{Tornado Permutation Traffic.}
\label{fig:tornado2}
\end{subfigure}
%
%\vspaceSQ{-1em}
\caption{Performance analysis (comparison with other topologies).}
\label{fig:performance-analysis-all}
%\vspaceSQ{-1em}
\end{figure*}

\section{Performance Analysis}\label{sec:eval}

We now evaluate the latency and throughput of PolarFly.

\subsection{Methodology and Comparison Targets}

We compare \flyN{} to Slim Fly~\cite{besta2014slim} (as the most competitive diameter-2 network),
Dragonfly~\cite{dally08} (as a popular recent choice when developing interconnects), Jellyfish~\cite{Singla:2012:JND:2228298.2228322}~(as a random expander network) and 3-level fat tree~\cite{Leiserson:1985:FUN:4492.4495} (as the most widespread existing interconnect baseline).
Except fat tree, all topologies are direct.
As numerous past works
illustrate, networks such as torus, hypercube or Flattened Butterfly 
% \kldelete{, or Random Networks,} 
are less competitive in latency and bandwidth~\cite{besta2014slim, dally08, dally07}.

We use two variants of Dragonfly -- (a)~balanced Dragonfly~(DF1), and (b)~Dragonfly with radix and scale almost
equivalent to \flyN{}~(DF2).
Configurations of the baseline topologies are given in
Table~\ref{tab:performance-config}.

\begin{table}[ht]
\setlength{\tabcolsep}{2.5pt}
\centering
\footnotesize
%\scriptsize
%\ssmall
%\sf
\resizebox{\linewidth}{!}{
\begin{tabular}{llcc@{}}
\toprule
\textbf{Network} & 
\makecell[l]{\textbf{Parameters}} & 
\makecell[l]{\textbf{Number of Routers}} & 
\makecell[l]{\textbf{Network Radix}} \\
\midrule
PolarFly (PF) & q=31, p=16         & 993       & 32 \\
Slim Fly (SF) & q=23, p=18       & 1058      & 35 \\
Balanced Dragonfly (DF1) & a=12, h=6, p=6     & 876       & 17 \\
Equivalent Dragonfly (DF2) & a=6, h=27, p=10  & 978 & 32 \\
Jellyfish (JF) & -- & 993 & 32\\
Fat Tree (FT) & n=3, k=18          & 972       & 36 \\
\bottomrule
\end{tabular}}
%\vspaceSQ{-1em}

\caption{Configuration of topologies used for simulations.}
%\vspaceSQ{-1em}
\label{tab:performance-config}
%\vspace{-0.5em}
\end{table}

Following traffic patterns are simulated to effectively analyze the network \looseness=-1behavior:
% that represent important HPC workloads. 
\begin{enumerate}[leftmargin=*]
    \item \emph{Uniform} random traffic -- for each packet,
    the source selects a destination uniformly at random (representing graph
processing and distributed-memory graph algorithms, sparse linear algebra solvers, and adaptive mesh refinement 
methods~\cite{yuan2013new, besta2014slim, besta2017push, besta2020communication, sakr2021future, gianinazzi2018communication, besta2015accelerating}). 
\item \emph{Tornado}  traffic -- endpoints on every router $i$ send all traffic halfway across to endpoints on router $i + \frac{N}{2}$ modulo $N$.

    \item \emph{Random permutation} traffic -- a fixed permutation 
mapping of source to destination is chosen uniformly at random from the set of all permutations.
In \flyN{}, Tornado and Random permutation traffic are adversarial for min-path routing because there is
only one shortest path between any pair of routers.

    \item Finally, two special permutation traffic patterns \emph{Perm1Hop} and \emph{Perm2Hop} 
    are chosen to analyze \ugalpf. 
    In Perm1Hop, every router communicates with a $1$-hop neighbor -- the
    min-path length is $1$-hop and valiant path length in \ugalpf\ is $4$-hops.
    In Perm2Hop, every router communicates with a $2$-hop neighbor -- the min-path length is $2$-hop and valiant path length in \ugalpf\ is $3$-hops.
    % adversarial traffic patterns are used 
\end{enumerate}
% In uniform 
% traffic, each source sends equal amount of data to each destination (representing graph 
% processing, sparse linear algebra solvers, adaptive mesh refinement 
% methods~\cite{yuan2013new, besta2014slim}). 
%
\if 0
In bit shift traffic, 
the bits $d_i$ of the destination address are set based on the bits $s_i$ 
of the source address (representing some stencil workloads and 
collectives~\cite{yuan2013new, besta2014slim}). Since these traffic patterns 
require the number of active endpoints to be a power of two, only a subset of 
endpoints are active in those scenarios. We show results for two traffic patters of 
this category: bit reverse ($d_i = s_{b-i-1}$) and bit shuffle ($d_i = s_{i-1~mod~b}$) 
where $b$ is the number of bits in the address.
\fi

We use the established BookSim simulator~\cite{jiang2013detailed} to conduct cycle-accurate simulations. 
Each router along with all of its endpoints in BookSim represents a single
co-packaged node.
To mimic co-packaged setting under permutation traffic,
we enforce that all endpoints of a router send data to endpoints of 
only one other router. In other words, permutations 
are computed between routers, and not endpoints.
\if 0
To avoid experiencing flow
control issues that are outside the scope of this work, we use single flow control unit (single flit) packets
in simulations, following the strategy by Kim et al.~\cite{dally08}. 
\fi

Packets of size $4$ flits each are injected with a Bernoulli process. 
We use input-queued routers with $128$ flit buffers per port and $4$ virtual channels.
In all simulations, we use a warm-up phase where no measurements are taken,
to ensure that the simulator first reaches a steady-state.

\subsection{Discussion of Results -- Comparison against Baselines}
Figure~\ref{fig:performance-analysis-all} compares the performance
of \flyN{}~(PF) and the topologies shown in Table
\ref{tab:performance-config}. 
The labels follow the scheme
\textit{$<$network$>$-$<$routing$>$}.
The offered load in Figure~\ref{fig:performance-analysis-all} is normalized to the maximum 
capacity of each network.

For Permutation traffic, min-path routing in 
direct networks can achieve at most $\frac{1}{p}$ of peak throughput, because all $p$ endpoints of a source router
access the same path to the destination router. 
Hence, we only compare adaptive routing performance under permutation patterns in these topologies.

In general, we observe that \flyN{} offers superior performance -- for all traffic patterns, it outperforms all competitive direct topologies.
Its advantages over Jellyfish and Dragonfly in terms of 
lower latency, are a direct consequence of its low diameter.
% At small injection bandwidths, the latency
% of \fly and Slim Fly is much smaller than higher diameter topologies.
Its benefits over Slim Fly in terms of higher saturation
bandwidth, are due to careful design of routing protocols that exploit \flyN{}
structure to ensure that the routing decisions are as good as 
possible. Amongst the Dragonflies, the balanced DF1 
outperforms DF2 whose throughput is bottlenecked by the 
traffic volume on intra-group links.

\begin{figure}[ht]
\centering
%\vspaceSQ{-2em}
\begin{subfigure}[t]{0.25 \textwidth}
\centering
\includegraphics[width=1.0\columnwidth]{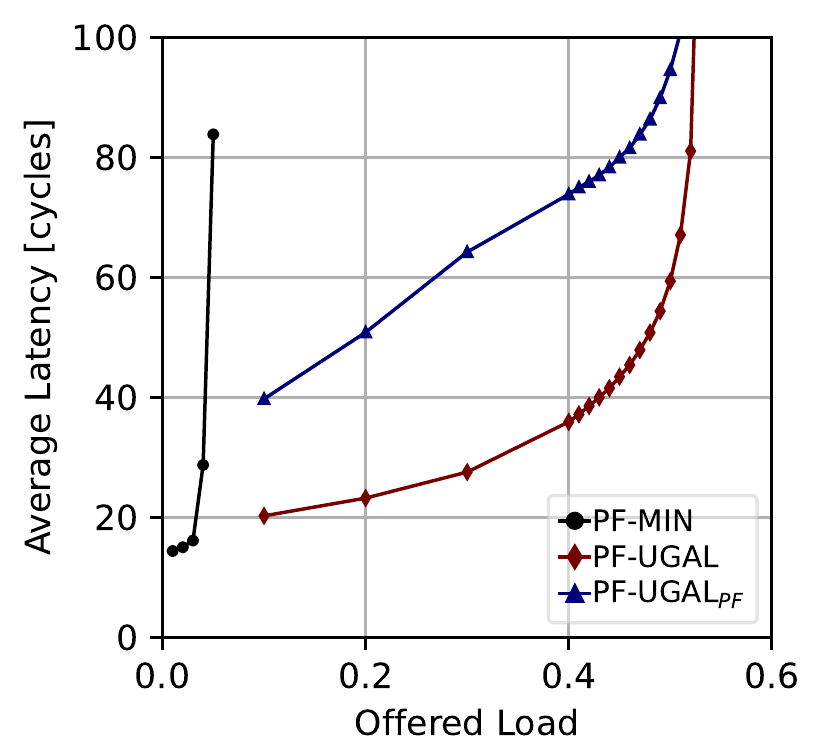}
%
%\vspaceSQ{-2.0em}
\caption{Perm2Hop Permutation Traffic}
\label{fig:adv2}
\end{subfigure}
%\quad
\begin{subfigure}[t]{0.23 \textwidth}
\centering
\includegraphics[width=1.0\columnwidth]{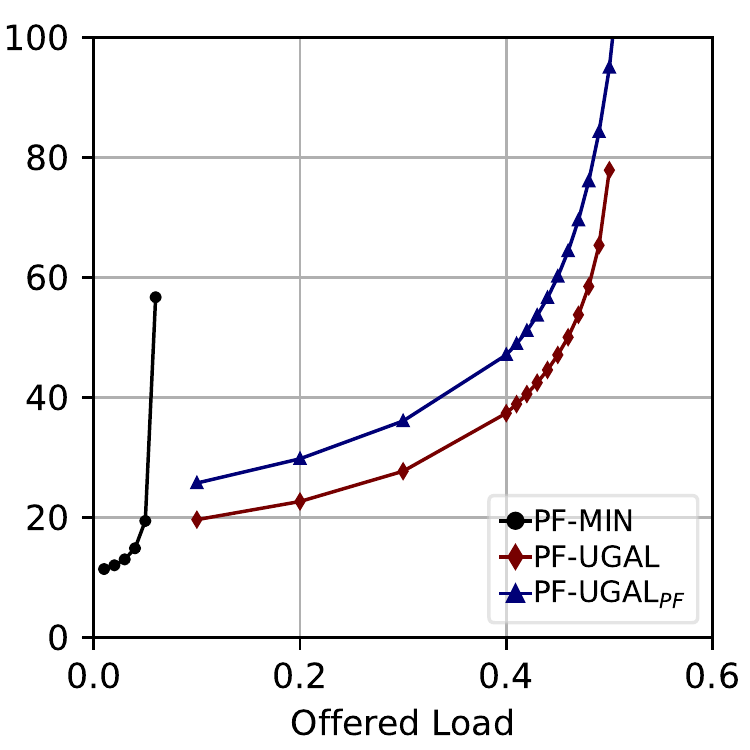}
%
% \vspaceSQ{-2.0em}
%
\caption{Perm1Hop Permutation Traffic}
%\caption{Round robin replication.}
\label{fig:adv1}
\end{subfigure}
%
%\vspaceSQ{-1em}
\caption{Performance of adaptive routing in PolarFly under Permutation Traffic}
\label{fig:performance-analysis-permutation}
%\vspaceSQ{-1em}
\end{figure}

For the Uniform traffic, 
%we observe that 
the 
adaptive routing based on Compact Valiant~(\ugalpf)
exhibits latency and saturation throughput comparable to that of min-path routing, while
significantly outperforming other adaptive algorithms and baseline topologies.
Remarkably, the maximum throughput sustained by \fly for uniform traffic is comparable to the fat tree, with considerable reduction in \looseness=-1latency. 

% For Permutation traffic, minimal routing in \fly can only achieve $\frac{1}{p}$ peak throughput, because all $p$ endpoints of the source 
% access the same path to the destination. However, with adaptive algorithms UGAL and \ugalpf, \fly is able to sustain up to $50\%$ of the full injection bandwidth.
For Random and Tornado Permutation traffic patterns, \fly is able to sustain up to $50\%$ of the full injection bandwidth, using adaptive algorithms UGAL and \ugalpf.
The performance of these patterns is similar to Perm2Hop 
traffic shown in 
figure~\ref{fig:adv2}, as min-path for most packets is $2$-hops long.
The total buffer space
in the min-path is higher compared to Perm1Hop~(Figure~\ref{fig:adv1}, all $1$-hop min-paths), rendering \ugalpf{} slower to adapt to congestion.
Hence, \ugalpf{} has considerably
higher latency than UGAL for Tornado, Random and Perm2Hop permutation patterns. 
% This is because min-path for most packets
% in these patterns is 2 hops long, increasing the total buffer space
% in the min-path, and rendering \ugalpf{} slower to adapt to congestion.
UGAL has relatively higher entropy in terms of path selection, resulting
in smaller queues inside routers and lower latency. 
Figure~\ref{fig:performance-analysis-permutation} also
provides detailed insight into adversarial nature of permutation patterns for min-path routing in \flyN{}. It 
can only withstand 5\% of the full injection bandwidth, compared
to almost 50\% with adaptive routing.
% For both patterns, adaptive routing withstands almost up to 50\% of the full injection bandwidth.

\if 0
This is further confirmed with more detailed analyses into the adversarial traffic patterns on PF (Figures~\ref{fig:adv1}
and \ref{fig:adv2}). Here, the used
adaptive routing is able to withstand almost up to 50\% of the full injection bandwidth.
\fi

\subsection{Discussion of Results - PolarFly Size}

Next, we investigate the impact of PolarFly size
on the performance by (a)~varying $q$ (radix), and (b)~expanding
network incrementally using the methods described in Section~\ref{sec:extensible}.
We analyze balanced variants of PolarFly topology under uniform traffic i.e. the ratio of number of endpoints to network radix is maintained to $1:2$ in all experiments.
% We also investigate the impact of proposed expansion schemes on network performance, as shown in Figure~\ref{fig:performance-analysis-expansion}. 
\begin{figure}[ht]
\centering
%\vspaceSQ{-2em}
\begin{subfigure}[t]{0.25 \textwidth}
\centering
\includegraphics[width=1.0\columnwidth]{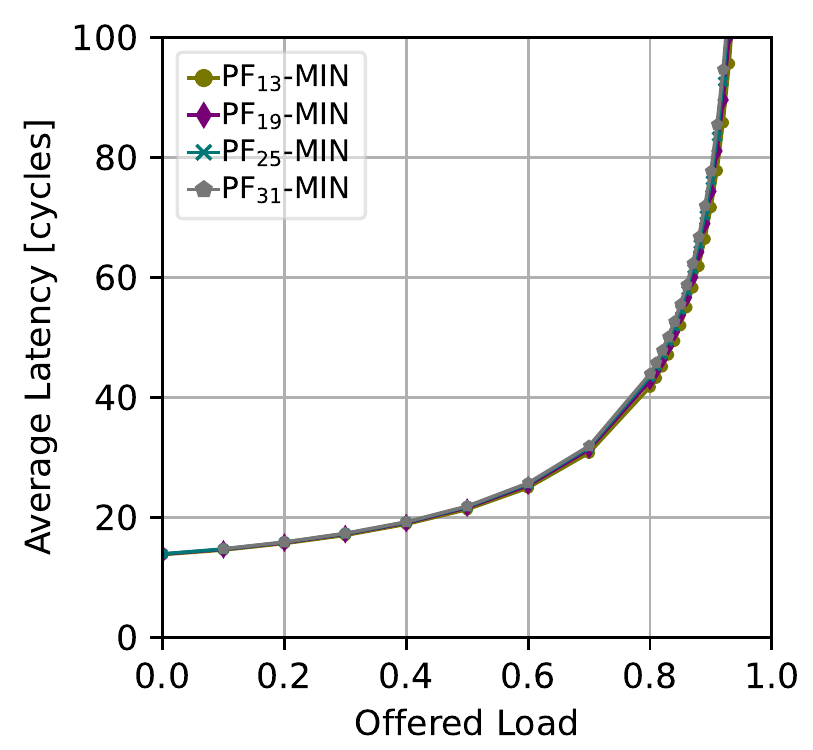}
%
%\vspaceSQ{-2.0em}
\caption{Min-path Routing}
\label{fig:size-min}
\end{subfigure}
%\quad
\begin{subfigure}[t]{0.23 \textwidth}
\centering
\includegraphics[width=1.0\columnwidth]{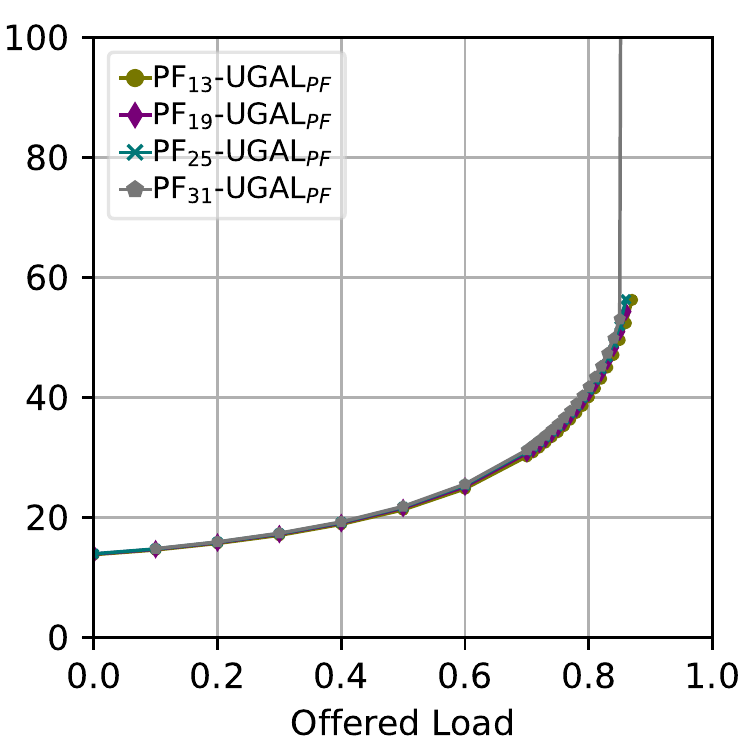}
%
% \vspaceSQ{-2.0em}
%
\caption{Adaptive Routing}
%\caption{Round robin replication.}
\label{fig:size-ugal}
\end{subfigure}
%
%\vspaceSQ{-1em}
\caption{Performance Comparison of Polarfly of different sizes under uniform traffic}
\label{fig:performance-analysis-size}
%\vspaceSQ{-1em}
\end{figure}

Figure~\ref{fig:performance-analysis-size} shows the latency
and throughput for PolarFly for $q=13, 19, 25$ and $31$, which corresponds to $183, 381, 651$ and $993$ routers, respectively.
The labels follow the scheme
\textit{$<$network$>$\textsubscript{$q$}-$<$routing$>$}.
All PolarFlies provide similar saturation bandwidth and latency for both min-path and \ugalpf\  routing. This shows that
PolarFly performance is stable with respect to the size
of the network.
\begin{figure}[ht]
\centering
%\vspaceSQ{-2em}
\begin{subfigure}[t]{0.25 \textwidth}
\centering
\includegraphics[width=1.0\columnwidth]{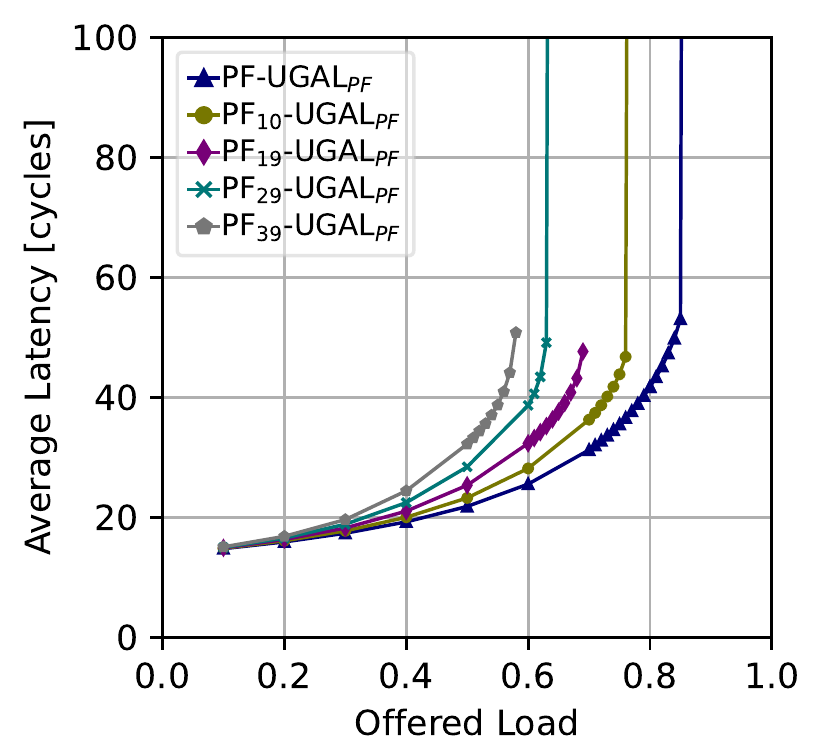}
%
%\vspaceSQ{-2.0em}
\caption{Replicating the quadrics cluster.}
\label{fig:uniform2}
\end{subfigure}
%\quad
\begin{subfigure}[t]{0.23 \textwidth}
\centering
\includegraphics[width=1.0\columnwidth]{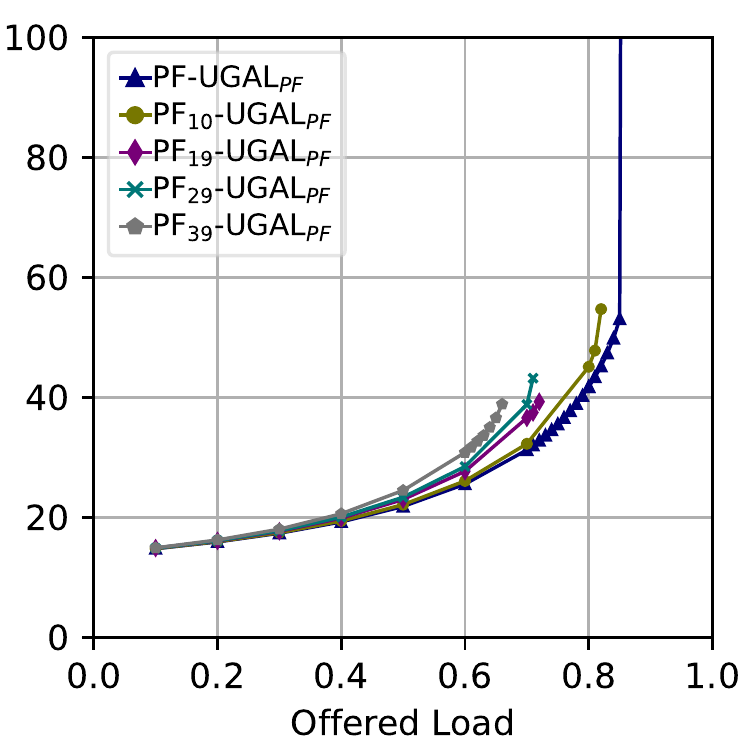}
%
% \vspaceSQ{-2.0em}
%
\caption{Replicating non-quadrics clusters.}
%\caption{Round robin replication.}
\label{fig:bitrev4}
\end{subfigure}
%
%\vspaceSQ{-1em}
\caption{Performance analysis of incrementally expanded \flyN{}.}
\label{fig:performance-analysis-expansion}
%\vspaceSQ{-1em}
\end{figure}

Figure~\ref{fig:performance-analysis-expansion} shows the latency and throughput of Polarfly incrementally expanded by
adding $3$, $6$, $9$ and 
$12$ clusters by quadric or non-quadric cluster replication, which corresponds to approximately $10\%$, $19\%$, 
$29\%$ and $39\%$ increase in network size, respectively.
The labels of incrementally expanded networks follow the scheme
\textit{$<$network$>$\textsubscript{$<$size increase in
percent$>$}-$<$routing$>$}.
% Specifically, we analyze uniform traffic on \fly after adding $3$, $6$, $9$ and 
% $12$ clusters through each method, which corresponds to approximately $10\%$, $19\%$, 
% $29\%$ and $39\%$ increase in system size, respectively. The number of endpoints per router are correspondingly increased to maintain a $1:2$ ratio with maximum network radix.
We observe that $39\%$ incremental growth in size using quadric replication results in a $31\%$ drop
in throughput.
Comparatively, non-quadric replication 
creates only $19\%$ drop in throughput for
an equivalent increase in network size,
thanks to its near-uniform degree distribution.
Moreover, after the first replication, subsequent non-quadric replications have
little impact on maximum throughput -- $73\%$ of peak bandwidth 
with $10\%$ incremental growth vs $67\%$ of peak
bandwidth with $39\%$ incremental growth.

\section{Structural Analysis}
We compare bisection bandwidth and link failure resilience of PolarFly, against the topologies given in Table~\mbox{\ref{tab:performance-config}}.

% \marginpar{\vspace{3em}\colorbox{yellow}{\textbf{R-4}}}

\subsection{Bisection bandwidth}
\begin{figure}[!ht]
\begin{centering}
\includegraphics[width=\columnwidth]{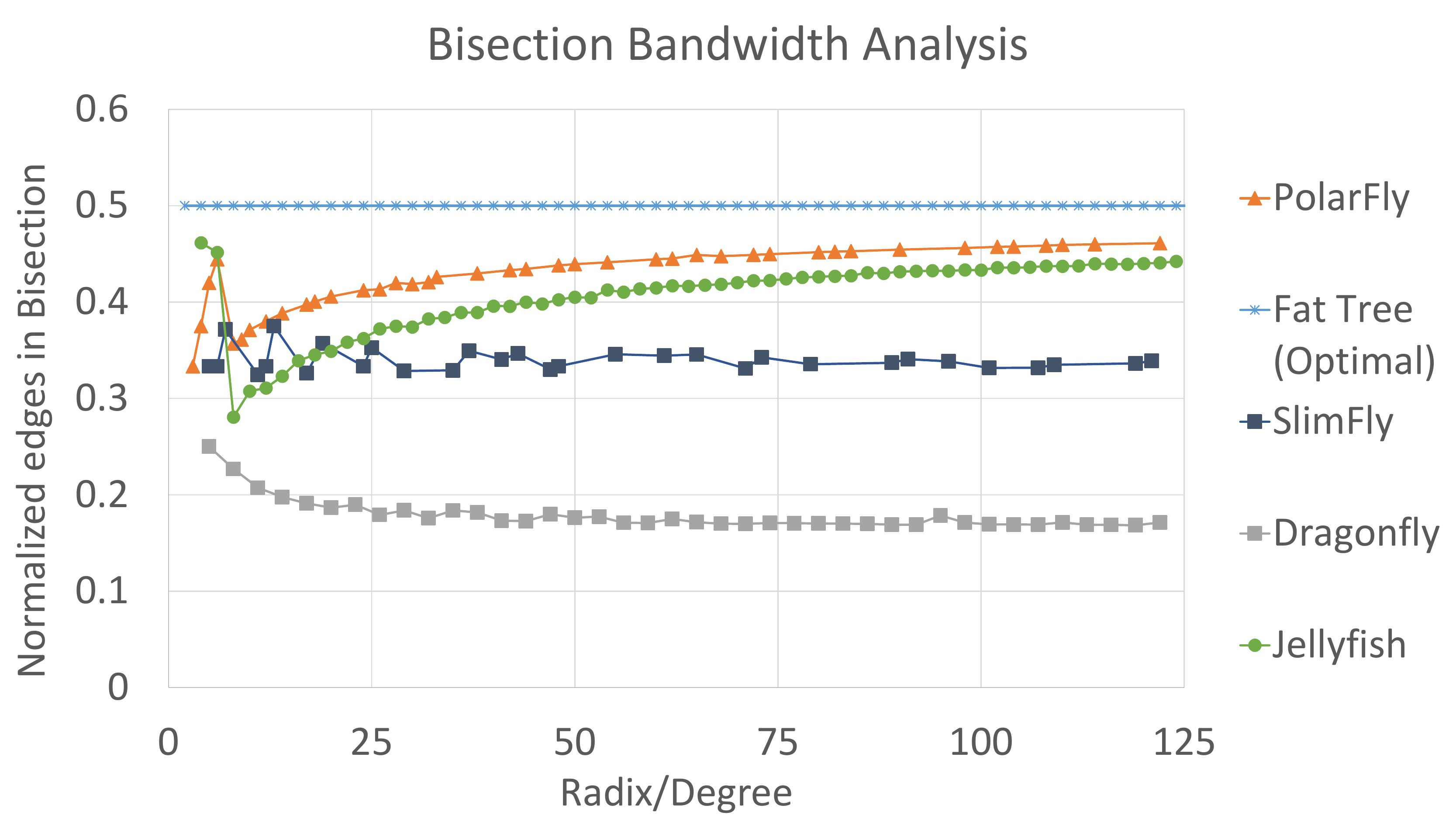}
\caption{Bisection bandwidth of different topologies shown by the number of links in the cut normalized to total links in the topology.}
\label{fig:bisection}
\end{centering}
\end{figure}

Figure~\ref{fig:bisection} shows the bisection bandwidth
of compared topologies in terms of the fraction of 
edges in the bisection cut-set computed
by METIS~\cite{metis}. Fat Trees provide optimal bisection bandwidth with
$50$\% edges lying in the cut-set.
PolarFly closely approximates the optimal 
ratio, reaching it asymptotically. For 
network radix $\geq 18$, PolarFly has more than 
$40\%$ links crossing the bisection, even 
surpassing random expander networks such as Jellyfish\cite{Singla:2012:JND:2228298.2228322}. 
This is not suprising since PolarFly topology expands 
extremely well, enforcing an almost Moore Bound spanning tree view from each vertex, whereas
Jellyfish relies on random distribution of 
links and only achieves $50\%$ of links in expectation for a random bisection.
PolarFly has significantly higher bisection
bandwidth compared to deterministic topologies SlimFly and Dragonfly,
that have only $33\%$ and $17\%$ 
links in bisection.
\begin{figure*}[t]
\centering
%\vspaceSQ{-2em}
\begin{subfigure}[t]{0.24 \textwidth}
\centering
\includegraphics[width=1.0\columnwidth]{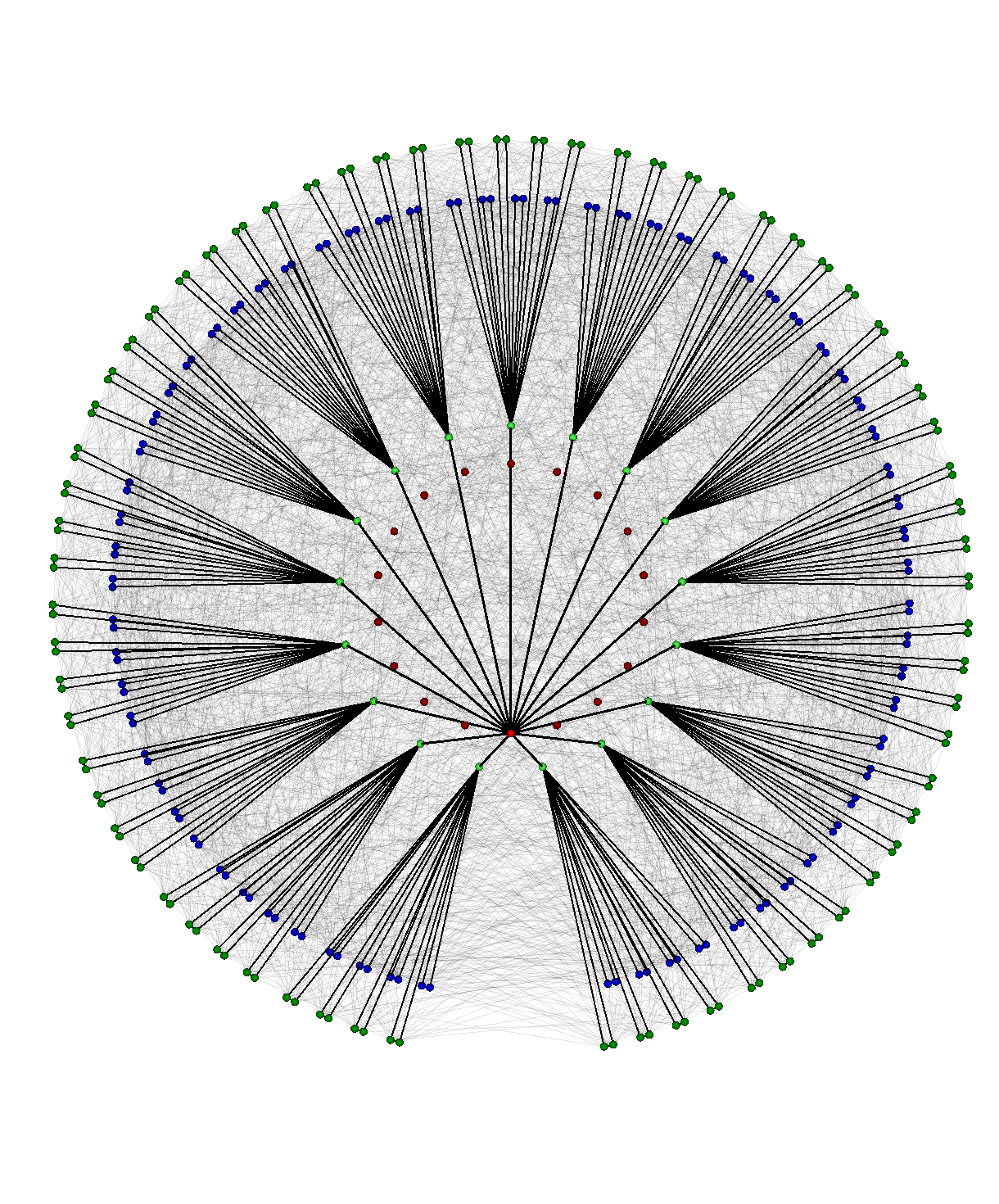}
%
%\vspaceSQ{-2.0em}
\caption{Overhead view of $ER_{17}$.}
\label{fig:overhead_17}
\end{subfigure}
%\quad
\begin{subfigure}[t]{0.24 \textwidth}
\centering
\includegraphics[width=1.0\columnwidth]{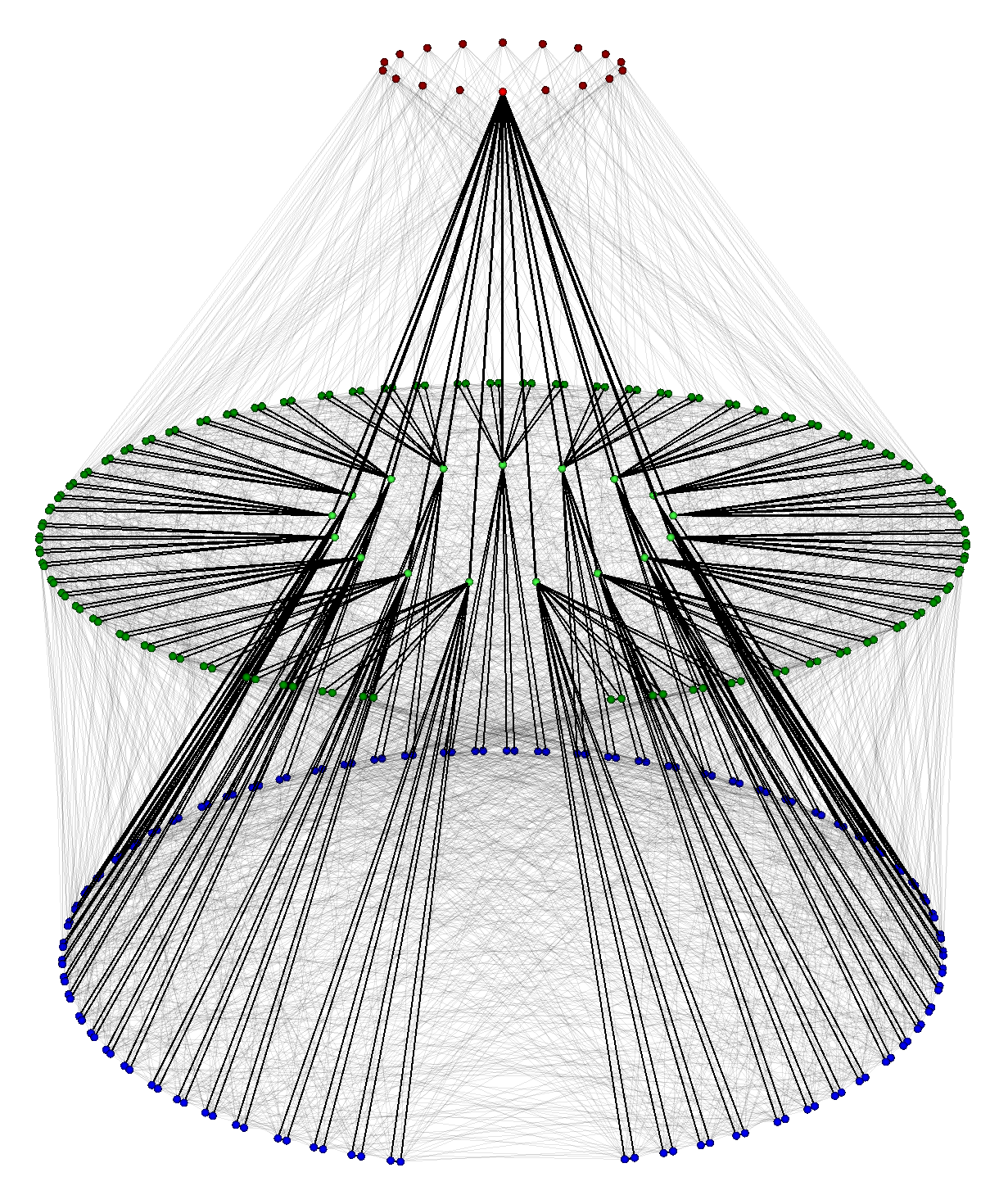}
%
%\vspaceSQ{-2.0em}
\caption{Side view of $ER_{17}$.}
\label{fig:side_17}
\end{subfigure}
\begin{subfigure}[t]{0.24 \textwidth}
\centering
\includegraphics[width=1.0\columnwidth]{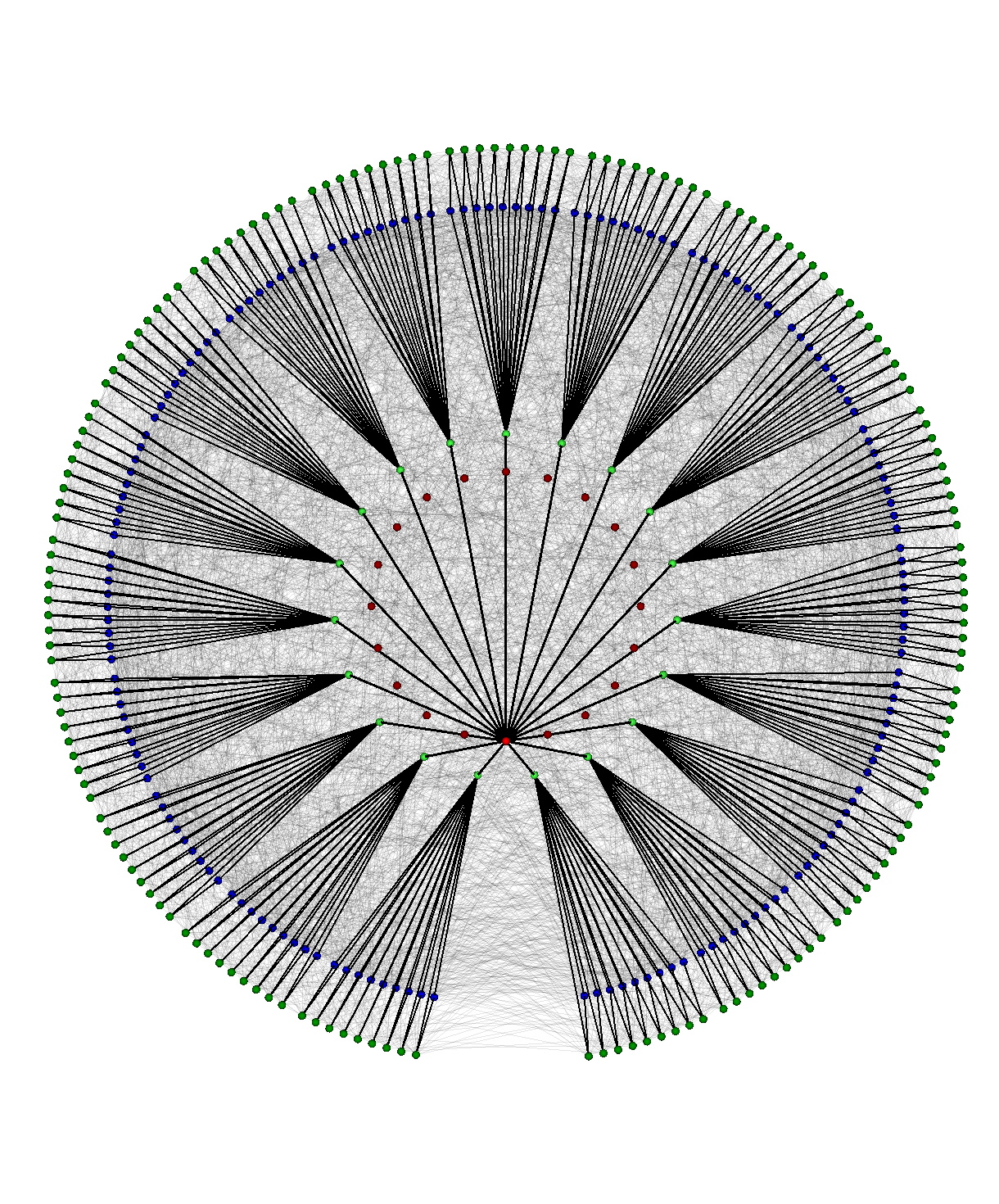}
%
%\vspaceSQ{-2.0em}
\caption{Overhead view of $ER_{19}$.}
\label{fig:overhead_19}
\end{subfigure}
\begin{subfigure}[t]{0.24 \textwidth}
\centering
\includegraphics[width=1.0\columnwidth]{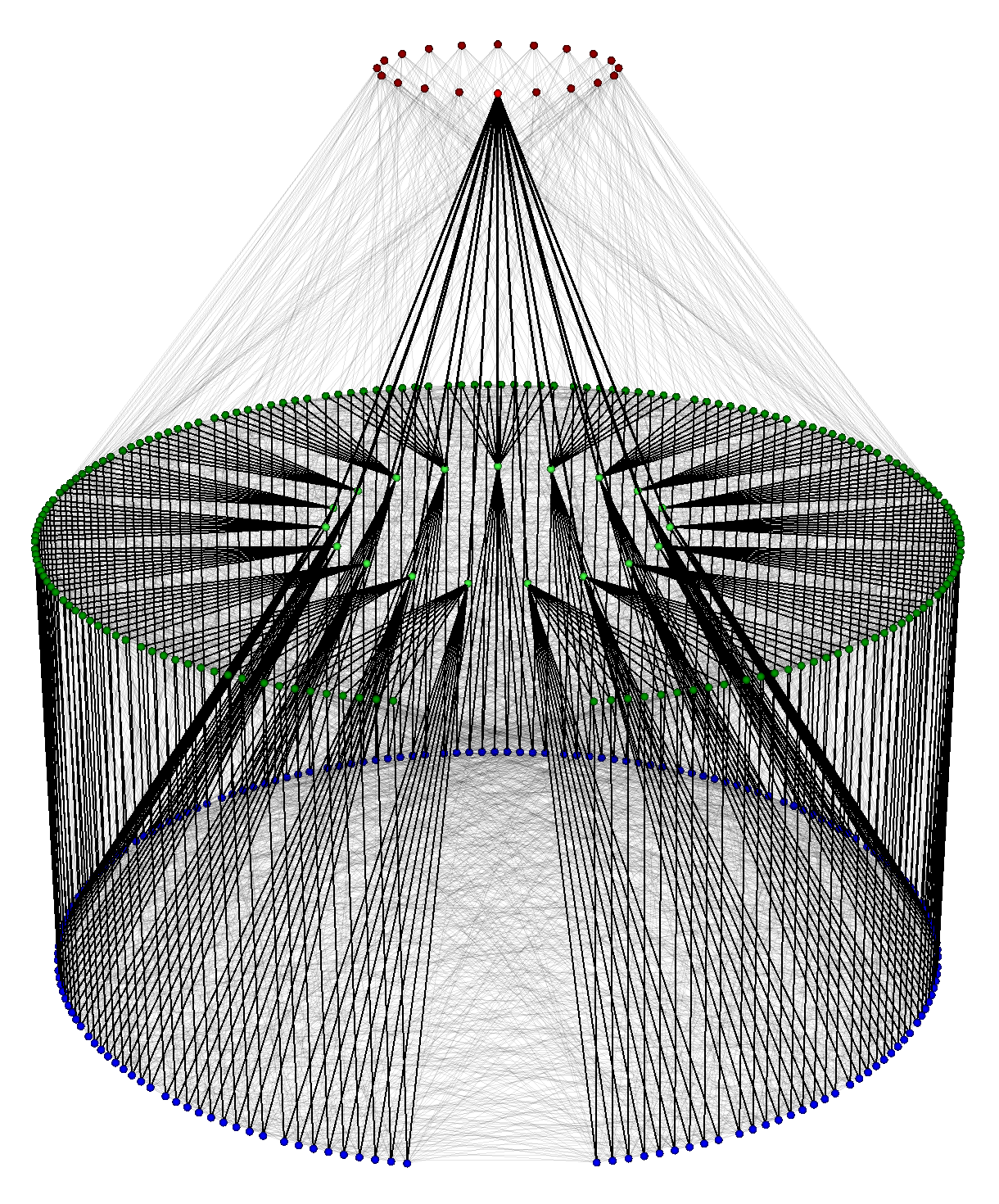}
%
%\vspaceSQ{-2.0em}
\caption{Side view of $ER_{19}$.}
\label{fig:side_19}
\end{subfigure}
%
%\vspaceSQ{-1em}
\caption{Graphs for $ER_{17}$ and $ER_{19}$ are compared. The edges from the starting quadric to the clusters  and the cluster edges are rendered in black. All other edges are rendered in light grey. Quadrics are in red, the centers are in light green, $V_1$ vertices are in green in the top layer, and $V_2$ vertices are in blue in the bottom layer. The triangle fan-outs internal to the clusters may be compared for $q =17$ and $q=19$: $17 \equiv  1 \mod 4$, and the triangle pairing of $V_1$ vertices with each other, and $V_2$ vertices with each other may be seen in subfigures~\ref{fig:overhead_17} and \ref{fig:side_17}, with no vertical edges within a cluster joining the $V_1$ upper layer with the $V_2$ lower layer. Likewise, $19 \equiv 3 \mod 4$, so the triangle pairing of $V_1$ vertices with $V_2$ vertices are seen in subfigures~\ref{fig:overhead_19} and \ref{fig:side_19}, in this case with vertical edges within a cluster joining the $V_1$ upper layer with the $V_2$ lower layer. }
\label{fig:comparecontrast_17_19}
%\vspaceSQ{-1em}
\end{figure*}

	\subsection{Fault Tolerance and Path Diversity}
	On the topology configurations given in Table~\ref{tab:performance-config}, we simulate $100$ random link failures until network disconnection, 
	and compute the median 
	disconnection ratio.\footnote{Mean and 
	Standard Deviation statistics cannot be used because if 
	any run disconnects at a particular failure ratio, its 
	diameter becomes infinite.} 
	We then randomly select a run with median
	disconnection ratio, and report its variation in
	network diameter and average shortest path length in 
	Figure~\ref{fig:resilience}.
	We also analyze path diversity in
	PolarFly in Table~\ref{tab:pathdiv}
	to better understand its behavior
	under link failures.
\begin{table}[htbp]
\setlength{\tabcolsep}{6.0pt}
\centering
\footnotesize
\resizebox{\linewidth}{!}{%
\begin{tabular}{lccc@{}}
\toprule
\textbf{Path length} & \textbf{Conditions} & \textbf{Number of paths}  %\textbf{Number of $w$ per $v$} 
\\
\midrule
$1$ & $v$, $w$ adjacent  & $1$ \\[-0.75em]
\\\hline 
\\[-0.75em]
$2$ & $v$, $w$ adjacent and one of $v$, $w$ quadric & $0$ \\
 & all other cases & $1$ 
\\[-0.75em]
\\\hline 
\\[-0.75em]
%$3$ &  $v$ quadric, $x$ not quadric & $q-1$ \\
$3$ & $v$, $w$ adjacent & $0$ \\
 &  $v,w$ not adjacent, $x$ not quadric & $q - 1$ \\
 &  $v,w$ not adjacent, $x$ quadric & $q$ 
 \\[-0.75em]
\\\hline 
%\\[-0.75em]
%$4$ & many cases, depending on: & \\
% & $1)$ adjacency of $v$ and $w$, and & $\mathcal{O}(q^2)$ \\
% & $2)$ vertex-type of $v$ and $w$\\
\\[-0.75em]
 $4$ & $v,w$ adjacent and neither of $v$, $w$ quadric &$(q-1)^2$\\ & $v,w$ adjacent and one of $v$, $w$ quadric &$q^2-q$\\ 
 & $v,w$ not adjacent and both of $v$, $w$ quadric &$q^2-q$\\ 
 & $v,w$ not adjacent, $v,w \in V_1$, $x$ not quadric &  $q^2-4$\\ 
 & $v,w$ not adjacent, $v$ quadric, $w \in V_1$ &  $q^2-3$\\ 
 & $v,w$ not adjacent, $v,w \in V_1$, $x$ quadric & $q^2-2$ \\ 
 & $v,w$ not adjacent, $v \in V_1$, $w \in V_2$ & $q^2-2$ \\ 
 & $v,w$ not adjacent, $v$ quadric, $w \in V_2$ &  $q^2-1$\\ 
 & $v,w$ not adjacent, $v \in V_2$, $w \in V_2$ &  $q^2$\\ 
\bottomrule
\end{tabular}}
\caption{Path diversity in $ER_q$ for small lengths: the number of paths of given lengths between arbitrary vertices $v$ and $w$, and the 
conditions under which such paths
exist. The vertex $x$ is the 
unique intermediate vertex 
between $v$ and $w$ (if it exists). There are many different cases for path length $4$; however, all are $\mathcal{O}(q^2)$. They are listed from smallest number of paths to largest.
}
%\vspaceSQ{-1em}
\label{tab:pathdiv}
\end{table}
\begin{figure}[ht]
\begin{centering}
\includegraphics[width=\columnwidth]{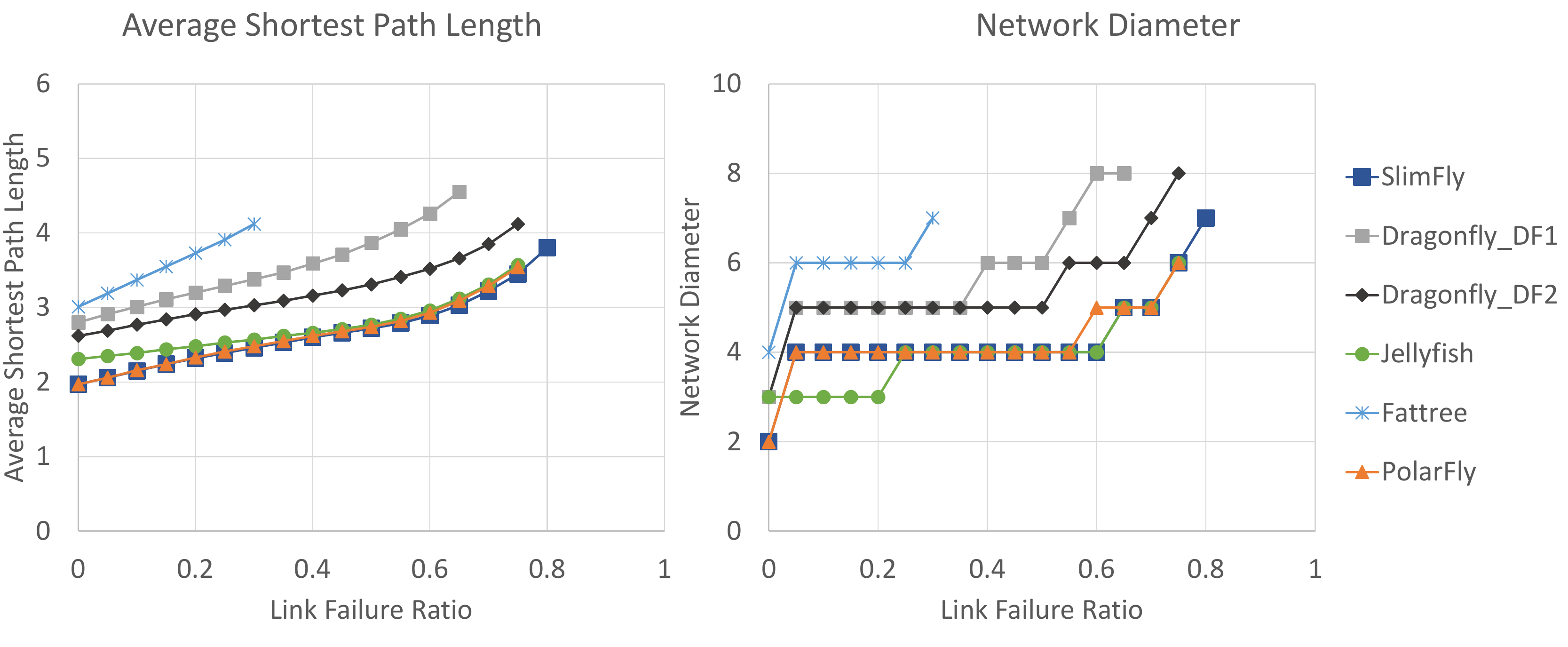}
\caption{Resilience properties of topologies as a function of the fraction of failed links.}
\label{fig:resilience}
\end{centering}
\end{figure}

	Jellyfish, being a random expander, is highly 
	resilient to link failures -- random failures in Jellyfish 
	result in just another random graph.
	PolarFly and SlimFly exhibit
	similar resilience, with higher
	disconnection ratio than 
	both Fat tree and balanced Dragonfly
	DF1.
	They are both expanders and have comparable
	resilience to Jellyfish. 
	However, being diameter-$2$ 
	networks with close to Moore bound scalability, the diameter of PolarFly 
	and SlimFly increases more rapidly 
	than Jellyfish. Compared to PolarFly, SlimFly has slightly more redundancy 
	in minimal paths, resulting in marginally
	higher disconnection ratio, even 
	though it reduces scalability.

	If a single link fails, the diameter of PolarFly increases to $3$, or $4$ if the link is from a quadric.
	Table~\ref{tab:pathdiv} shows that there are no $2$- or $3$-hop paths
	between quadrics and the adjacent vertices, which intuitively explains why PolarFly diameter increases to $4$ with only $5\%$ link
	failure, as in Figure~\ref{fig:resilience}. 
	However, PolarFly has a great deal of path
	diversity for path length 
	$4$, so its diameter
	stays at $4$ even when $55\%$ links fail.

	If a node $x$ fails, PolarFly diameter would increase from $2$ to $3$,
	as the $2$-hop minimal paths between
	neighbors of $x$ would be 
	lost. However, for any neighbor $v$ of $x$, the neighbors of $v$ 
	have $1$-hop or $2$-hop paths to other
	neighbors of $x$, that do not pass 
	through $x$. Hence, despite $x$ failing, $v$ can still reach other 
	nodes within $3$-hops.
% 	The number of paths of length less than or equal to $3$ between two vertices $v$ and $w$ is shown in Table~\ref{tab:pathdiv}.
%There are $q-1$ or $q$ paths of length $3$ that do not include $x$ between $v$ and $w$. 
%Note that both $v$ and $x$ can not be quadric, since quadrics are never connected.
\if 0
\kldelete{Table~\ref{tab:pathdiv} rests on Property~\ref{prop:er}.\ref{prop:path}: edges incident with quadrics do not participate in triangles, and all other edges participate in one triangle.}
I don't think we need
to explain the derivation for 
these path lengths here. The
explanation would become too detailed.
\fi

\if 0

\begin{figure*}[t]
\centering
%\vspaceSQ{-2em}
\begin{subfigure}[t]{0.4 \textwidth}
\centering
\includegraphics[width=1.0\columnwidth]{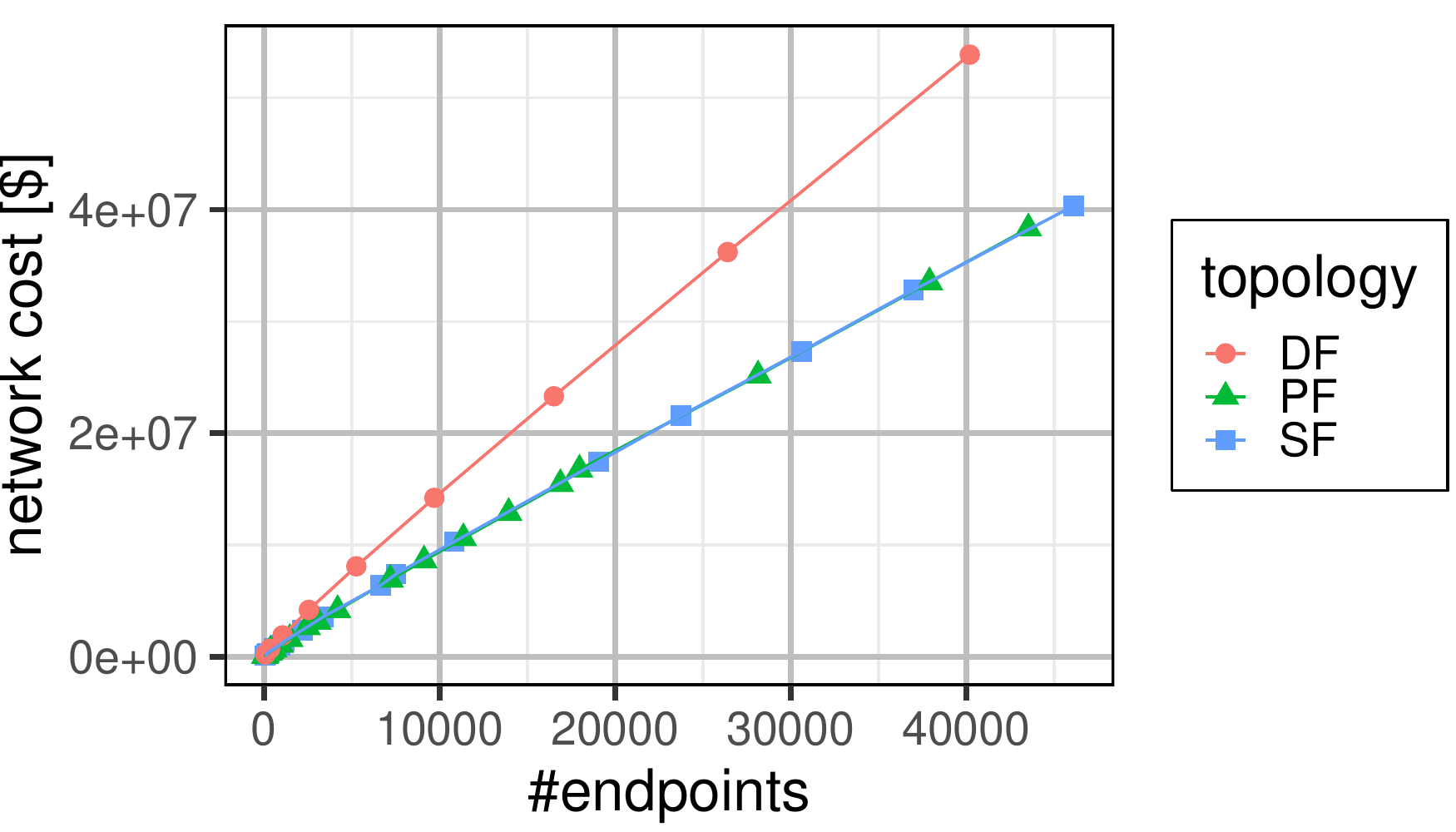}
• Adaptation threshold φth – valiant path is chosen over
min-path only when more than φth fraction of the output
buffer towards min-path is full.
Thus, UGAL-LPF offers a trade-off between adaptability of
UGAL and low hop count of minimal static routing.
VIII. PERFORMANCE ANALYSIS
We now evaluate PolarFly in latency and bandwidth.
A. Methodology and Comparison Targets
We compare PolarFly to Slim Fly [13] (as the most compet-
itive direct diameter-2 network), Dragonfly [21] (as a popular
recent choice when developing interconnects), and 3-level
Fat tree [24] (as the most widespread existing interconnect
baseline). As numerous past works illustrate, networks such
as torus, hypercube, Flattened Butterfly, or Random Networks,
are less competitive in latency and bandwidth [13], [21],
[36]. The following different traffic patterns are simulated to
effectively analyze the network behavior:
1) Uniform traffic in which each source sends equal amount
of data to each destination (representing graph processing,
sparse linear algebra solvers, adaptive mesh refinement
methods [13], [37]).
2) Random permutation traffic where a fixed permutation
traffic pattern is chosen u.a.r. from the set of all permu-
tations. In PolarFly , permutation traffic is adversarial for
minimal routing because there is only one shortest path
between any pair of routers.
3) Finally, two special permutation traffic patterns
Perm1Hop and Perm2Hop are chosen to stress UGAL-
LPF. In Perm1Hop, every router communicates with
a 1-hop neighbor. Thus, the min-path length is 1-hop
and valiant path length in UGAL-LPF is 4-hops. In
Perm2Hop, every router communicates with a 2-hop
neighbor. Thus, the min-path length is 2-hop and valiant
path length in UGAL-LPF is 3-hops. [Maciej: Kartik,
please provide some details][Kartik: Lets not call them
adversarial].
We use the established BookSim simulator [38] to conduct
cycle-accurate simulations. Packets of size 4 flits each are
injected with a Bernoulli process. We use input-queued routers
with 128 flit buffers per port and 4 virtual channels. For
Dragonfly, we use the built-in UGAL routing in Booksim. To
mimic co-packaged node and router modules, under permuta-
tion traffic, we restrict all terminals of a router to send packets
to all terminals of only one other router. [Maciej: Kartik, please
update with whatever needed] In all simulations, we use a
warm-up phase where no measurements are taken, to ensure
that the simulator first reaches a steady-state.
B. Discussion of Results
Figure 11 compares PolarFly (PF) against Slim Fly (SF),
DragonFly (DF) and fat tree (FT). The configuration of these
topologies can be found in table V. The labels follow the
scheme <network>-<routing>.
In general, we observe that PolarFly offers superior per-
formance and it outperforms all the competitive topologies
Network Parameters Number of Routers Network Radix
PolarFly (PF) q=31, p=16 993 32
Slim Fly (SF) q = 23, p=18 1058 35
DragonFly (DF) a=12, h=6, p=6 876 17
Fat Tree (FT) n=3, k=18 972 36
TABLE V: Configuration of topologies used for simulations.
(Figures 11a and 11b). Its advantages over DF and FT3 are a
direct consequence of having low diameter. Its benefits over
SF are due to the careful design of our routing protocols that
take advantage of the PolarFly structure to ensure that the
routing decisions are as good as possible.
For the uniform traffic pattern, the adaptive routing based
on Compact Valiant (UGAL-LPF) provides comparable latency
and saturation bandwidth to minimal routing, significantly
outperforming other adaptive algorithms, and Slim Fly and
Dragonfly topologies. Remarkably, under the minimal routing,
PolarFly achieves peak throughput comparable to the Fat-tree
topology with considerable reduction in latency.
Minimal routing in PolarFly can only achieve 1
p peak
throughput for permutation routing because all p terminals
of the source router access the same path to the destination
router. However, with adaptive routing algorithms UGAL-L
and UGAL-LPF, PolarFly is able to sustain approximately 50%
of the full injection bandwidth. Performance of PolarFly under
these algorithms is comparable to the oracle UGAL-G routing
and slightly better than Slim Fly with adaptive routing. For
Perm1Hop and Random permutation traffic, UGAL-LPFhas
considerably higher latency than UGAL-L. This is because
min-path for most packets in these patterns is 2-hops long.
This increases the total buffer space in the min-path, render-
ing UGAL-LPFslower to adapt to congestion. UGAL-L has
relatively higher entropy in terms of path selection, resulting
in smaller queues inside routers and lower latency.
We also investigate the impact of both proposed ex-
pansion schemes on PF performance, as shown in Fig-
ure 12. The labels of the expanded networks follow the
scheme <network>-<replication method>-<size increase in
percent>-<routing>. Specifically, we analyze the networks
after adding 3, 6, 9 and 12 clusters through each method,
which corresponds to 10%, 20%, 30% and 39% increase in
system size, respectively.
Here, the results show that adding new clusters does not
entail performance penalties. [Maciej: Kartik, make sure to
write how many clusters are added] [Kartik: Not really, there
is performance penalty.]
IX. COST ANALYSIS
We now analyze the costs of different topologies under the
co-packaged setting. For an effective comparison, we use
bandwidth balanced configurations of different topologies with
approximately ≈1K nodes and isomorphic injection bandwidth
under uniform and permutation traffic patterns. In other words,
the number of optical IO channels on the co-packaged chips
are tuned such that the networks support identical injection
bandwidth under a given traffic pattern, which is either random
10
0.0 0.2 0.4 0.6 0.8 1.0
Offered Load
0
20
40
60
80
100
Average Latency [cycles] PF-MIN
PF-UGAL-LPF
PF-UGAL-L
PF-UGAL-G
SF-MIN
SF-UGAL-L
DF-MIN
DF-UGAL
FT-NCA(a) Uniform traffic.0.0 0.2 0.4 0.6 0.8 1.0
Offered Load
0
20
40
60
80
100
Average Latency [cycles]
PF-MIN
PF-UGAL-LPF
PF-UGAL-L
PF-UGAL-G
SF-MIN
SF-UGAL-L
DF-MIN
DF-UGAL (b) Permutation traffic.0.0 0.2 0.4 0.6 0.8 1.0
Offered Load
0
20
40
60
80
100
Average Latency [cycles]
PF-MIN
PF-UGAL-LPF
PF-UGAL-L
PF-UGAL-G (c) Perm1Hop Permutation Traffic.0.0 0.2 0.4 0.6 0.8 1.0
Offered Load
0
20
40
60
80
100
Average Latency [cycles]
PF-MIN
PF-UGAL-LPF
PF-UGAL-L
PF-UGAL-G (d) Perm2Hop Permutation Traffic.
Fig. 11: Performance analysis (comparison with other topologies).[Kartik: No need to add y-axis title on every plot, just need it on the left-most plot. We can also prune the x-axis,
show only the range of injection bandwidths for which a point exists in the graph.]0.0 0.2 0.4 0.6 0.8 1.0
Offered Load
0
20
40
60
80
100
Average Latency [cycles] PF-UGAL-LPF
PF3-UGAL-LPF
PF6-UGAL-LPF
PF9-UGAL-LPF
PF12-UGAL-LPF
(a) Replicating the quadrics cluster.0.0 0.2 0.4 0.6 0.8 1.0
Offered Load
0
20
40
60
80
100
Average Latency [cycles] PF-UGAL-LPF
PF3-UGAL-LPF
PF6-UGAL-LPF
PF9-UGAL-LPF
PF12-UGAL-LPF (b) Round robin replication.
Fig. 12: Performance analysis (performance of expanded PolarFly).
uniform or a random permutation pattern.
The per node cost of a network with co-packaged optics is
driven by the number of IOs per node. Optical cables
are relatively inexpensive and have negligible impact on the
overall system cost. Moroever, the number of optical IOs and
cost of an optical IO is constant irrespective of the network
size. [Kartik: Is this true?] Therefore, it is sufficient to analyze
the per node costs of different topologies for a given network
size, which in our case is set to approximately 1000 nodes.
In case of indirect topologies such as the fat tree, the optical
IOs on indirect switches are also included in the cost of the
network. We assume that each switch can support 4× optical
IOs compared to a co-packaged compute node [Kartik: need
some reference to support this assumption.].
X. RELATED WORK
Network topologies considered in this paper are described in
detail in Section II, more details are also provided in a recent
survey [39]. Early works into novel topologies with diameter
lower than that of 3-stage Fat trees [24] include Flattened
Butterfly [36] and its generalization called HyperX [22], and
the Dragonfly topology [21], [40]. These designs mainly aimed
at facilitating the physical layout of networks. Lowering the
diameter of a network in order to reduce cost and power
consumption while maintaining high performance have been
introduced in the Slim Fly interconnect [13]. [Kartik: Can1 1.21
2.25 2.42
0
1
2
3
4
5
PolarFly Slim Fly Dragonfly Fat-tree
Iso Bandwidth : Permutation
1 1.24
1.81
4.70
0
1
2
3
4
5
PolarFly Slim Fly Dragonfly Fat-tree
Normalized Network Cost
Iso Bandwidth : Uniform
Fig. 13: Cost per node under different topologies for a system with ≈ 1K nodes. [Kartik:
Left figure to be updated.].
we simplify the sentence - SlimFly addressed the problem
of optimizing scalability in diameter-2 networks, resulting in
a low-cost and low-power high performance topology.] Since
then, several other designs followed, including Xpander [41],
Megafly [42], Bundlefly [] or Galaxyfly [43]. However, they do
not focus on diameter-2 and thus none of them improves upon
key properties such as [•] latency, cost, or power consumption. Kartik
deleted
here
PolarFly extends this line of work by providing a design
that is asymptotically optimal with respect to the Moore
Bound, allowing close to optimal scalability. It simultaneously
offers superior cost, power consumption, and performance.
Moreover, it specifically targets the recent developments into
copackaged optics, something not addressed so far in the
literature for scalable network design.
Routing in low-diameter networks has also been a subject of
research, especially in recent years. For example, FatPaths [44]
is a routing architecture that enables adaptive multipathing
in data-center and HPC clusters based on low-diameter net-
works, focusing on Slim Fly.[Kartik: is it just for Slimfly.]
However, none of these works is particularly well suited for
the unique structure of PolarFly in which some routers form
intra-connected clusters and a single cluster of quadric routers
that form an independent set. We address this with a novel
adaptive UGAL routing protocol suited for PolarFly.
[Maciej: Need some parts of copackaged optics]
11
XI. CONCLUSION
12

%\vspaceSQ{-2.0em}
\caption{Total cost of networks.}
\label{fig:cost}
\end{subfigure}
%\quad
\begin{subfigure}[t]{0.4 \textwidth}
\centering
\includegraphics[width=1.0\columnwidth]{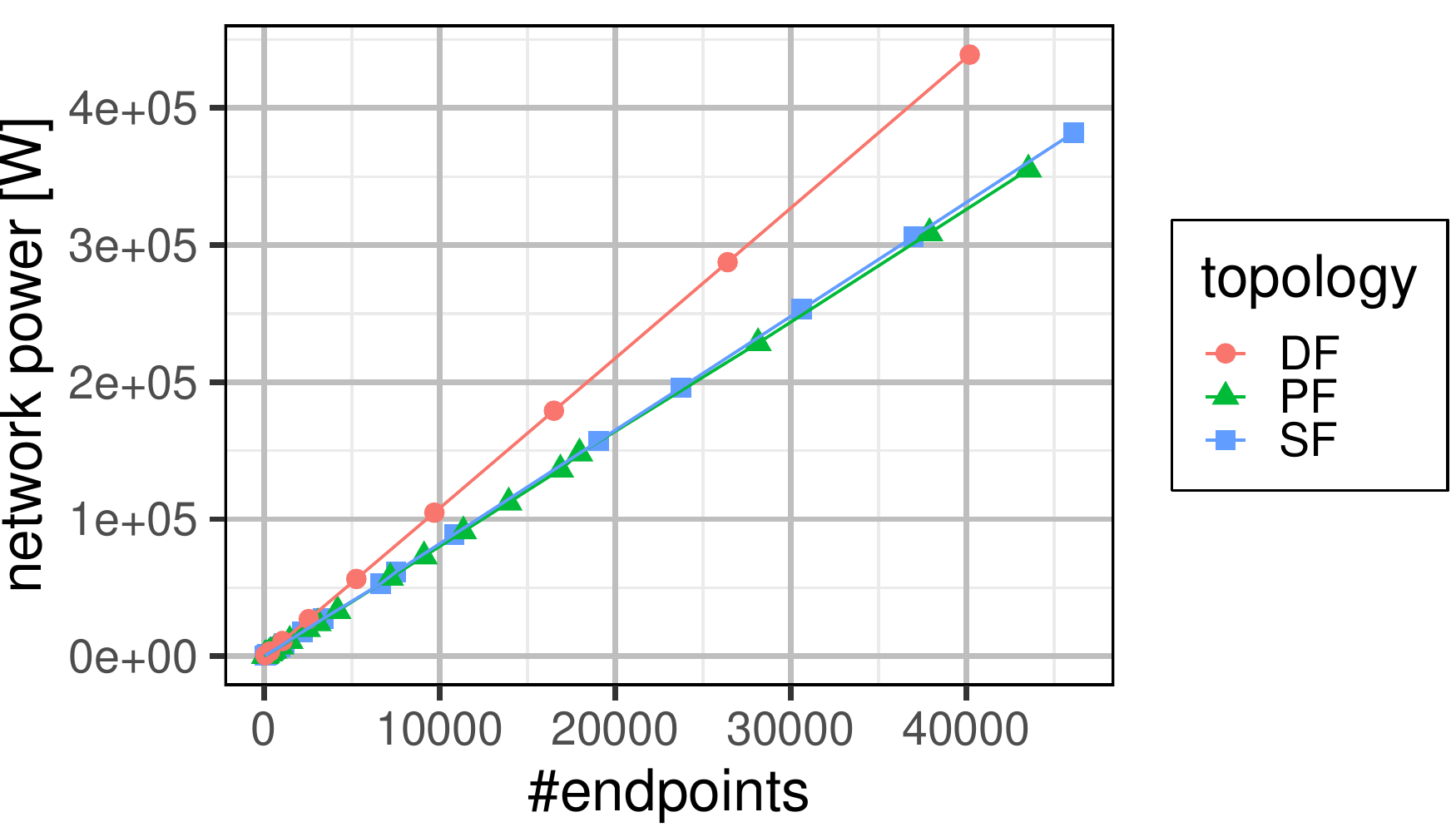}
%
%\vspaceSQ{-2.0em}
\caption{Total power consumption of networks.}
\label{fig:power}
\end{subfigure}
%
%\vspaceSQ{-1em}
\caption{Cost and power analysis.}
\label{fig:cost-power-analysis}
%\vspaceSQ{-1em}
\end{figure*}

\fi

\section{Cost Analysis}
We now analyze the cost of the network topologies under iso-injection bandwidth constraints. We fill focus on a specific case which is reflective of the latest technological developments: co-packaged Optical IO (OIO)~\cite{wade2020teraphy}. The primary cost indicator is the total number of optical IO ports: each port requires an OIO module, a laser, a connector and cables. Technological constraints limit to the number of OIO modules that can be co-packaged in a die due to shoreline limitations: the state of the art is 4 to 6 OIO modules per die, with 8 links per module. We consider configurations with approximately 1,024 nodes, with each topology having the same injection bandwidth. Given that not all the constructions have exactly that number of nodes, we normalize the number of links to a network configuration with 1,024 nodes. In addition, we also consider the achievable injection performance and we normalize the achievable performance, under two distinct scenarios: uniform and permutation traffic. While most networks reach comparable saturation points with uniform traffic, typically around 90$\%$, fat trees are almost insensitive to the type of permutation while direct topologies must resort to some type of misrouting, bringing their saturation points down to approximately 50$\%$. Both Polarfly and Slim Fly use 4 OIO modules with 32 links per node, while Dragonfly 6 OIO modules with 48 links. Fat trees use switches with 4 OIO modules and 32 links, and each of the 1,024 nodes has 2 OIOs with 16 injection links. Figure~\ref{fig:cost} shows the relative costs to PolarFly under the two traffic patterns. Slim Fly has a slight cost increase of about 20$\%$, reflective of the lower fraction of Moore's bound, while the Dragonfly is a diameter 3 network, so the ratio injection bandwidth to overall bandwidth is 1:3 vs 1:2 for PolarFly and Slim Fly. Due to packaging limitations, fat tree switches can only connect two input nodes with 16 links each, resulting in a rather deep 10-level construction of 512 switches per level, and 256 switches in the top level. PolarFly compares very favorably to fat trees with a 5.19X cost reduction under uniform traffic and 2.68X under permutation traffic.
\begin{figure}[htbp]
    \centering
    \includegraphics[width=\linewidth]{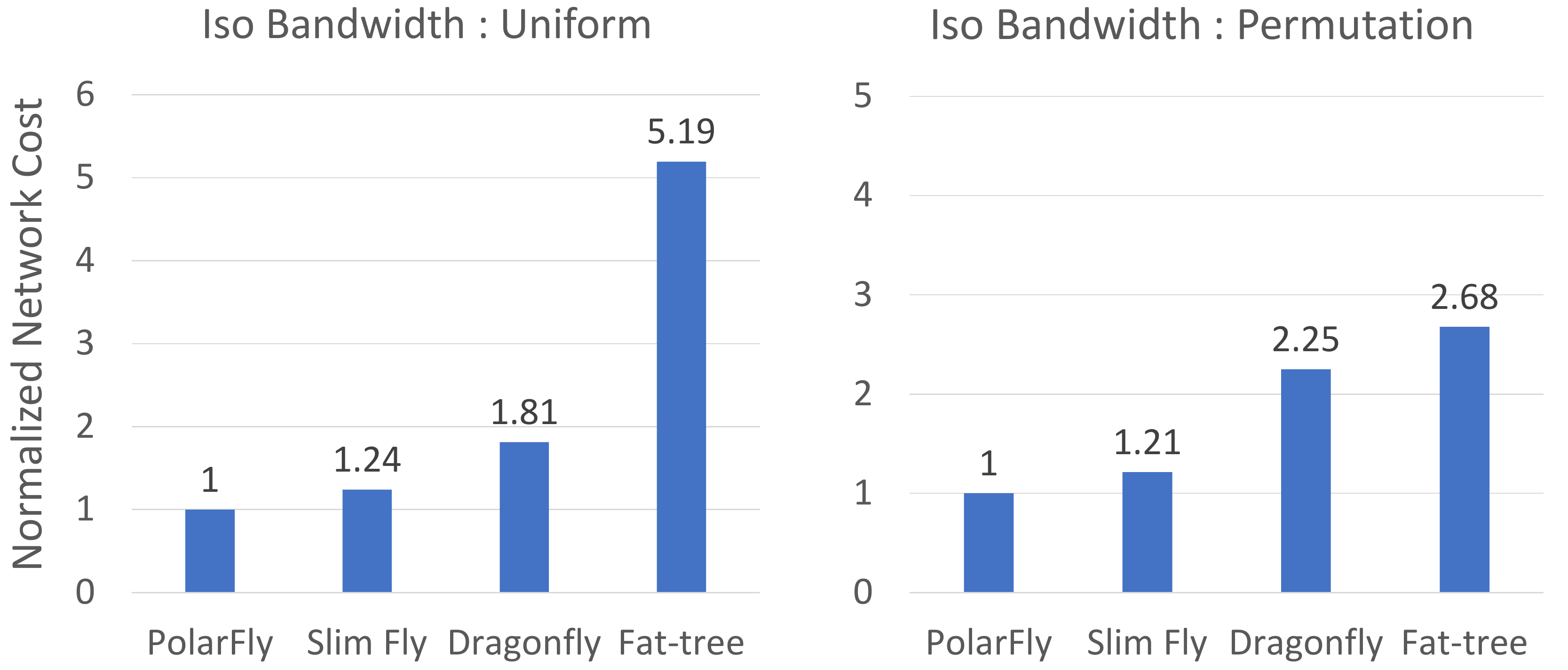}
    \caption{Cost per node under different topologies normalized to 1,024 nodes.}
    \label{fig:cost}
\end{figure}

\section{Related Work}
%\lmcomment{shouldn't this section be up in the Introduction?} \klcomment{I think this is fine here. In systems conferences, I haveseen related work at the end in many papers. For reader's context, table~\ref{tab:criteria} anyway summarizes previous topologies.}
Network topologies considered in this paper are described in detail in Section~\ref{sec:back}, more details are also provided in a recent survey~\cite{besta2020high}.
Early works into novel topologies with diameter lower than that of 3-stage Fat trees~\cite{Leiserson:1985:FUN:4492.4495} include Flattened Butterfly~\cite{dally07} and its generalization called HyperX~\cite{ahn2009hyperx}, and the Dragonfly topology~\cite{dally08, Arimilli:2010:PHI:1901617.1902282}. These designs mainly aimed at facilitating the physical layout of networks.
Lowering the diameter of a network in order to reduce cost and power consumption while maintaining high performance have been introduced in the Slim Fly class of interconnects~\cite{besta2014slim, besta2018slim}. 
Since then, several other designs followed, including Xpander~\cite{valadarsky2016xpander}, Megafly~\cite{flajslik2018megafly}, Bundlefly~\cite{bundlefly_2020} or Galaxyfly~\cite{lei2016galaxyfly}. However, they do not focus on diameter-2 and thus none of them improves upon key properties such as latency, cost, or power consumption.
PolarFly extends this line of work by exploiting a family of graphs that is asymptotically optimal with respect to the Moore Bound, allowing close to optimal scalability. It simultaneously offers superior cost, power consumption, and performance. Moreover, it specifically targets the recent developments into copackaged optics, something not addressed so far in the literature for scalable network design.

Routing in low-diameter networks has also been a subject of research, especially in recent years. For example, the FatPaths~\cite{besta2020fatpaths} routing architecture, enables adaptive multipathing in data-center and HPC clusters in
low-diameter networks, focusing on Slim Fly.
However, none of these works is particularly well suited for the unique structure of PolarFly in which some routers form intra-connected clusters while a single cluster of quadric routers forms an independent set. We address this with a novel adaptive UGAL routing protocol suited for PolarFly. 

The mathematical foundations of the Erd\H os-R\'enyi polarity graphs (ER) are embedded in projective geometry and 
were laid down in mid-$20^{\text{th}}$ century. Singer~\cite{singer1938theorem} first formulated  perfect difference sets -- 
a numerical structure that encodes
the incidence between lines and points of projective planes. Erd\H os and R\'enyi~\cite{erdosrenyi1962} discovered the polarity
quotient graph of this incidence structure, which forms
the basis of PolarFly. Indpendently, Brown~\cite{brown_1966} also constructed 
the same graph using orthogonality relationship of points in 
projective planes.

Building on these foundations, some prior works have proposed  an ER graph topology for high performance interconnection networks. Parhami et al.\cite{parhami2005perfect} use perfect difference sets to 
construct the bipartite network of same degree, diameter and order as the incidence graph described in section~\ref{sec:bipartite}. Brahme et al.\cite{brahme2013symsig} rediscover the ER graphs by defining
a symmetric adjacency equation on the perfect difference sets. They also compare the performance of certain 
communication primitives on this topology and the Clos network. Camarero et al.\cite{camarero} use the polarity-map based construction of ER graphs~(section~\ref{sec:polarity}) and compare the cost of conventional networks with various topologies.

% The SymSig topology~\cite{brahme2013symsig} constructed ER graphs using a using Singer's Perfect Difference sets~\cite{singer1938theorem}.
% The PolarFly topology presented in this paper, was also independently developed by~\cite{camarero} and~\cite{brahme2013symsig}, using ER polarity graphs.
%\kledit{Camarero et al.\cite{camarero} also proposed the use of Erd\H os-R\'enyi (ER) polarity graphs for HPC networks
%and describes the polarity based construction given in section~\ref{sec:formal}, with emphasis on 
%the low-diameter and scalability of ER graphs along with a cost effectiveness for a conventional network design.

% \lmdelete{Note that t}The ER graphs, \lmedit{with }their polarity based construction and diameter 
% properties have been known since mid $20^{\text{th}}$ 
% century~\cite{erdosrenyi1962, brown_1966}, much before~\cite{camarero},~\cite{brahme2013symsig} and this work.

To the best of our knowledge, PolarFly is the first work to comprehensively analyze
networking properties of ER graphs, covering several aspects beyond the prior 
attempts~\cite{camarero, brahme2013symsig}, including a comparison of feasible radixes, performance for various traffic
patterns, bisection width, resilience, and network cost under a co-packaged and iso-bandwidth setting. 
We also develop a novel modular layout,
incremental 
expansion strategies, and routing schemes to exploit 
non-minimal path diversity, all of which utilize new mathematical properties of the ER graphs that are presented for
the first time in this paper.
In this way, our work extends 
the feasibility of ER graphs as network topologies, well beyond
the existing literature. 
% Moreover, for several networking properties, including the layout, path diversity, expandability and routing, we derive and exploit previously unknown mathematical properties of the ER graphs.

\section{Conclusion}

% \kledit{Recent developments in co-packaged photonics and the ever-growing demand for more scalable interconnects pose significant challenges for network design.}
%
% To alleviate this, 
In this paper, we propose PolarFly, a diameter-$2$ network that asymptotically reaches the Moore upper bound on the number of nodes for a given degree and diameter.
\flyN{} improves upon Slim Fly, being more performant, scalable and cost-effective by up to 10\%.
Importantly, \flyN{} is flexible (it offers a wide range of feasible designs using manufacturable routers), modular (its structure can be decomposed into groups), and expandable (one can incrementally increase its size without much performance loss).
%
% We develop adaptive routing protocols for \fly{} that ensure very high bandwidth and low latency, outperforming Slim Fly and Dragonfly.
%
We expect that PolarFly will become the enabler for more energy-efficient interconnects in the next-generation era of co-packaged devices.

{
\section*{Acknowledgment}
The authors would like to thank Guillermo Pineda-Villavicencio for many insightful discussions on the nature of graphs approaching the Moore bound.
}
\bibliographystyle{IEEEtran}
%\bibliographystyle{abbrv}
%\balance
% \clearpage
\bibliography{references}

%\appendix
%\input{appendix.tex}

%\maciej{\ul{IDEAS for the network name:} PolarFly, LegoFly ClusterFly, ModuleFly, UnitFly, RackFly, GroupFly}

\end{document}